\crefname{lemma}{Lemma}{Lemmas}
\crefname{claim}{Claim}{Claims}
\newtheoremstyle{dotless}{}{}{\itshape}{}{\bfseries}{}{ }{}
\theoremstyle{dotless}
\newtheorem{theorem}{Theorem}[section]
\newtheorem{lemma}[theorem]{Lemma}
\newtheorem{proposition}[theorem]{Proposition}
\newtheorem{cor}[theorem]{Corollary}
\newtheorem{remark}[theorem]{Remark}
\newtheorem{observation}[theorem]{Observation}
\newtheoremstyle{dotlessdef}{}{}{\normalfont}{}{\bfseries}{}{ }{}
\theoremstyle{dotlessdef}
\newtheorem{definition}[theorem]{Definition}
\newtheorem{claim}[theorem]{Claim}
\newcommand{\E}{\operatorname*{\mathbb{E}}}
\newcommand{\lpr}[1]{\left(#1\right)}
\newcommand{\lbr}[1]{\left[#1\right]}
\newcommand{\Btwoin}{B_{\mathrm{in}}^{(2)}}
\newcommand{\Btwoout}{B_{\mathrm{out}}^{(2)}}
\newcommand{\Bfourout}{B_{\mathrm{out}}^{(4)}}
\newcommand{\Rin}{R_{1,\mathrm{in}}}
\newcommand{\Rout}{R_{1,\mathrm{out}}}
\newcommand{\Rtwoin}{R_{2,\mathrm{in}}}
\DeclareMathOperator{\wt}{wt}
\newcommand{\poly}{\operatorname{\mathrm{poly}}}
\newcommand{\polylog}{\poly\log}
\newcommand{\lr}{\leftrightharpoons}
\newcommand*\linenomathpatch[1]{%
  \cspreto{#1}{\linenomath}%
  \cspreto{#1*}{\linenomath}%
  \csappto{end#1}{\endlinenomath}%
  \csappto{end#1*}{\endlinenomath}%
}
\newcommand{\eps}{\varepsilon}
\newcommand{\mmid}{\,\Big|\,}
\title{Improved Roundtrip Spanners, Emulators, and \\ Directed Girth Approximation}
\author{Alina Harbuzova\thanks{hadought@mit.edu} \\ MIT \and Ce Jin\thanks{cejin@mit.edu, Partially supported by NSF Grant CCF-2129139.} \\ MIT\and Virginia Vassilevska Williams\thanks{virgi@mit.edu, Supported by NSF Grants CCF-2129139 and CCF-2330048 and BSF Grant 2020356.}\\ MIT \and Zixuan Xu\thanks{zixuanxu@mit.edu}\\ MIT}
\date{}
\begin{document}
	\setcounter{page}{0} \clearpage
	\maketitle
	\thispagestyle{empty}
	\begin{abstract}
Roundtrip spanners are the analog of spanners in directed graphs, where the roundtrip metric is used as a notion of distance. Recent works have shown existential results of roundtrip spanners nearly matching the undirected case, but the time complexity for constructing roundtrip spanners is still widely open.

This paper focuses on developing fast algorithms for roundtrip spanners and related problems. 
		For any $n$-vertex directed graph $G$ with $m$ edges (with non-negative edge weights), our results are as follows:
		\begin{itemize}
			\item \textbf{3-roundtrip spanner faster than APSP:} We give an $\tilde O(m\sqrt{n})$-time algorithm that constructs a roundtrip spanner of stretch $3$ and optimal size $O(n^{3/2})$.
		Previous constructions of roundtrip spanners of the same size either required $\Omega(nm)$ time [Roditty, Thorup, Zwick SODA'02; Cen, Duan, Gu ICALP'20], or had worse stretch $4$ [Chechik and Lifshitz SODA'21].
		\item \textbf{Optimal roundtrip emulator in dense graphs:} For integer $k\ge 3$, we give an $O(kn^2\log n)$-time algorithm that constructs a roundtrip  \emph{emulator} of stretch $(2k-1)$ and size $O(kn^{1+1/k})$, which is optimal for constant $k$ under Erd\H{o}s' girth conjecture.	
			Previous work of [Thorup and Zwick STOC'01] implied a roundtrip emulator of the same size and stretch, but it required $\Omega(nm)$ construction time.
		 Our improved running time is near-optimal for dense graphs.
			\item \textbf{Faster girth approximation in sparse graphs:} We give an $\tilde O(mn^{1/3})$-time algorithm that $4$-approximates the girth of a directed graph. This can be compared with the previous $2$-approximation algorithm in $\tilde O(n^2, m\sqrt{n})$ time by [Chechik and Lifshitz SODA'21]. 
   In sparse graphs, our algorithm achieves better running time at the cost of a larger approximation ratio.
		\end{itemize}
	\end{abstract}
	\newpage
 
\maketitle

\section{Introduction}
A $t$-spanner of a graph is a subgraph that approximates all pairwise distances within a factor of $t$. Spanners are useful in many applications since they can be significantly sparser than the graphs they represent, yet are still a good representation of the shortest paths metric. As many algorithms are much faster on sparse graphs, running such algorithms on a spanner rather than the graph itself can be significantly more efficient, with only a slight loss in approximation quality.

For undirected graphs, the spanner question is very well understood. It is known that for all integers $k\geq 2$, every $n$-vertex undirected (weighted) graph contains a $(2k-1)$-spanner on $O(n^{1+1/k})$ edges \cite{AlthoferDDJS93} and this is optimal under Erd\H{o}s' girth conjecture \cite{TZ01}. 

For directed graphs, however, there can be no non-trivial spanners under the usual shortest paths metric: consider for instance a complete bipartite graph,  with edges directed from one partition to the other. Omitting a single edge $(u,v)$ would cause the distance $d(u,v)$ to go from $1$ to $\infty$.

Nevertheless, one can define a notion of a spanner in directed graphs based on the {\em roundtrip} metric defined by Cowen and Wagner \cite{CowenW04}: $d(u\lr v)=d(u,v)+d(v,u)$. 
A roundtrip $t$-spanner of a directed graph is a subgraph that preserves all pairwise roundtrip distances within a factor of $t$.

Cen, Duan and Gu \cite{duan} showed that basically the same existential results are possible for roundtrip spanners as in undirected graphs: for every integer $k\geq 2$ every $n$-vertex directed graph contains a $(2k-1)$-roundtrip spanner on $O(kn^{1+1/k}\log n)$ edges. For the special case of $k=2$, it was known earlier that every $n$-vertex graph contains a $3$-roundtrip spanner on $O(n\sqrt{n})$ edges \cite{RodittyTZ08}.

The known results on algorithms for constructing spanners and roundtrip spanners differ drastically however. Baswana and Sen \cite{BaswanaS07} presented a randomized linear time algorithm for computing an $O(kn^{1+1/k})$-edge $(2k-1)$-spanner of any $n$-vertex weighted graph (which was later derandomized \cite{RodittyTZ05}).
Meanwhile, the algorithms for constructing roundtrip spanners are much slower.

The first construction of roundtrip spanners was given by Roditty, Thorup and Zwick in \cite{RodittyTZ08}, where they gave the construction of $(2k+\eps)$-roundtrip spanners on $\tilde{O}((k^2/\eps) n^{1+1/k})$ edges for any graph with edge weights bounded by $\poly n$ ($\log nW$ dependence in the size otherwise) in $O(mn)$ time. Later, Zhu and Lam~\cite{DBLP:journals/ipl/ZhuL18} derandomized this construction and improved the sparsity of the spanner to contain $\tilde{O}((k/\eps) n^{1+1/k})$ edges.  Most recently, Chechik and Lifshitz constructed a $4$-roundtrip spanner on $O(n^{3/2})$ edges in $\Tilde{O}(n^2)$ time. All currently known results on constructions of roundtrip spanners are summarized in \cref{tab:spanner-results}.

Notice that for all cases with running time faster than $mn$, the stretch is suboptimal for the used sparsity. This motivates the following:
\begin{center}{\em
Question: What is the best construction time for roundtrip spanners of optimal stretch-sparsity tradeoff?}\end{center}

Alongside the construction of roundtrip spanners, another closely related problem is approximating the girth (i.e. the length of the shortest cycle) in directed graphs. The first nontrivial algorithm is by Pachocki, Roditty, Sidford, Tov, Vassilevska Williams~\cite{PachockiRSTW18}, who gave an $O(k\log n)$ approximation algorithm running in $\Tilde{O}(mn^{1/k})$ time. Further improvements by \cite{chechikliu20,DalirrooyfardW20} followed. Most recently, Chechik and Lifshitz~\cite{CL21} obtained a $2$-approximation in $\Tilde{O}(\min\{n^2, m\sqrt{n}\})$ time, which is optimal for dense graphs. The current known results are summarized in \cref{tab:girth-results}.

While the $2$-approximation result is optimal for dense graphs, and while a $2-\eps$-approximation is (conditionally) impossible in $O((mn)^{1-\delta})$ time \cite{DalirrooyfardW20}, it is unclear what other approximations ($2.5? ~3?$) are possible with faster algorithms. 
This motivates the following question:
\begin{center}{\em
Question: what is the best running time-approximation tradeoff for the girth of directed graphs?}\end{center}

\subsection{Our Results}

Throughout this paper, we consider directed graphs on $n$ vertices and $m$ edges with non-negative edge weights.
We use $\tilde O(\cdot )$ to hide $\polylog(n)$ factors. 
All our algorithms are Las Vegas randomized. 

\begin{restatable}{theorem}{3spanner-main}
\label{thm:3roundtrip-main}
There is a randomized algorithm that computes a 3-roundtrip spanner of $O(n^{3/2})$ size in $\tilde O(m\sqrt{n})$ time. 
\end{restatable}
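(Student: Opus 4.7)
The plan is to execute the Roditty--Thorup--Zwick sampling construction of a $3$-roundtrip spanner, but implement it in $\tilde{O}(m\sqrt{n})$ time instead of the naive $\tilde{O}(mn)$. Sample a random pivot set $S \subseteq V$ of size $\Theta(\sqrt{n}\log n)$ and run in- and out-Dijkstra from every $s \in S$; this costs $\tilde{O}(m\sqrt{n})$ time and contributes $\tilde{O}(n^{3/2})$ edges to the spanner via the shortest-path trees $T_s^{\mathrm{in}}, T_s^{\mathrm{out}}$. For each vertex $v$, record its roundtrip-nearest pivot $p(v) := \arg\min_{s\in S} d(v \lr s)$, set $r(v) := d(v \lr p(v))$, and define $B(v) := \{u : d(v \lr u) < r(v)\}$. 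The remaining spanner edges consist, for each $v$, of the in- and out-shortest-path subtrees rooted at $v$ spanning $B(v)$. The standard sampling argument gives $\mathbb{E}[|B(v)|] = O(\sqrt{n})$, keeping the total size $O(n^{3/2})$ with high probability.

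The stretch-$3$ analysis is immediate. If $u \in B(v)$, the $v$-rooted subtrees preserve the exact in/out shortest paths between $u$ and $v$. Otherwise $d(v \lr u) \ge r(v)$, and routing via $p(v)$ using $T_{p(v)}^{\mathrm{in}}, T_{p(v)}^{\mathrm{out}}$ yields spanner roundtrip distance at most
\[
d(v \lr p(v)) + d(p(v) \lr u) \;\le\; 2r(v) + d(v \lr u) \;\le\; 3\, d(v \lr u)
\]
by the triangle inequality for the roundtrip metric.

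The main obstacle is building the in/out SP subtrees inside every $B(v)$ within the $\tilde{O}(m\sqrt{n})$ budget: a plain bounded Dijkstra from each $v$ up to radius $r(v)$ costs $\tilde{O}(\sum_v \mathrm{vol}(B(v)))$, which is $\Omega(mn)$ in the worst case when a few high-degree vertices lie in many balls. My plan is a two-scale ball decomposition (mirroring the $\Btwoin, \Btwoout, \Bfourin, \Bfourout, \Rin, \Rout, \Rtwoin$ notation already set up in the paper's macros): compute inner balls of radius $\approx r(v)/2$ via a short truncated Dijkstra from $v$ whose cost can be charged against the (small) inner-ball volume, and then extend these to capture all of $B(v)$ using the precomputed pivot-distance table as triangle-inequality certificates---a candidate vertex $u$ is ruled out of $B(v)$ as soon as some $s \in S$ satisfies $d(v,s) + d(u,s) \ge r(v)$ or $d(s,v) + d(s,u) \ge r(v)$, an $O(|S|)$-time test per candidate. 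Coupled with a degree-aware sampling scheme that over-includes sufficiently high-degree vertices into $S$, a charging argument then bounds $\sum_v \mathrm{vol}(B(v)) = \tilde{O}(m\sqrt{n})$, since no single edge is scanned by too many per-vertex Dijkstras. The most delicate step is precisely this charging: showing that despite the asymmetry between one-way and roundtrip distances, each edge is touched by $\tilde{O}(\sqrt{n})$ of the per-vertex procedures in expectation once the degree-sensitive pivots are in place, and verifying that the two-scale decomposition is consistent in the sense that SP subtrees reconstructed from the inner ball, together with pivot-certified edges, truly span every $u \in B(v)$.
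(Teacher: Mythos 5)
Your proposal follows the Roditty--Thorup--Zwick ball-based template, which is a genuinely different route from the paper: the paper does not compute roundtrip balls at all, but instead iteratively \emph{sparsifies the edge set} of $G$ using the Chechik--Lifshitz elimination rule (for out-neighbors $s,y$ of $x$ with $s$ sampled, delete $(x,y)$ when $2d_G(x,s)+d_G(s,y)\le 2\wt(x,y)+d_G(y,s)$), proves $\E[m_i]\le 2n^2/\alpha^i$ so that after $O(\log n)$ rounds only $O(n^{3/2})$ edges survive, and shows stretch $3$ by separately bounding the one-way distances $d_H(u,v)\le 2d_G(u,v)+d_G(v,u)$. That approach never needs to explicitly build or explore any per-vertex neighborhood.

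There are two concrete gaps in your plan. First, the size bound is incorrect as stated. You add ``the in- and out-shortest-path subtrees rooted at $v$ spanning $B(v)$,'' and implicitly charge this at $O(|B(v)|)$ edges. But the roundtrip ball $B(v)=\{u: d(v\lr u)<r(v)\}$ is \emph{not} closed under taking ancestors in the out-SP tree of $v$: a vertex $w$ on the shortest $v\to u$ path (for $u\in B(v)$) satisfies $d(v,w)\le d(v,u)$ but may have an arbitrarily long return path $w\to v$, so $w\notin B(v)$. The minimal subtree of $T_v^{\mathrm{out}}$ containing $B(v)$ therefore lives inside the one-way ball $\{w:d(v,w)<r(v)\}$, whose size is not controlled by the roundtrip-distance sampling argument and can be $\Theta(n)$ even when $|B(v)|=O(\sqrt n)$. (In the undirected Thorup--Zwick construction this issue does not arise because metric balls are tree-closed; the roundtrip metric breaks exactly that property.) Second, the running-time argument is a plan, not a proof: the ``two-scale ball decomposition,'' the ``degree-aware sampling scheme,'' and the claim that ``each edge is touched by $\tilde O(\sqrt n)$ of the per-vertex procedures'' are all asserted without a mechanism. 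You acknowledge the charging is the delicate step, but nothing in the sketch identifies why a shared high-degree vertex would not be scanned from $\Omega(n)$ sources, nor how the pivot-certified pruning test (``rule out $u$ when $d(v,s)+d(u,s)\ge r(v)$'') interacts with the Dijkstra exploration without missing vertices whose shortest paths traverse already-pruned territory. As a result the proposal does not establish the claimed $\tilde O(m\sqrt n)$ bound.

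For what it is worth, the paper's alternative sidesteps both issues cleanly: it never forms roundtrip balls, only prunes edges, so there is no tree-closure problem; and the Dijkstras are always from the $O(\sqrt n)$ sampled sources on the full graph, giving the $\tilde O(m\sqrt n)$ bound immediately. If you want to rescue a ball-based construction, you would need to replace $B(v)$ by a set that is provably tree-closed under both in- and out-SP trees while still having expected size $O(\sqrt n)$, which the roundtrip nearest-pivot definition does not provide.
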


This can be compared with the 4-roundtrip spanner of $O(n^{3/2})$ size constructable in $O(n^2\log n)$ time from \cite{CL21}. 

Alongside spanners, another important object of study are {\em emulators}: sparse graphs that approximate all pairwise distances; the difference here is that emulators are not required to be subgraphs, and can be weighted even if the original graph was unweighted. Similar to roundtrip spanners being analogs of spanners in directed graphs, we consider roundtrip emulators which are the analogs of emulators in directed graphs. While emulators are very well studied in undirected graphs \cite{AingworthCIM99,DorHSZ96,Woodruff06,pettie2009low,BaswanaSKT10,BW15,BV16,AB,Huang2018LowerBO,Lu2021BetterLB,koganP23},
 the authors are not aware of any results, for the roundtrip metric. The only known construction of roundtrip emulators is implied from using the roundtrip metric in Thorup-Zwick's distance oracle in \cite{TZ01}, which has $(2k-1)$-stretch and $O(kn^{1+1/k})$ edges but requires $\Tilde{O}(mn)$ construction time.

We obtain a very fast algorithm that constructs essentially optimal roundtrip emulators (up to the Erd\H{o}s girth conjecture).

\begin{restatable}{theorem}{emulator-main}
\label{thm:emulator-main}
For integers $k\ge 3$, there is a randomized algorithm that computes a $(2k-1)$-roundtrip emulator of $O(kn^{1+1/k})$ size  in $O(kn^{2}\log n)$ time. 
\end{restatable}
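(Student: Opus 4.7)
The plan is to construct, for the roundtrip metric, the Thorup--Zwick (TZ) distance oracle -- whose existence as a $(2k-1)$-roundtrip emulator of size $O(kn^{1+1/k})$ is already implicit in TZ's framework, as the introduction notes -- while avoiding the usual $\Omega(mn)$ preprocessing.

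Sample $A_0=V\supseteq A_1\supseteq\cdots\supseteq A_k=\varnothing$ by independently retaining each vertex of $A_i$ in $A_{i+1}$ with probability $n^{-1/k}$, giving $|A_i|=\tilde O(n^{1-i/k})$ whp. For every level $i$, attach a virtual super-source $s_i$ with zero-weight arcs to $A_i$ and run Dijkstra from $s_i$ in both $G$ and $G^R$; this computes, in $O(k(m+n\log n))\le O(kn^2\log n)$ total time, the values $d^{out}_i(v)=d_G(v,A_i)$ and $d^{in}_i(v)=d_G(A_i,v)$ together with the witnessing pivots $p^{out}_i(v),p^{in}_i(v)\in A_i$ for every $v$.

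Define the (asymmetric) bunches
\[
B^{out}(v):=\bigcup_{i=0}^{k-1}\{u\in A_i\setminus A_{i+1}:\,d_G(v,u)<d^{out}_{i+1}(v)\},
\]
and $B^{in}(v)$ symmetrically in $G^R$. The emulator $H$ contains (i) the pivot edges $v\to p^{out}_i(v)$ with weight $d^{out}_i(v)$ and $p^{in}_i(v)\to v$ with weight $d^{in}_i(v)$, and (ii) the bunch edges $v\to u$ with weight $d_G(v,u)$ for $u\in B^{out}(v)$ and $u\to v$ with weight $d_G(u,v)$ for $u\in B^{in}(v)$. The standard geometric-sampling analysis yields $\mathbb E[|B^{out}(v)|+|B^{in}(v)|]=O(kn^{1/k})$, so $|E(H)|=O(kn^{1+1/k})$. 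The $(2k-1)$-roundtrip stretch comes from applying TZ's path-chase argument separately to the out- and in-directions, exactly as in the TZ oracle instantiated with the roundtrip metric: for any target $u$, walking from $v$ through ascending pivots until one lands in the opposite-direction bunch of $u$ produces, in each direction, a path of length $\le(2k-1)d_G(v,u)$, and summing gives the roundtrip bound.

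The main obstacle is computing the bunches within the claimed time budget. For each level $i$ and each $w\in A_i\setminus A_{i+1}$ I would compute the cluster $C^{in}(w):=\{v:w\in B^{out}(v)\}$ via a pruned single-source Dijkstra from $w$ in $G^R$ that relaxes an edge into $v$ only while the tentative $d_G(v,w)<d^{out}_{i+1}(v)$, and $C^{out}(w)$ symmetrically. The classical TZ analysis bounds this by $O(mn^{1/k})$ expected per level, giving $O(kmn^{1/k})$ overall; this is $\tilde O(n^2)$ on sparse graphs but degrades to $O(kn^{2+1/k})$ on dense graphs, missing the target. To shave the extra $n^{1/k}$ in the dense regime, the idea I would pursue is a level-wide batched exploration: for each level $i$ maintain a single priority queue over $(\text{vertex},\text{pivot})$ pairs (pivots in $A_i\setminus A_{i+1}$) and in one joint sweep grow all pruned balls simultaneously. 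The total PQ work is charged either to a successful cluster admission -- contributing $\sum_v|B^{out}(v)\cap(A_i\setminus A_{i+1})|=O(n^{1+1/k})$ per level in expectation -- or to a one-time scan of one of the $O(n^2)$ potential vertex--neighbor pairs at that level, giving $O(n^2\log n)$ per level and hence $O(kn^2\log n)$ in total. Verifying that this joint sweep still correctly extracts the asymmetric bunches, and in particular that the thresholds $d^{out}_{i+1},d^{in}_{i+1}$ interact cleanly with a shared queue, is the technical heart of the argument.
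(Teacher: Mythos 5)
Your high-level plan (Thorup--Zwick on the roundtrip metric, speed up preprocessing) is the right starting point, and the sparsity and stretch claims for the resulting emulator are fine modulo the stretch-chase being phrased loosely. The problem is the running time, which you correctly identify as the bottleneck, and the fix you propose does not go through.

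Concretely: in the standard TZ cluster computation at level $i$, each edge $(u,v)$ must be relaxed once per source $w$ with $u\in C(w)$, because the tentative label $d(w,u)+\wt(u,v)$ differs for every source $w$. A vertex $u$ lies in $\Theta(n^{1/k})$ level-$i$ clusters in expectation, so the total edge-relaxation work at level $i$ is $\sum_u \deg(u)\cdot|B(u)\cap(A_i\setminus A_{i+1})| = \Theta(m\,n^{1/k})$ in expectation, not $O(n^2)$. Sharing a priority queue over $(\text{vertex},\text{pivot})$ pairs does not change this: the $O(n^{1+1/k})$ successfully admitted pairs each still fan out to $\deg(\cdot)$ neighbors, and your ``one-time scan of one of the $O(n^2)$ potential vertex--neighbor pairs'' would only hold if each vertex--neighbor pair were touched $O(1)$ times in aggregate over all pivots at that level, which is false. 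So the proposal as written still costs $\Theta(kn^{2+1/k})$ on dense graphs, the same bound you already flagged as insufficient, and you have not supplied the missing idea that removes the $n^{1/k}$ factor.

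The paper takes a genuinely different route that supplies exactly that missing idea: it interleaves the Chechik--Lifshitz graph-sparsification rule (Lemma~\ref{lem:observation}) with the Thorup--Zwick pivot/bunch construction. Each iteration $i$ samples $S_i$ of expected size $\alpha^i$, runs Dijkstra \emph{on the current sparsified graph} $G_i$ (not $G$), uses the resulting distances both to build TZ pivots and bunches and to delete edges from $G_i$ by the symmetric elimination rule, and only at the very end dumps the remaining $O(n^{1+1/k})$ edges into $H$. Since $\mathbb E[|E(G_i)|]\le 2n^2/\alpha^i$ (Lemma~\ref{lem:emulator-expected-sparsity}), each iteration costs $\mathbb E[|S_i|]\cdot O(|E(G_i)|+n\log n)=O(n^2)$, so the whole algorithm is $O(n^2\cdot k\log n)$. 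This is precisely what avoids the $m n^{1/k}$-type blowup you ran into. The price is a more delicate stretch proof: because bunch distances are measured in the shrinking graphs $G_i$ rather than $G$, the TZ induction (Lemmas~\ref{lem:induct1}--\ref{lem:maxJ}) must be rerun with these approximate distances and combined with a cycle-approximation argument (Lemma~\ref{lem:edge-delete-stretch}) to still get stretch $2k-1$ for $k\ge 3$. That interplay between sparsification and the TZ path-chase is the actual technical content of the result, and it is absent from your proposal.
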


While the result is only for roundtrip emulators, rather than spanners, it achieves a much faster running time than any result on roundtrip spanners with optimal approximation-size tradeoff. This is the first algorithm that achieves a sub-$mn$ running time for the problem.

We next focus on the closely related question of girth approximation. We prove:

\begin{restatable}{theorem}{4girth-main}
\label{thm:4girth-main}
There is a randomized algorithm that computes a 4-multiplicative approximation of the girth of a directed graph in $\tilde O(mn^{1/3})$ time. 
\end{restatable}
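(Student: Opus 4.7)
The plan is to reduce girth approximation to approximately computing the roundtrip distances of edges via the identity $g = \min_{(u,v) \in E}\lbr{w(u,v) + d(v,u)}$: the girth cycle always contains an edge $(u,v)$ whose weight $w(u,v)$ plus the back-distance $d(v,u)$ equals $g$. After a geometric search over a target girth value $g^\star$, it suffices to design a decision procedure that outputs a cycle of weight at most $4g^\star$ whenever $g \le g^\star$.

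The algorithm combines a sampling phase with a local ball phase, parameterized by $\ell = n^{2/3}$. In the sampling phase, draw a pivot set $P \subseteq V$ of size $\tilde O(n/\ell) = \tilde O(n^{1/3})$ uniformly at random and run both in- and out-Dijkstra from every $p \in P$, taking $\tilde O(mn^{1/3})$ time in total; for every pair $(p, v) \in P \times V$ form the candidate cycle $p \to v \to p$ of weight $d(p,v) + d(v,p)$ and record the minimum. In the local phase, for every $v \in V$ grow bounded in- and out-Dijkstras from $v$, each stopped either when the ball hits an auxiliary anchor from a second random sample $P''$ of size $\tilde O(n^{2/3})$ or when it reaches a size cap of $\tilde O(n^{1/3})$; with this choice of anchor size, the expected ball size is $\tilde O(n^{1/3})$ and the total work across all $v$ is $\tilde O(mn^{1/3})$ by the standard Thorup--Zwick-type amortization used in fast spanner construction. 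For each $u$ in both of $v$'s local balls, report the cycle $v \to u \to v$ of weight $d(v,u) + d(u,v)$.

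For correctness, let $C$ be the girth cycle of weight $g$ with $|C|$ vertices. If $|C| \ge \ell$, then with high probability some pivot $p \in P$ lies on $C$; traversing $C$ shows that $d(p,v) + d(v,p) \le g$ for any other $v \in C$, so the sampling phase recovers the exact girth. Otherwise $|C| < \ell$, and either the roundtrip ball $\lcr{u : d(v,u) + d(u,v) \le g}$ of some $v \in C$ has size $\ge \tilde O(n^{1/3})$, in which case a pivot in $P$ falls in this roundtrip ball whp and the sampling phase yields a cycle of weight at most $2g$; or all such roundtrip balls have size $< \tilde O(n^{1/3})$, in which case $C$ itself fits entirely within $v$'s local in- and out-balls, and their intersection produces a cycle of weight at most $4g$ (the factor $4$ arising from summing the in- and out-ball radii, each bounded by $2g$ under the size cap).

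The main obstacle is verifying the factor-$4$ bound in the sparse-roundtrip-ball case: although the roundtrip ball of $v$ is small, the one-way in- and out-balls at $v$ may individually be large in size and weight, and the size cap may truncate them before they capture all of $C$. The analysis therefore needs a careful combination of sampling-phase pivot distances with local-phase ball radii, where each triangle-inequality step contributes at most a factor of $2$ and these compose to the final $4$. Handling these edge cases and simultaneously choosing the ball-size cap and anchor-sample sizes so that both total cost is $\tilde O(mn^{1/3})$ and the approximation factor is $4$ constitute the technical heart of the argument.
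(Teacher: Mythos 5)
Your proposal is missing a phase: it is a two-level dichotomy (a pivot sample $P$ of size $\tilde O(n^{1/3})$ versus truncated local balls of size $\tilde O(n^{1/3})$), and those two levels do not meet.  A uniform sample of size $\tilde O(n^{1/3})$ only hits a vertex set with high probability when that set has size $\Omega(n^{2/3}\log n)$; conversely, your size-capped Dijkstra only guarantees that the girth cycle $C$ is captured when the relevant ball around $v\in C$ has size at most $\tilde O(n^{1/3})$.  The entire regime where the ball has size between roughly $n^{1/3}$ and $n^{2/3}$ is handled by neither half of your dichotomy.  Rebalancing does not help: raising the cap to $n^{2/3}$ makes the local phase cost $\tilde O(m n^{2/3})$, and shrinking $P$ makes the sampling phase useless.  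This is precisely why the $\tilde O(m\sqrt n)$ two-phase argument of Chechik--Lifshitz sets both thresholds at $\sqrt n$, and why going to $\tilde O(mn^{1/3})$ requires inserting a middle phase: the paper runs truncated Dijkstra from a second sample $S_2$ of size $\tilde O(n^{2/3})$, restricted to pruned sets of size $\tilde O(n^{2/3})$, at total cost $\tilde O(n^{2/3}\cdot \tfrac{m}{n}\cdot n^{2/3}) = \tilde O(mn^{1/3})$, and then uses the resulting (partial) distance information from $S_2$ to cut the final per-vertex search space down to $\tilde O(n^{1/3})$.

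There is a second gap that you flag yourself but do not resolve, and it is not a technicality.  You truncate one-directional in- and out-Dijkstras by size, but your correctness claim for the small-ball case needs the girth cycle $C$ to lie inside both truncated balls.  A small \emph{roundtrip} ball $\{u: d(v,u)+d(u,v)\le g\}$ does not imply that a size-capped one-way Dijkstra from $v$ ever reaches $C$: the one-way ball can be flooded by many vertices with small $d(v,\cdot)$ but large $d(\cdot,v)$.  The paper avoids this by defining the pruned sets $B^{(2)}_{\mathrm{out}}(v)$, $\tilde B'(v)$ through the elimination rule of \cref{lem:observation}, and then proving a closedness property (\cref{claim:temp}) showing that a Dijkstra that skips eliminated vertices still reaches every vertex of the pruned set along its true shortest path.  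A plain size cap has no such guarantee, so your ``$C$ fits entirely within $v$'s local in- and out-balls'' step is unjustified.  Finally, the factor-4 accounting you defer as ``the technical heart'' is where the paper introduces genuinely new structural tools beyond \cref{lem:observation}: the generalized key observation (\cref{lem:k-approx}), the pruned set $B^{(4)}_{\mathrm{out}}$, the 4-approximation filtering lemma (\cref{lem:filter-lemma}), and a computable distance under-estimate $\underline d$ used in place of distances from $S_2$ that the algorithm never computes.  Your proposal has none of these, and the geometric search over $g^\star$ and the edge identity $g=\min_{(u,v)\in E}(\wt(u,v)+d(v,u))$ do not substitute for them.
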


Let us compare with the previous known directed girth approximation algorithms.
Compared with the 2-approximation in $\tilde O(n^2,m\sqrt{n})$ time from \cite{CL21}, Theorem \ref{thm:4girth-main} achieves a better running time for $m\leq o(n^{5/3})$ while raising the approximation ratio to $4$. Dalirrooyfard and Vassilevska W. \cite{DalirrooyfardW20} gave for every constant $\eps>0$, a $(4+\eps)$-approximation algorithm running in $\tilde{O}(mn^{\sqrt{2}-1})$ time. Our algorithm removes the $\eps$ from the approximation factor and further improves the running time.

\subsection{Paper organization}
After introducing useful notations and terminologies in \cref{sec:prelim}, we give a high level overview of our techniques in \cref{sec:tech}. Then, in \cref{sec:3-spanner} we describe our $3$-roundtrip spanner algorithm (\cref{thm:3roundtrip-main}). In \cref{sec:emulator} we describe our roundtrip emulator algorithm. In \cref{sec:4-approx} we describe our girth approximation algorithm. We conclude with a few open questions in \cref{sec:conclusion}.

\begin{table}[]
    \centering
    {\renewcommand{\arraystretch}{1.2}
    \begin{tabular}{|m{6cm}|c|c|c|}
        \hline
         \textbf{Citation} & \textbf{Stretch} & \textbf{Sparsity} & \textbf{Time}  \\ \hline 
         Roditty, Thorup, Zwick~\cite{RodittyTZ08} \textsuperscript{$\triangle$} & $2k+\eps$ & $\Tilde{O}\lpr{\frac{k^2}{\eps}n^{1+1/k}}$ & $O(mn)$ \\ \hline
         
         Pachocki, Roditty, Sidford, Tov, Vassilevska W.~\cite{PachockiRSTW18} & $O(k\log n)$ & $\Tilde{O}(n^{1+1/k})$ & $\Tilde{O}(mn^{1/k})$ \\ \hline
         
         Chechik, Liu, Rotem, Sidford~\cite{chechikliu20} & $O(k\log k)$ & $\Tilde{O}(n^{1+1/k})$ & $\Tilde{O}(m^{1+1/k})$\\ \hline
         
         Cen, Duan, Gu~\cite{duan} & $2k-1$ & $\Tilde{O}(kn^{1+1/k})$ & $\Tilde{O}(mn\log W)$\\ \hline

         Chechik, Liu, Rotem, Sidford~\cite{chechikliu20} \textsuperscript{$\triangle$} & $8+\eps$ & $\Tilde{O}(n^{3/2}/\eps^2)$ & $\Tilde{O}(m\sqrt{n})$\\ \hline

         Dalirrooyfard and Vassilevska W.~\cite{DalirrooyfardW20} \textsuperscript{$\triangle$}& $5+\eps$ & $\Tilde{O}(n^{3/2}/\eps^2)$ & $\Tilde{O}(m\sqrt{n})$\\ \hline

         Chechik and Lifshitz~\cite{CL21} & $4$ & $O(n^{3/2})$ & $\Tilde{O}(n^2)$\\ \hline

         \textbf{New} & $3$ & $O(n^{3/2})$ & $\Tilde{O}(m\sqrt{n})$\\ \hline
    \end{tabular}
    }
    \caption{Known results on constructions of roundtrip spanners on a weight directed graph on $n$ vertices and $m$ edges with edge weight bounded by $W$. Results marked with $\triangle$ are subsumed by other results.}
    \label{tab:spanner-results}
\end{table}

\begin{table}[]
    \centering
    {\renewcommand{\arraystretch}{1.2}
    \begin{tabular}{|m{5cm}|c|c|}
        \hline
         \textbf{Citation} & \textbf{Approximation Factor}  & \textbf{Time}  \\ \hline

         Pachocki, Roditty, Sidford, Tov, Vassilevska W. \cite{PachockiRSTW18} & $O(k\log n)$ &  $\Tilde{O}(mn^{1/k})$ \\ \hline

         Chechik, Liu, Rotem, Sidford \cite{chechikliu20} \textsuperscript{$\triangle$} & $3$ & $\Tilde{O}(m\sqrt{n})$\\ \hline
         
         Chechik, Liu, Rotem, Sidford \cite{chechikliu20} & $O(k\log k)$ &  $\Tilde{O}(m^{1+1/k})$\\ \hline

         Dalirrooyfard and Vassilevska W. \cite{DalirrooyfardW20} \textsuperscript{$\triangle$} & $4+\eps$ & $\Tilde{O}(mn^{\sqrt{2}-1})$\\ \hline

         Dalirrooyfard and Vassilevska W. \cite{DalirrooyfardW20} \textsuperscript{$\triangle$}& $2+\eps$ & $\Tilde{O}(m\sqrt{n})$ \\ \hline

         Dalirrooyfard and Vassilevska W. \cite{DalirrooyfardW20} \textsuperscript{$\triangle$}& $2$ & $\Tilde{O}(mn^{3/4})$ (unweighted)\\ \hline

         Chechik and Lifshitz \cite{CL21} & $2$ & $\Tilde{O}(\min\{n^2,m\sqrt{n}\})$\\ \hline

         \textbf{New} & $4$ & $\Tilde{O}(mn^{1/3})$\\ \hline
    \end{tabular}
    }
    \caption{Known results on  girth approximation on a weight directed graph on $n$ vertices and $m$ edges with edge weight bounded by $W$. Results marked with $\triangle$ are subsumed by other results. }
    \label{tab:girth-results}
\end{table}
\section{Preliminaries}\label{sec:prelim}

We use $\tilde O(\cdot )$ to hide $\polylog(n)$ factors, where $n$ is the number of vertices in the input graph. 

In this paper, the input graph $G = (V,E)$ is always a weighted directed graph with vertex set $V$ of size $|V| = n$ and edge set $E$ of size $|E| = m$ with non-negative edge weights.
Without loss of generality, we assume $G$ does not have parallel edges.
We use $\wt(u,v)$ to denote the weight of the directed edge $(u,v)\in E$.
 For any two vertices $u,v\in V$, we use $d_G(u,v)$ to denote the distance (length of the shortest path) from $u$ to $v$ in $G$, and we use $d_G(u\lr v) := d_G(u,v) +d_G(v,u)$ to denote the \emph{roundtrip distance} between $u$ and $v$. When the context is clear, we simply use $d(u,v)$ and $d(u\lr v)$. For a subset of vertices $W\subseteq V$, we use $G[W]$ to denote the subgraph of $G$ induced by the vertex set $W$.

 The \emph{girth} of $G$ is the length (total edge weight) of the shortest cycle in $G$. 
We say a graph $H=(V,E')$  is an $\alpha$-\emph{roundtrip emulator} of graph $G=(V,E)$, if for every two vertices $u,v\in V$ it holds that $d_G(u \lr v) \le d_{H}(u\lr v) \le \alpha \cdot d_G(u\lr v)$. 
Furthermore, if $H$ is a subgraph of $G$, we say $H$ is an $\alpha$-\emph{roundtrip spanner} of $G$.

 Without loss of generality, we may assume $G$ is strongly-connected, since otherwise we can run the algorithm for girth approximation (or roundtrip spanner/emulator) on each strongly-connected component.
In addition, we may assume the maximum degree of $G$ is bounded by $O(m/n)$. This is due to the following regularization lemma shown in \cite{chechikliu20}. This assumption will be used in \cref{sec:4-approx}.

\begin{lemma}[Regularization \cite{chechikliu20}]
    \label{lem:regularize}
Given a directed weighted graph $G = (V,E)$ on $n$ vertices and $m$ edges, one can construct a graph $H$ on $O(n)$ vertices and $O(m)$ edges with non-negative edge weights and maximum degree $O(m/n)$ in $O(m)$ time such that all of the following holds:
\begin{enumerate}
    \item All roundtrip distances between pairs of vertices in $G$ are the same in $H$ as in $G$.
    \item  Given a cycle in $H$, one can find a cycle of the same length in $G$ in $O(m)$ time.
    \item  Given a subgraph $H'$ in $H$, one can find in $O(m)$ time a subgraph $G'$ of $G$ such that $|E(G')|\le |E(H')|$ and the roundtrip distances in $G'$ are the same as in $H'$.
\end{enumerate}
\end{lemma}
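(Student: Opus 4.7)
The plan is to reduce degrees by a standard ``vertex splitting'' construction: replace every high-degree vertex with a small collection of copies linked by zero-weight edges that form a bidirectional cycle, and then redistribute the original incident edges across those copies.

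More concretely, set the target maximum degree to be $d := \lceil m/n \rceil$. For each vertex $v$ of $G$ with total degree $\deg(v) = \indeg(v)+\outdeg(v)$, let $k_v := \lceil \deg(v)/d \rceil$, and introduce copies $v_1,\dots,v_{k_v}$ in $H$. Add zero-weight directed edges $v_i \to v_{i+1}$ and $v_{i+1}\to v_i$ for $i=1,\dots,k_v$ (indices mod $k_v$), so that every pair of copies of $v$ lies on a bidirectional zero-weight cycle. Finally, walk through the in- and out-edges of $v$ in $G$ and assign each to some copy $v_i$, making sure no copy receives more than $d$ edges in total; for every original edge $(u,v)$ of $G$, put a corresponding edge (with the same weight) from the assigned copy of $u$ to the assigned copy of $v$ in $H$. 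The construction takes $O(m)$ time.

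The three accounting checks are straightforward. The vertex count is $\sum_v k_v \le \sum_v (\deg(v)/d + 1) = 2m/d + n = O(n)$, using $d = \lceil m/n \rceil$. The new edges added for the bidirectional cycles number $2\sum_v k_v = O(n)$, so $|E(H)| = m + O(n) = O(m)$. The degree at each copy is at most $d + 4 = O(m/n)$, accounting for the four extra zero-weight edges along each bidirectional cycle. For property 1, note that the zero-weight cycle through the copies of $v$ gives $d_H(v_i \lr v_j) = 0$ for all $i,j$; since every edge of $G$ is realized between some pair of copies, $H$ and $G$ are equivalent under the quotient that identifies the copies, so the roundtrip distances between any two vertices $u,v$ agree (choose any copies $u_i,v_j$). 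For property 3, given $H'\subseteq H$, map each non-zero edge of $H'$ back to its original edge of $G$, giving a subgraph $G'$ with $|E(G')|\le |E(H')|$ (the zero-weight edges only help match the bound); the same quotient argument shows roundtrip distances in $G'$ match those in $H'$.

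The most delicate piece is property 2 — turning a cycle of $H$ into a cycle of $G$ of the same length. Given a directed cycle $C$ in $H$, replace each non-zero edge of $C$ by the original edge of $G$ it corresponds to, and delete the zero-weight edges; this yields a closed directed walk $W$ in $G$ of exactly the same total weight as $C$, because the removed edges had weight $0$. A simple cycle of length at most $\wt(W)$ can then be extracted from $W$ in $O(m)$ time by standard walk decomposition (repeatedly peeling off a simple cycle whenever a vertex is revisited). Since this is the use case for girth approximation — we only need that any cycle in $H$ yields a cycle in $G$ of length at most the same amount, while cycles of $G$ trivially lift to $H$ — this suffices, and the girths of $G$ and $H$ coincide. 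The main conceptual point to get right is simply that the zero-weight bidirectional cycles act as a ``contractible gadget'' that preserves both directions of distance simultaneously, which is what makes the roundtrip metric transfer cleanly in both directions.
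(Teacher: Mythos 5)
Your overall plan---split high-degree vertices into copies linked by a zero-weight gadget and redistribute the incident edges across the copies---is the right skeleton, and since the paper cites this lemma from the reference of Chechik, Liu, Rotem and Sidford without reproducing a proof, there is no in-paper argument to compare against directly. However, the specific gadget you chose is broken. The bidirectional zero-weight cycle on the copies of $v$ introduces cycles of $H$ that lie \emph{entirely} inside a gadget (e.g.\ $v_1 \to v_2 \to v_1$, and in your $k_v = 1$ case even a zero-weight self-loop $v_1 \to v_1$), and these cycles have total weight $0$. Property~2 then fails outright: deleting the zero-weight edges from such a cycle leaves an empty walk in $G$, not a closed walk, and $G$ need not contain any cycle of length $0$. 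In particular your claim that ``the girths of $G$ and $H$ coincide'' is false---the girth of your $H$ is always $0$---and this is not a harmless technicality for the intended use in \cref{sec:4-approx}, where the algorithm would then report $0$ on every input.

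The fix is to make the gadget acyclic rather than a bidirectional cycle. For instance, split $v$ into in-copies $v^{\mathrm{in}}_1,\dots,v^{\mathrm{in}}_a$ (holding the in-edges, at most $d$ each) and out-copies $v^{\mathrm{out}}_1,\dots,v^{\mathrm{out}}_b$ (holding the out-edges), and link them by a single zero-weight directed path $v^{\mathrm{in}}_1 \to \cdots \to v^{\mathrm{in}}_a \to v^{\mathrm{out}}_1 \to \cdots \to v^{\mathrm{out}}_b$. Every in-copy still reaches every out-copy at zero cost, so paths and cycles of $G$ lift to $H$ with no extra weight (a cycle of $G$ lifts to a genuine cycle of $H$ because it enters and leaves each vertex exactly once, so the lift closes up at the in-copy of the starting vertex). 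But the gadget is now a DAG, so every cycle of $H$ must use at least one original edge and projects to a nonempty closed walk of $G$ of exactly the same weight, which is what Property~2 needs. Note that this also forces you to relax the cleanest part of your argument for Property~1: with an acyclic gadget, the one-way distance $d_H$ is no longer independent of which copy you pick, and the roundtrip distance is preserved for the copies that the optimal roundtrip cycle threads through (equivalently, in the quotient that contracts each gadget), not for an arbitrary pair of copies as your bidirectional construction gave. The vertex, edge, and degree bookkeeping in your argument carries over unchanged.
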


In our algorithms, we often use Dijkstra's algorithm to compute single-source distances. 
On a weighted directed graph $G = (V,E)$, we use \emph{out-Dijkstra} from a source $s\in V$ to refer to Dijkstra algorithm computing distances $d(s,\cdot)$ from $s$, and use \emph{in-Dijkstra} from $s$ to refer to Dijkstra algorithm computing distances $d(\cdot,s)$ into $s$.

\section{Technical Overview}
\label{sec:tech}
\subsection{Previous Work}\label{subsec:previous-work}

Throughout this paper, our techniques are based on the following key observation introduced in \cite{CL21}.

\begin{lemma}[Key Observation \cite{CL21}]\label{lem:observation}
    Let $G = (V, E)$ be a weighted directed graph with nonnegative edge weights. For  vertices $u,v,r\in V$, if 
    \begin{equation}\label{eq:obs-condition}
        2\cdot d(v,r) + d(r,u) \le 2\cdot d(v,u) + d(u,r),
    \end{equation}
    then
    \[d(u\lr r) \le 2\cdot d(u\lr v).\]
\end{lemma}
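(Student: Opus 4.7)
My plan is to derive the conclusion $d(u,r) + d(r,u) \le 2\,d(u,v) + 2\,d(v,u)$ by a direct algebraic manipulation of the hypothesis (\ref{eq:obs-condition}). The key observation driving the plan is a simple comparison of terms: expanding the target roundtrip inequality, three of the four relevant directed distances already appear in the hypothesis with the ``right'' coefficients, namely $d(r,u)$ on the left and $d(u,v), d(v,u), d(u,r)$ on the right. The only ``foreign'' quantity is $2\,d(v,r)$ on the left. So my single job is to eliminate $d(v,r)$ using triangle inequalities.

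The natural tool is the directed triangle inequality along the path $u \to v \to r$, which gives $d(u,r) \le d(u,v) + d(v,r)$, i.e., $d(v,r) \ge d(u,r) - d(u,v)$. Substituting this lower bound for $d(v,r)$ into the left-hand side of (\ref{eq:obs-condition}) produces a chain of inequalities whose two ends, after canceling a $d(u,r)$ term across the inequality and grouping, are exactly $d(u,r) + d(r,u)$ on the left and $2\,d(u,v) + 2\,d(v,u) = 2\,d(u \lr v)$ on the right. No case analysis, no further triangle inequalities, and no use of the auxiliary distance $d(r,v)$ are needed.

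I do not expect a real obstacle: the proof is one substitution followed by one line of rearrangement. The only mildly delicate point is choosing which of the six possible triangle inequalities among the three vertices $\{u, v, r\}$ to apply. The one above is the unique inequality that (i) produces a lower bound on $d(v,r)$ so it can be fed into the left side of (\ref{eq:obs-condition}), and (ii) introduces an extra copy of $d(u,r)$ that combines with the $d(u,r)$ already on the right to yield the coefficient needed for the roundtrip $d(u \lr r)$, while simultaneously producing the extra $d(u,v)$ term that, together with the $d(v,u)$ already present, reconstitutes the roundtrip $d(u \lr v)$ with coefficient $2$.
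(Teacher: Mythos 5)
Your proof is correct, and it is essentially the same argument the paper gives (for the generalized Lemma 5.5, $\textsc{lem:k-approx}$, of which the present lemma is the $k=2$ case): both use the single triangle inequality $d(u,r) \le d(u,v) + d(v,r)$ to eliminate $d(v,r)$ and then rearrange. The only difference is cosmetic — you substitute the bound into the left side of the hypothesis, while the paper substitutes into the right side of the target — but the chain of inequalities is identical.
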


\begin{figure}[H]
    \centering
    \tikzset{vtx/.style = {circle, draw, fill=black, inner sep=0pt, minimum width=4pt},>={Latex[width=1.5mm,length=1.5mm]}}
    \begin{tikzpicture}

    \node[vtx] (v) at (0,0) {};
    \node[vtx] (u) at (1.2,1.8) {};
    \node[vtx] (w) at (-1.2,1.8) {};

    \node at (0,-0.3) {$v$};
    \node at (1.5, 1.8) {$u$};
    \node at (-1.5, 1.8) {$r$};

    \draw[green!50, line width=4pt](v) -- (w);
    \draw[green!50, line width=4pt] (w) to[out=20, in=160] (u);
    \draw[green!50, line width=4pt] (1.2, 1.8) to[out=-50,in=-10] (0,0);

    \node[vtx]  at (0,0) {};
    \node[vtx]  at (1.2,1.8) {};
    \node[vtx]  at (-1.2,1.8) {};

    \draw[->,red,very thick] (v) -- (u);
    \draw[->] (v) -- (w);
    \draw[->] (u) to[out=-160, in=-20] (w);
    \draw[->] (w) to[out=20, in=160] (u);
    \draw[->, red, very thick] (u) to[out=-60,in=0] (v);
    \end{tikzpicture}
    \caption{A illustration of \cref{lem:observation} with 
    $u,v,r\in V$ satisfying \cref{eq:obs-condition}.
 The red cycle $u\leadsto v\leadsto u$ can be $2$-approximated by the cycle $u\leadsto v \leadsto r\leadsto u$ highlighted green.}
    \label{fig:k-approx-lem}
\end{figure}

An important property of the above observation is that \cref{eq:obs-condition} is symmetric with respect to the roles of $u$ and $r$. This symmetry is crucial to the analysis of the applications of \cref{lem:observation} in the previous work \cite{CL21} as well as in our new algorithms, so we first describe it in more details as follows.

\paragraph{The Symmetry Argument}

Consider the following routine that sparsifies a graph $G = (V, E)$ on $n$ vertices using a random sample $S\subseteq V$. For every vertex $v\in V$, we check for every vertex $s\in S\cap N(v)$ and $u\in N(v)$ where $N(v)$ denotes the out neighborhood of $v$, if $2d(v,s) + d(s,u)\le 2d(v,u)+d(u,s)$ then remove the edge $(v,u)$. We say that we use the set $S$ as \emph{eliminators} to perform the sparsification since we are comparing the distance $d(v,u)$ using the distance information involving $s\in S$. 

For any two neighbors $u,u'\in N(v)$ (possible $u = u'$), notice that the condition involving $v,u,u'$ compares the distances $2d(v,u) + d(u,u')$ against $2d(v,u')+d(u',u)$, which is the same as if we switch the roles of $u$ and $u'$. This means that either $u$ eliminates $u'$ or $u'$ eliminates $u$. (We say ``$u$ eliminates $u'$'' meaning that, if $u\in S$, then the edge $(v,u')$ will be removed, namely $u'$ is eliminated from $N(v)$.) So given a random $u\in N(v)$, in expectation half of the pairs $(u,u')$ falls in the case where $u$ can eliminate $u'$ and additionally $u$ can eliminate $u$ itself. Thus, if $N(v)\cap S\ne \varnothing$, then in expectation the procedure will remove at least $|N(v)|/2$ edges. This implies that the graph sparsification can effectively remove a constant fraction of the edges adjacent to the vertices with high out-degree. More specifically, since on expectation, the sample $S$ can hit vertex sets with size $\Omega(n/|S|)$, this procedure can remove a constant fraction of the outgoing edges adjacent to vertices with degree $\Omega(n/|S|)$. So if we repeat this process $\Theta(\log n)$ rounds, on expectation we can reduce the out-degree of every vertex to at most $O(n/|S|)$.

\paragraph{Applications of the key observation}

Now we are ready to explain how the above \cref{lem:observation} is useful for constructing roundtrip spanners and approximating directed cycles. 
\begin{enumerate}
    \item \textbf{Girth approximation: reduce search space.} Suppose we can take a small random subset of vertices $S\subseteq V$ and for each vertex $s\in S$ and set the current girth estimate as the length of the shortest cycle passing through any vertex in $S$. Then if $r\in S$, \cref{lem:observation} shows that we no longer have to consider the shortest cycle passing through $v$ and $u$ satisfying \cref{eq:obs-condition}. This is because the shortest cycle passing through $u$ and $v$ can already be $2$-approximated by the shortest cycle passing through $r$. Then if we want to search for cycles passing through $v$ that cannot be $2$-approximated, we would not need to consider the vertex $u$. Thus, using the sample $S$, we can compute a pruned vertex set $B(v)\subseteq V$ that contains all the vertices $u\in V$ such that \cref{eq:obs-condition} does not hold for any $r\in S$. By the symmetry argument, each sample that hits the set $B(v)$ can reduce the size of $B(v)$ by a constant fraction. So over $\Theta(\log n)$ rounds, we can obtain a pruned set of size roughly $O(n/|S|)$. This technique is used in the $2$-approximation in \cite{CL21} and will be used in our algorithm for computing a $4$-approximation of the girth in \cref{sec:4-approx}.

    \item \textbf{Roundtrip spanners: graph sparsification.} Suppose we take a random subset $S\subseteq V$ and add all the in/out shortest path tree from $S$ to our spanner $H$. We apply \cref{lem:observation} to the vertices $u,v,r\in V$ where $u,r$ are out-neighbors of $v$. If $r\in S$, \cref{lem:observation} implies that we can delete the edge $(v,u)$ since the shortest cycle containing $(v,u)$ can be $2$-approximated by the cycle passing through $r$ and $u$, which is already added to the spanner $H$. As explained previously by the symmetry argument, in expectation we can reduce the out-degree of every vertex to roughly $O(n/|S|)$ if we repeat this process for $\Theta(\log n)$ rounds. This technique was used in the construction of $4$-roundtrip spanners in \cite{CL21} and will be used in our construction of $3$-roundtrip spanner in \cref{sec:3-spanner} and our $(2k-1)$-roundtrip emulator in \cref{alg:emulator}.
    
\end{enumerate}

\subsection{Our Techniques}

Our techniques consist of a collection of extensions to the techniques introduced in \cite{CL21}. We now highlight the novel components in each of our algorithms.

\paragraph{$3$-Roundtrip Spanner in $\Tilde{O}(m\sqrt{n})$ Time}
Our algorithm follows from a modification of Chechik and Lifshitz's \cite{CL21} $4$-roundtrip spanner algorithm, which was based on the graph sparsification approach mentioned earlier.  
Our new idea lies in a more careful analysis of the stretch of the spanner: instead of directly bounding the roundtrip distance $d_H(u\lr v)$ between vertices $u,v$ in the spanner $H$ as Chechik and Lifshitz did, we separately bound the one-way distances $d_H(u,v), d_H(v,u)$ and add them up. 
After a slight change in their algorithm (namely,
by computing distances in the original graph rather than in the sparsified graph in each round), this analysis enables us to improve the stretch from $4$ to $3$.

\paragraph{$(2k-1)$-Roundtrip Emulator in $\Tilde{O}(n^2)$ Time}

The celebrated approximate distance oracle result of Thorup and Zwick \cite{TZ01} immediately yields $(2k-1)$-emulators of $O(kn^{1+1/k})$ size for any metric. But a straightforward implementation of their generic algorithm in the roundtrip metric would require computing single source shortest paths from all vertices, in $\tilde O(mn)$ total time.
For the easier case of undirected graphs, \cite{TZ01} reduced the construction time to $O(kmn^{1/k})$, but unfortunately these techniques based on balls and bunches do not yield a speedup in our roundtrip distance setting.  

Our faster roundtrip emulator algorithm combines Thorup and Zwick's technique \cite{TZ01} with the graph sparsification approach of \cite{CL21}.
The intuition is that, since the bottleneck of the generic Thorup-Zwick algorithm lies in computing single source shortest paths, a natural idea is to use \cite{CL21}'s approach to gradually sparsify the graph so that Dijkstra's algorithm can run faster.
More specifically,  recall that the Thorup-Zwick algorithm takes a sequence of nested vertex samples $S_1\subseteq \dots \subseteq S_k = V$ which serve as intermediate points for routing approximate shortest paths. In our case, these vertex samples also play the same role as in the graph sparsification approach described earlier, where short cycles going through these vertex samples can approximate the cycles we care about.  This results in a multi-round algorithm that interleaves graph sparsification steps and running Dijkstra from vertices of $S_i$ (with gradually increasing size) in $\tilde O(n^2)$ total time.
It is not obvious that the $(2k-1)$-stretch of Thorup-Zwick still holds after adding these graph sparsification steps, but it turns out the stretch analysis of Thorup-Zwick fits nicely with the cycle approximation arguments, and with a careful analysis we are still able to show $(2k-1)$ stretch when $k\ge 3$.

For some technical reason related to the sampling argument of Thorup-Zwick, we had to slightly simplify the graph sparsification techniques of \cite{CL21}, in order to avoid an undesirable extra logarithmic factor in the sparsity bound of our roundtrip emulator. See the discussion in \cref{remark:remarkresampling} and the proof of \cref{lem:bunch-size}.

\paragraph{$4$-Approximation of Girth in $\Tilde{O}(mn^{1/3})$ Time}

Our algorithm vastly extends the technique of the $2$-approximate girth algorithm in $\Tilde{O}(\min\{n^2, m\sqrt{n}\})$ time by Chechik and Lifshitz \cite{CL21}.
In the $2$-approximation algorithm, one takes a sample $S$ of size $O(\sqrt{n})$ and uses in/out Dijkstra's to exactly compute the shortest cycle going through every $s\in S$.
Then using $S$ as eliminators, compute for every vertex $v\in V$ a pruned vertex set $B(v)$ of size $O(\sqrt{n})$, and search for short cycles from $v$ on $G[B(v)]$.
A natural attempt to improve the running time is to generalize this framework to multiple levels: take a sequence of vertex samples of increasing sizes $S_1,\dots, S_{k-1},S_k=V$ and compute a sequence of pruned vertex subsets $V=B_1(v),B_2(v),\dots, B_{k}(v)$ of decreasing sizes for every $v$, so that one can run Dijkstra from/to every vertex in $S_i$ on $G[B_i(v)]$ 
in $\tilde O(mn^{1/k})$ time.
However, it is unclear how to do this since one can no longer check the condition \cref{eq:obs-condition} due to not having all the distance information from/to every vertex $s\in S_i$, and thus we cannot compute the sets $B_i(v)$ as desired.

 In this work we are able to implement the above plan for $k=3$, obtaining a $4$-approximation girth algorithm in $\tilde O(mn^{1/3})$ time. 
We deal with the problem of not having enough distance information to compute $B_3(v)$ by using a certain distance underestimate obtained from the distance information from $S_1$, and enforcing a stricter set of requirements on the vertices that we explore,  so that we always have their distance information available.
We also apply more novel structural lemmas about cycle approximation that extend the key observation \cref{lem:observation} of \cite{CL21} in various ways, which may be of independent interest.
As a result, our $4$-approximation algorithm becomes more technical than the previous $2$-approximation algorithm in $\tilde O(m\sqrt{n})$ time.

Here, we highlight the key structural lemma (\cref{lem:filter-lemma}) that enabled us to overcome the above described difficulty. It is illustrated in the following \cref{fig:filter-lem-intuition}, which can be viewed as an extension of \cref{lem:observation} from $3$ vertices to $4$ vertices.
As illustrated, if there exists some vertex $r_2$ that is in a short cycle with $v$ but not in a short cycle with $u$, then we can find some vertex $r_1$ such that the cycle $v \leadsto r_2 \leadsto r_1 \leadsto u \leadsto v$ (highlighted in green) can approximate the shortest cycle passing through $u$ and $v$ (the cycle in red). Then similar to how we can use \cref{lem:observation}, we can ignore the vertex $u$ in our search for the shortest cycle passing through $v$.

\begin{figure}[H]
    \centering
    \tikzset{vtx/.style = {circle, draw, fill=black, inner sep=0pt, minimum width=4pt},>={Latex[width=1.5mm,length=1.5mm]}}
    \begin{tikzpicture}
    \node[vtx] (v) at (0,0) {};
    \node[vtx] (u) at (1.5,1.5){};
    \node[vtx] (r2) at (-1.5,1.5){};
    \node[vtx] (r1) at (0,3){};

    \node at (0,-0.3) {$v$};
    \node at (1.8, 1.5) {$u$};
    \node at (-1.8, 1.5) {$r_2$};
    \node at (0,3.3) {$r_1$};

    \draw[green!50, line width=4pt] (v) to[bend left] (r2);
    \draw[green!50, line width=4pt] (r2) to[bend left] (r1);
    \draw[green!50, line width=4pt] (r1) to[bend left] (u);
    \draw[green!50, line width=4pt] (u) to[bend left] (v);

    \draw[->](r1) to[bend left] (r2);
    \draw[->](r2) to[bend left] (r1);

    \draw[->, red, very thick](v) to[bend left] (u);
    \draw[->, red, very thick](u) to[bend left] (v);

    \draw[->, very thick](v) to[bend left] (r2);
    \draw[->, very thick](r2) to[bend left] (v);

    \draw[->,thick, dashed](r2) to[out=15, in=165] (u);
    \draw[->,thick,dashed](u) to[out=-165, in=-15] (r2);
    
    \draw[->](r1) to[bend left] (u);
    \end{tikzpicture}
    \caption{If there exists a vertex $r_2$ that is in a short cycle with $v$ but not in a short cycle with $u$, then we can find a vertex $r_1$ such that the cycle passing through $v \leadsto r_2 \leadsto r_1 \leadsto u$ (the cycle highlighted in green) can approximate the shortest cycle passing through $u$ and $v$ (the cycle in red).}
    \label{fig:filter-lem-intuition}
\end{figure}
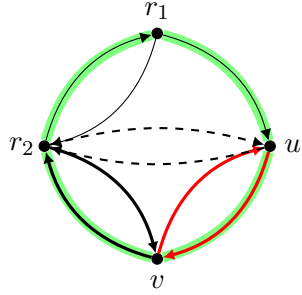

Furthermore, we note that we had to introduced a number of technicalities and a new structural theorem just to implement our proposed generalization for $k = 3$. So it is entirely unclear how to further generalize this approach for $k \ge 4$. Moreover, even if one successfully implements the proposed generalization naively, one would only obtain a $2^{k-1}$-approximation in $\Tilde{O}(mn^{1/k})$ time, which is far from being desirable.

\section{3-Roundtrip Spanner}\label{sec:3-spanner}

In this section, we present our algorithm for constructing a $3$-roundtrip spanner with $O(n^{3/2})$ edges in time $\Tilde{O}(m\sqrt{n})$ (\cref{thm:3roundtrip-main}).
Our algorithm closely follows the previous $\tilde O(n^2)$-time $4$-roundtrip spanner algorithm by Chechik and Lifshitz \cite{CL21}, but we use a more careful analysis to improve the stretch from $4$ to $3$.

\subsection{Algorithm and stretch analysis}

Our algorithm (see pseudocode in \cref{alg:3-spanner}) has a similar structure as in \cite{CL21}: 
We iteratively sample vertex subsets $S_i\subseteq V$ with geometrically increasing expected sizes $\E[|S_i|]$ up to $\sqrt{n}$.
In each iteration $i$, we add the shortest path trees from/to every $s\in S_i$ into the spanner, and sparsify the input graph $G$ using the method of \cite{CL21} based on $S_i$ (\cref{line:8} -- \cref{line:11}). 
Finally, we are able to sparsify the graph to contain only $O(n^{3/2})$ edges in expectation, and we will add these remaining edges to the spanner. 
Over all iterations, we add a total of $2n\cdot O(\sqrt{n}) = O(n^{3/2})$ edges to the spanner, and we only run $O(\sqrt{n})$ instances of Dijkstra which take $\Tilde{O}(m\sqrt{n})$ total time.

The main difference from \cite{CL21} lies in the sparsification rule at \cref{line:checkineq}. Our rule is based on comparing distances in the original input graph $G$, while Chechik and Lifshitz's rule was based on distances in the sparsified graph $G_i$.

\begin{algorithm}[h]

\caption{\label{alg:3-spanner}$\textsc{3-roundtrip-spanner}(G)$}
\KwIn{A weighted directed graph $G = (V,E)$}
\KwOut{a 3-roundtrip spanner $H\subseteq G$}

 $H \gets (V(G), \varnothing)$\\
 $G_0\gets G$\\

 Let $\Delta := \lceil \log_{3/2} \sqrt{n} \rceil$, and $\alpha:= (\sqrt{n})^{1/\Delta}$. \label{line:defndeltaalpha} \tcp{$\alpha\in [5/4,3/2]$ when $\sqrt{n}\ge 2$}

 \For{$i \gets 0,1,\dots, \Delta - 1$} {
  
  Sample $S_i\subseteq V$ by including each vertex with probability $\alpha^i/n$ independently
  \label{line:spannersample}
  \\

  Compute $d_G(s,v), d_G(v,s)$ for all $s\in S_i$ and $v\in V$ using Dijkstra \label{line:dij}\\

  Add to $H$ the shortest path trees in $G$ from/to every vertex in $s\in S_i$ \label{line:add-tree-in-original-graph} \\

  $G_{i+1}\gets G_i$
  \label{line:8}
  \\
  
  \For{$(x,y), (x,s)\in E(G_i)$ such that $s\in S_i$} {
        
        \If{$2d_G(x,s)+d_G(s,y)\le 2\wt(x,y) + d_G(y,s)$ \label{line:checkineq}} 
        {Remove the edge $(x,y)$ from $G_{i+1}$
  \label{line:11}
        }
    
    }

 }

$H\gets H\cup E(G_{\Delta})$ \label{line:addlast}

\Return $H$

\end{algorithm}

\begin{remark}
    \label{remark:remarkresampling}
    Readers familiar with \cite{CL21} may notice some other technical differences between \cref{alg:3-spanner} and \cite{CL21}: in order to remove a $\log n$ factor from the spanner size, Chechik and Lifshitz \cite{CL21} had to resample $S_i$ in case it is ``unsuccessful'' (i.e., \cref{line:11} did not remove sufficiently many edges), whereas our \cref{alg:3-spanner} achieves the same goal  without resampling.  
    Another difference is that we fix the sample rate of each iteration $i$ at \cref{line:spannersample}, while \cite{CL21} determines sample rate based on the current size $|E(G_i)|$.

These modifications are not essential for obtaining this $3$-spanner result. In particular, our algorithm is equivalent to simply sampling $\sum_{i = 0}^{\Delta-1}|S_i| = O(\sqrt{n})$ vertices all at once. Nonetheless, we present it in this way because it leads to cleaner implementation and analysis. Furthermore, it will be useful later for our emulator algorithm in \cref{sec:emulator} (where we require $S_i$ to be uniformly and independently sampled).

\end{remark}

Now we prove the stretch of the spanner constructed by \cref{alg:3-spanner}.
Our proof mostly follows \cite{CL21}; the key difference is that \cite{CL21} estimated $d_H(u\lr v)$ as a whole, while our improvement comes from separately estimating $d_H(u,v)$ and $d_H(v,u)$ and combine them to obtain an upper bound for $d_H(u \lr v)$.

\begin{lemma}
    \label{lem:stretch3}
For any two vertices $u,v\in V$,
\[d_H(u,v)\leq 2d_G(u,v) + d_G(v,u).\]
As a consequence,  $d_H(u\lr v)\leq 3d_G(u \lr v)$ for any $u,v\in V$.
\end{lemma}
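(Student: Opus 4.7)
The plan is to prove the one-way inequality $d_H(u,v)\le 2d_G(u,v)+d_G(v,u)$ by induction on the number of edges in a shortest $u\to v$ path in $G$; the roundtrip consequence $d_H(u\lr v)\le 3d_G(u\lr v)$ then follows by adding this bound to its symmetric counterpart (swapping the roles of $u$ and $v$). The base case $u=v$ is immediate. For the inductive step, I would let $(u,w)$ be the first edge of a shortest $u\to v$ path in $G$ and split on whether $(u,w)$ belongs to the spanner $H$.

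If $(u,w)\in H$, I write $d_H(u,v)\le \wt(u,w)+d_H(w,v)$ and apply the induction hypothesis to $(w,v)$ (whose shortest path has one fewer edge), giving $d_H(w,v)\le 2d_G(w,v)+d_G(v,w)$. Using the identity $\wt(u,w)+d_G(w,v)=d_G(u,v)$ (since $(u,w)$ lies on a shortest $u\to v$ path) and the triangle bound $d_G(v,w)\le d_G(v,u)+\wt(u,w)$ then yields the desired inequality after routine simplification.

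If $(u,w)\notin H$, then in particular $(u,w)\notin E(G_\Delta)$, so the edge was removed in some iteration $i$ by an eliminator $s\in S_i$ with $(u,s)\in E(G_i)$ satisfying $2d_G(u,s)+d_G(s,w)\le 2\wt(u,w)+d_G(w,s)$. Because \cref{line:add-tree-in-original-graph} of \cref{alg:3-spanner} adds the in/out shortest path trees of $s$ computed \emph{in the original graph $G$} (not in $G_i$), I have $d_H(u,s)=d_G(u,s)$ and $d_H(s,v)=d_G(s,v)$. The key step, and the source of the improvement from stretch $4$ to stretch $3$, is to route $u\to s\to v$ directly via these trees, rather than detouring through $w$ and recursing. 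This gives $d_H(u,v)\le d_G(u,s)+d_G(s,v)\le d_G(u,s)+d_G(s,w)+d_G(w,v)$. The eliminator condition combined with the triangle inequality $d_G(w,s)\le d_G(w,u)+d_G(u,s)$ bounds $d_G(u,s)+d_G(s,w)\le 2\wt(u,w)+d_G(w,u)$, and one more triangle bound $d_G(w,u)\le d_G(w,v)+d_G(v,u)$ delivers the target $d_H(u,v)\le 2\wt(u,w)+2d_G(w,v)+d_G(v,u)=2d_G(u,v)+d_G(v,u)$.

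The main obstacle I anticipate is precisely the routing choice in the second case. A naive attempt of routing $u\to s\to w\to v$ and inductively bounding $d_H(w,v)$ only yields $d_H(u,v)\le 2d_G(u,v)+d_G(w,u)+d_G(v,w)$, which is insufficient since $d_G(w,u)+d_G(v,w)\ge d_G(v,u)$ by the triangle inequality (going the wrong way). Bypassing $w$ entirely and instead peeling off $d_G(w,u)$ via triangle inequality is what absorbs the reverse-direction penalty into the $d_G(v,u)$ term. This is the one-way separation alluded to in the overview: whereas \cite{CL21} estimated $d_H(u\lr v)$ as a single roundtrip quantity, decoupling the two directions enables this tighter cancellation.
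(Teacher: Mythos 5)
Your proof is correct and uses the same key calculation as the paper: route $u\to s\to v$ via the shortest-path trees computed in the original graph $G$, then combine the eliminator inequality with a triangle bound that absorbs the reverse-direction penalty into $d_G(v,u)$. The only difference is presentational: you wrap the argument in an induction along the path, whereas the paper argues in a single step by picking an arbitrary edge $(x,y)$ of the shortest $u$--$v$ path that gets removed at some iteration and writing $d_H(u,v)\le d_G(u,x)+d_G(x,s)+d_G(s,y)+d_G(y,v)$ (your Case~2 is exactly this specialized to $x=u$, and the induction then becomes unnecessary).
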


\begin{proof}
Let $P$ denote the shortest path from $u$ to $v$ in $G$. If $P$ is completely contained in the final $G_{\Delta}$, then by \cref{line:addlast} clearly $d_H(u,v) = d_G(u,v)$ and we are done. 
For the remaining case, consider any iteration $i$ in which some edge $(x,y)$ of $P$ is removed from $G_{i+1}$ at \cref{line:11}.
By \cref{line:checkineq}, there is a vertex $s\in S_i$ such that 
\[2d_G(x,s)+d_G(s,y)\le 2\wt(x,y)+d_G(y,s),\]
which means
\begin{align}
   d_G(x,s)+d_G(s,y)&\le 2\wt(x,y) + d_G(y,s) -d_G(x,s)\nonumber \\
   & \le 2\wt(x,y) + d_G(y,x). \label{eqn:usv}
\end{align}

Since $H$ contains the shortest path trees in $G$ from $s$ and to $s$ (by \cref{line:add-tree-in-original-graph}), we have
\begin{align*}
    d_H(u, v) &\le d_H(u , s) + d_H(s , v)\\ &= d_G(u , s) + d_G(s , v)\\
    & \le d_G(u,x)+d_G(x,s)+d_G(s,y)+d_G(y,v)\\
    & \le d_G(u,x)+2\wt(x,y)+d_G(y,x)+d_G(y,v) \tag{by \cref{eqn:usv}}.
\end{align*}
Then, using $d_G(y,x)\le d_G(y,v)+d_G(v,u)+d_G(u,x)$, we immediately obtain
\begin{align*}
d_H(u,v) &\le 2\big (d_G(u,x)+\wt(x,y)+d_G(y,v)\big )+d_G(v,u)\\ 
& = 2d_G(u,v)+d_G(v,u). \qedhere
\end{align*}
\end{proof}

\subsection{Analysis of sparsity and running time}

Now we analyze the expected size of $H$ and the running time of \cref{alg:3-spanner}. 
We first prove the following lemma that bounds the expected number of edges in $G_i$. From now on we use $m_i:= |E(G_i)|$. Recall from \cref{line:defndeltaalpha} that $\Delta = \lceil \log_{3/2} \sqrt{n} \rceil$, $\alpha= (\sqrt{n})^{1/\Delta}$, and note that $\alpha\in [5/4,3/2]$ when $\sqrt{n}\ge 2$.

\begin{lemma}\label{lem:expected-sparsity}
    For $i = 0,\dots, \Delta$, we have
    \[\E[m_i]\le   2 n^2/\alpha^i.\]
\end{lemma}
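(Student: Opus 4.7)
My plan is to prove \cref{lem:expected-sparsity} by induction on $i$, analyzing the sparsification step in each iteration via the symmetry argument described in \cref{subsec:previous-work}. The base case $i = 0$ is immediate since $\E[m_0] = m \le n^2 \le 2n^2 = 2n^2/\alpha^0$.

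For the inductive step, I would condition on $G_i$ and bound the expected out-degree of each vertex $x \in V$ in $G_{i+1}$ individually. Fix such a vertex $x$ and let $N := N^+_{G_i}(x)$. I would define a tournament on $N$ by declaring that $s$ \emph{beats} $y$ whenever
\[2\wt(x, s) + d_G(s, y) \le 2\wt(x, y) + d_G(y, s),\]
breaking ties via a fixed total ordering on $V$. This condition is symmetric under swapping $s$ and $y$, so for every pair $\{s, y\} \subseteq N$ exactly one of them beats the other. Moreover, since $(x, s) \in E(G_i) \subseteq E(G)$ guarantees $d_G(x, s) \le \wt(x, s)$, whenever $s \in S_i$ and $s$ beats $y$ in this tournament, the condition at \cref{line:checkineq} of \cref{alg:3-spanner} is also satisfied, so $(x, y)$ is removed from $G_{i+1}$. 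Letting $B(y) \subseteq N$ denote the set of vertices that beat $y$ (including $y$ itself by convention), a vertex $y$ survives iteration $i$ only if $S_i \cap B(y) = \varnothing$, which happens with probability at most $(1 - p_i)^{|B(y)|}$, where $p_i = \alpha^i/n$.

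The central structural claim I would establish is that $|A_j| \le 2j - 1$ for every $j \ge 1$, where $A_j := \{y \in N : |B(y)| \le j\}$. This will follow from a counting argument on the subtournament induced on $A_j$: every $y \in A_j$ has in-degree at most $j - 1$ there, while the subtournament contains $\binom{|A_j|}{2}$ directed edges in total, giving $|A_j|(j - 1) \ge \binom{|A_j|}{2}$ and hence $|A_j| \le 2j - 1$. With this in hand, applying Abel summation to
\[\E\bigl[|N^+_{G_{i+1}}(x)| \mmid G_i\bigr] \le \sum_{y \in N} (1 - p_i)^{|B(y)|} = p_i \sum_{j \ge 1} |A_j|\,(1 - p_i)^j\]
yields a per-vertex bound of order $1/p_i = n/\alpha^i$. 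Summing over $x \in V$ and taking expectation over $G_i$ then gives $\E[m_{i+1}] = O(n^2/\alpha^i)$, from which the statement of the lemma will follow after careful tracking of constants.

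The main obstacle I anticipate is precisely this bookkeeping of constants: a direct execution of the Abel-summation step yields an expected surviving out-degree of roughly $(1 - p_i)(2/p_i - 1) \approx 2/p_i$, and pinning down the $\alpha$-factor so that the final bound lands exactly on $2n^2/\alpha^{i+1}$ (rather than a looser form with $\alpha^i$ in the denominator) requires slightly more delicate handling. A secondary care point is verifying that the symmetric form of the inequality is truly what drives the tournament argument, since the algorithm itself checks the weaker condition with $d_G(x,s)$ in place of $\wt(x,s)$; the argument above sidesteps this by noting that the symmetric version only makes the elimination easier, so the survival upper bound it provides is still valid.
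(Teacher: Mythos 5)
Your proposal takes a genuinely different route from the paper's. Both proofs start from the symmetry of the elimination rule restricted to out-neighbors of a fixed vertex $x$, but the paper parameterizes by the elimination counts $e_v$ (number of out-neighbors eliminating $v$, including $v$ itself), notes $1\le e_v\le \deg_i(x)$ and $\sum_v e_v \ge \deg_i(x)(\deg_i(x)+1)/2$, and invokes convexity of $t\mapsto(1-p_i)^t$ to obtain a \emph{degree-dependent} bound $\E[\deg_{i+1}(x)\mid G_i]\le\frac{\deg_i(x)}{2}\big(1+e^{-p_i\deg_i(x)}\big)$. Multiplying by $p_{i+1}=\alpha p_i$ turns this into a linear recursion $\E[p_{i+1}\deg_{i+1}(x)]\le\frac{\alpha}{2}\big(\E[p_i\deg_i(x)]+1\big)$, whose fixed point $\alpha/(2-\alpha)$ is finite since $\alpha<2$; this is where the small explicit constant comes from. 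You instead prove a clean, degree-\emph{independent} worst-case estimate: your combinatorial claim $|A_j|\le 2j-1$ for the beating tournament together with Abel summation gives $\E[\deg_{i+1}(x)\mid G_i]\le(1-p_i)(2/p_i-1)$ uniformly, regardless of $\deg_i(x)$. This is correct and self-contained, and it does yield $\E[m_{i+1}]\le 2n^2/\alpha^i=2\alpha\cdot n^2/\alpha^{i+1}$. However, the remaining factor of $\alpha$ you flag at the end is \emph{not} just a bookkeeping issue: since your bound does not tighten as $\deg_i(x)$ shrinks, there is no recursion to iterate, and $(1-p_i)(2/p_i-1)\to 2/p_i$ as $p_i\to 0$, so no amount of careful constant-tracking inside this argument recovers the extra $\alpha^{-1}$. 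You would have to additionally impose $|A_j|\le\deg_i(x)$ and set up the same recursion, at which point the argument reduces to the paper's. That said, the weaker bound $\E[m_i]\le 2\alpha\,n^2/\alpha^i\le 3n^2/\alpha^i$ that your approach delivers is entirely sufficient for the downstream $O(n^{3/2})$ sparsity and $\tilde O(m\sqrt n)$ runtime claims, so if one is willing to restate the lemma with a slightly larger absolute constant, your tournament-plus-Abel route is a valid and arguably cleaner alternative.
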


\begin{proof}

In the $i$-th iteration, we sample $S_i \subseteq V$ by including each vertex independently with probability $p_i:= \alpha^i/n$. In the following, we focus on a particular vertex $x\in V$, and let $\deg_i(x)=|N_{G_i}(x)|$ denote the out-degree of $x$ in $G_i$. 

For any two out-neighbors $v_s,v_y\in N_{G_i}(x)$,  we say $v_s$ \emph{eliminates} $v_y$, if the inequality at \cref{line:checkineq} holds for $(s,y):=(v_s,v_y)$. Observe that the inequality at \cref{line:checkineq} is (essentially) symmetric with respect to $y$ and $s$, and one immediately observes that for any two $v,v'\in N_{G_i}(x)$ (possibly $v=v'$), either $v$ eliminates $v'$, or $v'$ eliminates $v$.\footnote{ In more detail, by symmetry we can pick $(s,y):= (v,v')$ or $(v',v)$ to satisfy $2d_G(x,s)+d_G(s,y)\le 2d_G(x,y) + d_G(y,s)$. Then, the inequality at \cref{line:checkineq} holds due to $d_G(x,y ) \le \wt(x,y)$.} Then, \cref{line:11} indicates that, for any $v_s,v_y\in N_{G_i}(x)$, if $v_s\in S_i$ and $v_s$ eliminates $v_y$,  then $v_y \notin N_{G_{i+1}}(x)$. Therefore, $\deg_{i+1}(x)$ is the number of out-neighbors of $x$ that are not eliminated by anyone from $S_i$.

For every $v\in N_{G_i}(x)$, let $e_v$ denote the number of $v'\in N_{G_i}(x)$ that eliminates $v$ (including $v$ itself).
We have 
\begin{equation}
    \label{eqn:bound}
1\le e_v\le \deg_i(x)
\end{equation}
and 
\begin{equation}
    \label{eqn:sum}
\frac{1}{|N_{G_i}(x)|}\sum_{v\in N_{G_i}(x)}e_v = \frac{\deg_i(x)+1}{2}.
\end{equation}
Then, over a uniformly independently sampled set $S_i$ of eliminators, we analyze the expected number of out-neighbors of $x$ that are not eliminated, as follows:\begin{align*}
   \E_{S_i}[\deg_{i+1}(x) \mid G_i] &= \sum_{v\in N_{G_i}(x)} (1-p_i)^{e_v}\\
   & \le \frac{|N_{G_i}(x)|}{2}\cdot  (1-p_i)^{1} + \frac{|N_{G_i}(x)|}{2} \cdot (1-p_i)^{\deg_i(x)} \tag{by convexity of $f(x) = (1-p_i)^x$, and \cref{eqn:bound,eqn:sum}}\\
   & = \frac{\deg_i(x)}{2}\cdot (1-p_i + (1-p_i)^{\deg_i(x)})\\
   & \le \frac{\deg_i(x)}{2}\cdot (1+e^{-p_i\deg_i(x)}).
\end{align*}
Multiplying both sides by $p_{i+1}$, 
\begin{align*}
    \E_{S_i}[p_{i+1}\deg_{i+1}(x)\mid G_i] &\le \frac{p_{i+1}\deg_i(x)}{2}\cdot (1+e^{-p_i\deg_i(x)})\\
    & = \frac{\alpha}{2}(p_{i}\deg_i(x)+p_{i}\deg_i(x)e^{-p_i\deg_i(x)}) \tag{by $p_i= \alpha^i/n$}\\
    & < \frac{\alpha}{2} (p_i\deg_i(x) + 1). 
\end{align*}
Hence,
\begin{align*}
    \E[p_{i+1}\deg_{i+1}(x)]  & \le \frac{\alpha}{2}(\E [p_i\deg_i(x)] + 1).
\end{align*}
Since $0\le p_0\deg_0(x) \le \frac{1}{n}\cdot (n-1)< 1$, by induction we obtain $\E[p_i\deg_i(x)] < \alpha/(2-\alpha)$ for all $i$ (recall $\alpha\le 3/2<2$). Summing over all $x\in V$, we obtain
\[ \E[m_i]  = \frac{1}{2}\sum_{x\in V}\E[\deg_i(x)] \le \frac{n}{2}\cdot \frac{\alpha/(2-\alpha)}{p_i} = \frac{\alpha}{4-2\alpha}n^2/\alpha^{i}< 2n^2/\alpha^i. \qedhere \]
\end{proof}

Now we are ready to present the analysis of the sparsity of $H$ and the running time of our algorithm.

\paragraph{Sparsity}

For each iteration $i$ of \cref{alg:3-spanner}, by definition (\cref{line:spannersample}) we have expected sample size 
\[\E[|S_i|] =  \alpha^i,\]
and we add the shortest path trees from / to every vertex in $s\in S_i$ in $G$, which contain $|S_i|\cdot 2(n-1)$ edges. So summing over all iterations, the number of edges we add in expectation is at most (recall $\alpha\ge 5/4$)
\[\E\lbr{\sum_{i = 0}^{\Delta-1} |S_i|\cdot 2(n-1)} = 2(n-1)\cdot \sum_{i = 0}^{\log_{\alpha}\sqrt{n}\, -1} \alpha^i  = \frac{(2n-1)(\sqrt{n}-1)}{\alpha-1}<8n^{3/2}.\]

In the last step (\cref{line:addlast}), we add all the edges in $G_{\Delta}$ to $H$. By \cref{lem:expected-sparsity}, we have 
\[\E[|E(G_{\Delta})|]\le 2n^2/\alpha^\Delta = 2n^{3/2}.\]
Thus, in expectation we add $O(n^{3/2})$ edges to $H$ in total as desired. 

\paragraph{Running Time}

In each iteration $i$, the bottleneck is at \cref{line:dij} where we run $|S_i|$ instances of Dijkstra on $G$, each taking $O(m + n\log n)$ time. The sparsification steps (\cref{line:8} -- \cref{line:11}) can be implemented in $O(|S_i|\cdot |E(G_i)|) \le O(|S_i|\cdot m)$ time.
So in expectation the total time taken by the algorithm is bounded by 
\[\E\lbr{\sum_{i = 0}^{\Delta-1} |S_i|\cdot O(m+n\log n)} = O(m+n\log n) \sum_{i = 0}^{\log_{\alpha}\sqrt{n}\, -1} \alpha^i =O(m\sqrt{n}+n\sqrt{n}\log n).\]

\section{$(2k-1)$-roundtrip emulator in nearly quadratic time}\label{sec:emulator}

In this section, we give the construction of a $(2k-1)$-roundtrip emulator on $O(kn^{1+1/k})$ edges running in $O(kn^2\log n)$ time for $k\ge 3$ (\cref{thm:emulator-main}). Our algorithm does not work for $k=2$. (For $k = 2$, our $3$-roundtrip spanner algorithm from \cref{sec:3-spanner} has $\tilde O(m\sqrt{n})$ time complexity, which is slower than $\tilde O(n^2)$ for any nontrivial input size $m\gg n^{1.5}$.)

\subsection{Algorithm}

Our algorithm carefully combines ideas from Thorup-Zwick distance oracle \cite{TZ01} and the graph sparsification technique introduced in \cite{CL21}. 
The pseudocode of our algorithm is given in \cref{alg:emulator}. The main body contains $(k-1)\Delta = \Theta(\log n)$ iterations (indexed by $i=r\Delta + t$), divided into $(k-1)$ rounds (indexed by $r\in \{0,\dots,k-2\}$), where each round consists of $\Delta$ iterations (indexed by the inner loop variable $t\in \{0,\dots,\Delta-1\}$).
The $i$-th iteration samples a vertex subset $S_i$, whose expected size $\E[|S_i|]$ gradually increases from $1$ in the $0$-th iteration to $\Theta(n^{(k-1)/k})$ in the last iteration.
In each iteration we run in/out-Dijkstra from every sampled vertex $s\in S_i$ on the current (sparsified) graph $G_i \subseteq G$.
Using the obtained distance information from/to $S_i$,
we not only perform the graph sparsification steps (\cref{line:emulator:14}--\cref{line:emulator:17}) as in \cite{CL21}, but also compute pivots $p_i(u)\in S_i$ and bunches $B_i(u)\subseteq S_i$ used in Thorup and Zwick's algorithm \cite{TZ01} (in the roundtrip metric) and adds edges to the emulator $H$ accordingly (\cref{line:defnpivot} -- \cref{line:addbunch}).
The main complication compared to \cite{TZ01} is that we now have a sequence of (gradually sparsified) graphs $G_i$ involved rather than a single graph $G$, and the pivots $p_i(u)$ are defined using the distances on the current graph $G_i$, while the bunches 
 $B_i(u)$ are defined with respect to the pivot $p_{r\Delta-1}(u)$ on the graph $G_{r\Delta-1}$ from the \emph{previous round} of the outer loop $r$. 

By our parameter setting, we expect each round in the outer loop to roughly decrease  the size of the current graph by a factor of $n^{1/k}$.
After running all $(k-1)$ rounds, we can show the remaining graph $G_{(k-1)\Delta}$ has $O(n^{1+1/k})$ edges in expectation, and we add all of them to the emulator $H$.

\begin{algorithm}
\caption{\label{alg:emulator}$(2k-1)\textsc{-Emulator}(G)$ (for $k\ge 3$)}
\KwIn{A weighted directed graph $G = (V,E)$}
\KwOut{A $(2k-1)$-roundtrip emulator $H$ of $G$}
 $H \gets (V(G), \varnothing)$\\
 $G_{0} \gets G$\\
 Let $\Delta := \lceil \log_{3/2} n^{1/k} \rceil$, and $\alpha:= ({n}^{1/k})^{1/\Delta}$. \label{line:emulator:defndeltaalpha} \tcp{$\alpha\in [5/4,3/2]$ when $n^{1/k}\ge 2$}
 Let $G_{-1}=$ empty graph and $p_{-1}(u) := \bot $ for all $u\in V$.   \label{line:cornercase}
\tcp{$d_{G_{-1}}(u\lr p_{-1}(u)) = +\infty$.}
 \For{$r \gets 0,\dots,k-2$} {

 \For{$t \gets 0,\dots, \Delta-1$} {
 Let $i := r \Delta + t$\\
  Sample $S_{i}\subseteq V$ by including each vertex with probability $\alpha^i/n$ independently \label{line:emulator:sample}\\
  
    Compute $d_{G_i}(s,v),d_{G_i}(v,s)$ for all $s\in S_i$ and $v\in V$ using Dijkstra \label{line:emulator:dij}\\

    Define pivot $p_{i}(u) := \arg \min_{s\in S_{i}} d_{G_i}(u\lr s)$ for all $u\in V$ \label{line:defnpivot}\\
    
    Define bunch $B_i(u) := \{s \in S_{i}: d_{G_i}(u\lr s) < d_{G_{r\Delta-1}}(u \lr p_{r\Delta-1}(u))\}$. \label{line:defnbunch}\\
    
   \For{$u\in V, s\in \{p_i(u)\} \cup B_i(u)$}{
   
    Add edge $(u,s)$ with weight $d_{G_i}(u,s)$ and edge $(s,u)$ with weight $d_{G_i}(s,u)$ to $H$\label{line:addbunch}\\}
    
   $G_{i+1} \gets G_i$ \label{line:emulator:14}\\
   
   \For{$(x,y), (x,s)\in E(G_i)$ such that $s\in S_i$} {
        
        \If{$2d_{G_i}(x,s) + d_{G_i}(s,y)\le  2\wt(x,y) + d_{G_i}(y,s)$ \label{line:emulator-removal}} 
        {Remove the edge $(x,y)$ from $G_{i+1}$\label{line:emulator:17}}
    
    }

 }
 }
$H \gets H \cup G_{(k-1)\Delta}$ \label{line:emulator:lastline}\\
\Return{$H$}\\
\end{algorithm}

\subsection{Analysis of sparsity and running time}
We can without loss of generality assume $k\le \log n$, since otherwise we can run the algorithm for $k=\lfloor \log n\rfloor$ and still satisfy all the requirements.
 Recall from \cref{line:emulator:defndeltaalpha} that $\Delta = \lceil \log_{3/2} n^{1/k} \rceil$, $\alpha:= ({n}^{1/k})^{1/\Delta}$, and note that $\alpha\in [5/4,3/2]$.

\cref{alg:emulator} has $(k-1)\Delta = \log_{\alpha} n^{1-1/k}$ iterations. 
It has a similar structure as our earlier \cref{alg:3-spanner} for $3$-roundtrip spanner (except for the additional \cref{line:defnpivot} -- \cref{line:addbunch} here).
For each iteration $i = r\cdot \Delta + t$ where $r \in\{ 0,\dots, k-2\}$ and $t\in \{ 0,1,\dots, \Delta-1\}$, by \cref{line:emulator:sample} we have \[\E[|S_i|] = \alpha^i.\]
Similar to the analysis of our $3$-roundtrip spanner algorithm, we have the following lemma on the expected number edges $m_i:=|E(G_i)|$.

\begin{lemma}\label{lem:emulator-expected-sparsity}
In \cref{alg:emulator},    for $0\le i\le (k-1)\Delta$ we have
    \[\E[m_i]\le 2n^2/\alpha^i.\]
\end{lemma}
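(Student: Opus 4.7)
The plan is to mimic the proof of \cref{lem:expected-sparsity} from the $3$-spanner section, noting that the sparsification step (\cref{line:emulator:14}--\cref{line:emulator:17}) of \cref{alg:emulator} has the same structure as that of \cref{alg:3-spanner}, except that the distance comparison is now performed inside $G_i$ rather than the original $G$. The key observation enabling us to reuse the analysis is that $S_i$ is still sampled uniformly and independently at \cref{line:emulator:sample}, with each vertex included with probability $p_i := \alpha^i/n$, and that this sampling is independent of $G_i$ (which is entirely determined by $S_0,\dots,S_{i-1}$).

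I would fix a vertex $x\in V$, condition on $G_i$, and study $\deg_i(x):=|N_{G_i}(x)|$. First I would re-verify the symmetry of the elimination rule in this new setting: for any two out-neighbors $v,v'\in N_{G_i}(x)$, at least one of $v,v'$ eliminates the other. Indeed, by symmetry we may assume
\[
2d_{G_i}(x,v)+d_{G_i}(v,v') \le 2d_{G_i}(x,v')+d_{G_i}(v',v),
\]
and since $(x,v')\in E(G_i)$ implies $d_{G_i}(x,v')\le \wt(x,v')$, substituting $d_{G_i}(x,v')$ by $\wt(x,v')$ on the right yields exactly the condition at \cref{line:emulator-removal} with $(s,y):=(v,v')$. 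This uses only that $(x,v)$ and $(x,v')$ are edges of $G_i$, which is the definition of $N_{G_i}(x)$.

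With the symmetry in hand, for each $v\in N_{G_i}(x)$ let $e_v$ be the number of $v'\in N_{G_i}(x)$ (including $v$ itself) that eliminate $v$. Then $1\le e_v\le \deg_i(x)$ and $\sum_{v}e_v = \binom{\deg_i(x)+1}{2}$, giving average $e_v$ equal to $(\deg_i(x)+1)/2$. Since $S_i$ is sampled independently of $G_i$, a vertex $v$ survives into $N_{G_{i+1}}(x)$ iff none of the $e_v$ eliminators of $v$ lands in $S_i$, which happens with probability $(1-p_i)^{e_v}$. I would then apply the convexity of $f(z)=(1-p_i)^z$ on $[1,\deg_i(x)]$ together with the mean constraint to bound
\[
\E_{S_i}[\deg_{i+1}(x)\mid G_i] \le \tfrac{\deg_i(x)}{2}\bigl(1-p_i+(1-p_i)^{\deg_i(x)}\bigr) \le \tfrac{\deg_i(x)}{2}\bigl(1+e^{-p_i\deg_i(x)}\bigr),
\]
exactly as in the $3$-spanner analysis.

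Finally, I would multiply both sides by $p_{i+1}=\alpha\cdot p_i$, use $z e^{-z}<1$ for $z\ge 0$ to get $\E[p_{i+1}\deg_{i+1}(x)\mid G_i] < \tfrac{\alpha}{2}\bigl(p_i\deg_i(x)+1\bigr)$, and take total expectation. Since $p_0\deg_0(x)<1$, induction on $i$ (using $\alpha\le 3/2<2$) yields $\E[p_i\deg_i(x)] < \alpha/(2-\alpha)$ for every $i\in\{0,\dots,(k-1)\Delta\}$. Summing over $x\in V$ and dividing by $2$ (each edge is counted twice in $\sum_x \deg_i(x)$), I obtain
\[
\E[m_i] \le \frac{n}{2}\cdot \frac{\alpha/(2-\alpha)}{p_i} = \frac{\alpha}{4-2\alpha}\cdot \frac{n^2}{\alpha^i} < \frac{2n^2}{\alpha^i},
\]
as claimed. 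The only subtle point, and the main place where I would need to be careful, is the independence between $S_i$ and $G_i$: because $G_i$ is built only from $S_0,\dots,S_{i-1}$ and the $S_j$ are independent samples, conditioning on $G_i$ leaves the distribution of $S_i$ intact, so the per-vertex survival probabilities $(1-p_i)^{e_v}$ are valid. No new ideas beyond those of \cref{lem:expected-sparsity} are needed.
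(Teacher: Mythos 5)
Your proposal is correct and matches the paper's approach exactly: the paper omits this proof, stating it is identical to that of \cref{lem:expected-sparsity}, and you verify precisely the two things that need rechecking, namely that the symmetric elimination argument still applies when the comparison is made with $d_{G_i}$ rather than $d_G$ (since $(x,v')\in E(G_i)$ gives $d_{G_i}(x,v')\le \wt(x,v')$), and that $S_i$ is sampled independently of $G_i$. One minor nit: the parenthetical justification ``each edge is counted twice in $\sum_x\deg_i(x)$'' is not correct for a directed graph in which $\deg_i(x)$ is the out-degree, where $\sum_x\deg_i(x)=m_i$ exactly; the factor $\tfrac12$ is inherited from the same slip in the paper's proof of \cref{lem:expected-sparsity}, and removing it only changes the constant from $2$ to $3$ in the final bound, which does not affect any downstream argument.
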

The proof of \cref{lem:emulator-expected-sparsity} is identical to the proof of \cref{lem:expected-sparsity} for \cref{alg:3-spanner}, and is omitted here.
Note that in \cref{alg:emulator}, \cref{line:defnpivot} -- \cref{line:addbunch} do not affect edges of $G_i$, and the remaining part of the algorithm is almost identical to \cref{alg:3-spanner} except that the number of iterations is changed from $\Delta$ to $(k-1)\Delta$ (and $\alpha$ is changed accordingly), and the sparsification rule (\cref{line:emulator-removal}) now depends on distances of $G_i$ instead of $G$.
These modifications do not affect the proof of \cref{lem:expected-sparsity}.

\paragraph{Running Time} 

Over all iterations of the inner \textbf{for} loop, for every $i = 0,\dots, (k-1)\Delta - 1$, the bottleneck is to run $|S_i|$ instances of in/out-Dijkstras on $G_i$ (\cref{line:emulator:dij}), each taking $O(m_i + n\log n)$ time.
The sparsification steps (\cref{line:emulator:14} -- \cref{line:emulator:17}) can be implemented in $O(|S_i|\cdot m_i)$ time.
Thus by \cref{lem:emulator-expected-sparsity}, the expected total running time of our algorithm can be bounded by (note that $S_i$ and $m_i$ are independent random variables) 
\begin{align*}
    \E[\sum_{i=0}^{(k-1)\Delta-1}|S_i|\cdot O(m_i+n\log n)] &\le  \sum_{i=0}^{(k-1)\Delta-1} \alpha^i\cdot O\big (2 n^2/\alpha^i + n\log n\big )\\
&= \sum_{i=0}^{(k-1)\Delta-1} \alpha^i\cdot O\big (2 n^2/\alpha^i \big )\\
    &= O(n^2\cdot (k-1)\Delta ) \\
    & = O(n^2\log n) .
\end{align*}

\paragraph{Sparsity}

Similar to in \cite{TZ01}, we first bound the expected size of the bunches defined in \cref{line:defnbunch}. As mentioned earlier in \cref{remark:remarkresampling}, here we rely on the property that the vertex samples $S_i$ are uniform and independent.

\begin{lemma}\label{lem:bunch-size}
    For each $i = r\cdot \Delta + t$ (where $r\in \{0,\dots, k-2\}, t \in\{ 0,\dots, \Delta-1\}$) and each vertex $u\in V$, we have
    \[\E[|B_i(u)|] = \alpha^{t+1}.\]
\end{lemma}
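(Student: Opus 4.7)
The plan is to adapt the classical Thorup–Zwick bunch-size argument to our multi-graph setting. Fix a vertex $u\in V$ and iteration $i=r\Delta+t$, and condition on all samples $S_0,\dots,S_{r\Delta-1}$ from previous rounds; these determine $G_{r\Delta-1}$, the pivot $p_{r\Delta-1}(u)$, and the threshold $T:=d_{G_{r\Delta-1}}(u\lr p_{r\Delta-1}(u))$. The first key observation is that $G_i\subseteq G_{r\Delta-1}$, so roundtrip distances only increase, i.e., $d_{G_i}(u\lr v)\ge d_{G_{r\Delta-1}}(u\lr v)$ for every $v$. Hence
\[B_i(u)\subseteq\{s\in S_i: d_{G_{r\Delta-1}}(u\lr s)<T\}=S_i\cap V^\circ,\]
where $V^\circ:=\{v\in V: d_{G_{r\Delta-1}}(u\lr v)<T\}$.

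Next, I order the vertices $v_1,\dots,v_n$ in non-decreasing $d_{G_{r\Delta-1}}(u\lr\cdot)$, breaking ties arbitrarily. Since $T=\min_{s\in S_{r\Delta-1}}d_{G_{r\Delta-1}}(u\lr s)$ is the distance to the closest $S_{r\Delta-1}$-sample along this ordering, we have $v_j\in V^\circ$ iff $\{v_1,\dots,v_j\}\cap S_{r\Delta-1}=\varnothing$. The crucial independence is that the ordering depends only on $S_0,\dots,S_{r\Delta-2}$, while $S_{r\Delta-1}$ and $S_i$ are independent uniform samples with rates $p:=\alpha^{r\Delta-1}/n$ and $p':=\alpha^i/n$, respectively. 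By linearity of expectation conditional on the ordering,
\[\E[|B_i(u)|]\le\E[|S_i\cap V^\circ|]=\sum_{j=1}^n p'\,(1-p)^j\le \frac{p'}{p}=\alpha^{t+1}.\]
For the base case $r=0$ (where $T=+\infty$ by the convention in \cref{line:cornercase}), $V^\circ=V$ and $B_i(u)=S_i$, giving $\E[|B_i(u)|]=\alpha^t\le \alpha^{t+1}$.

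The main obstacle I anticipate is establishing strict equality as stated, rather than the upper bound above. The inclusion $B_i(u)\subseteq S_i\cap V^\circ$ can be strict whenever edge removals stretch roundtrip distances in $G_i$ past the threshold $T$, and the finite geometric sum is bounded by $p'/p$ with equality only in the limit $n\to\infty$. For the base case $r=0$, the expected bunch size is genuinely $\alpha^t$, which is smaller than $\alpha^{t+1}$ by a factor of $\alpha\in[5/4,3/2]$. I expect the equality in the lemma to be interpreted in the ``at most'' sense, as is the custom in many bunch-size statements; this is all that is needed for the sparsity bound, since summing $\E[|B_i(u)|]\le \alpha^{t+1}$ over all $u$ and all $(k-1)\Delta$ iterations contributes a geometric series totalling $O(kn\cdot n^{1/k})=O(kn^{1+1/k})$ edges added to $H$ via \cref{line:addbunch}.
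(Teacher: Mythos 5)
Your proposal is correct and follows essentially the same route as the paper: bound $B_i(u)$ via the monotonicity $d_{G_i}\ge d_{G_{r\Delta-1}}$, sort vertices by $d_{G_{r\Delta-1}}(u\lr\cdot)$, exploit the independence of the uniform samples $S_{r\Delta-1}$ and $S_i$ conditioned on that ordering, and sum the geometric series to get $\alpha^i/\alpha^{r\Delta-1}=\alpha^{t+1}$. You are also right that the paper's own proof only establishes the upper bound $\E[|B_i(u)|]\le\alpha^{t+1}$, which is all that is used in \cref{cor:round-bunch-size}.
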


\begin{proof}
By definition of $B_i$ at \cref{line:defnbunch}, since $G_i\subseteq G_{r\Delta-1}$ and thus $d_{G_{r\Delta-1}}(\cdot,\cdot) \le d_{G_{i}}(\cdot,\cdot)$, we have
\begin{align*}
|B_i(u)| &= |\{s \in S_{i}: d_{G_i}(u\lr s) < d_{G_{r\Delta-1}}(u \lr p_{r\Delta-1}(u))\}|\\
& \le  |\{s \in S_{i}: d_{G_{r\Delta-1}}(u\lr s) < d_{G_{r\Delta-1}}(u \lr p_{r\Delta-1}(u))\}|.
\end{align*}
Sort all $v\in V$ in increasing order of $d_{G_{r\Delta-1}}(u\lr v)$. Then  $p_{r\Delta-1}(u) = \arg\min_{s\in S_{r\Delta-1}} d_{G_{r\Delta-1}}(u\lr s)$ is the first vertex in this ordering that is included in $S_{r\Delta-1}$, and $|B_i(u)|$ is bounded by the number of vertices included in $S_i$ that occur before $p_{r\Delta-1}(u) $ in this ordering. Since $S_{r\Delta-1}$ and $S_{i}$ are sampled uniformly and independently (conditioned on this ordering determined by $G_{r\Delta-1}$), the expected number of vertices included by $B_i(u)$ is at most
\[ \sum_{j=1}^n \frac{\alpha^i}{n}\cdot \Big (1-\frac{\alpha^{r\Delta-1}}{n}\Big )^j \le \frac{\alpha^i/n}{\alpha^{r\Delta-1}/n}=\alpha^{t+1}. \qedhere\]
\end{proof}

As a direct corollary, we can bound the expected total bunch size.

\begin{cor}\label{cor:round-bunch-size}
    \[ \E\lbr {\sum_{i=0}^{(k-1)\Delta-1}\sum_{u\in V}|B_i(u)|} \le O(kn^{1+1/k}). \]
\end{cor}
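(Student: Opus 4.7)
The plan is to simply sum the per-bunch bound from \cref{lem:bunch-size} over all $u\in V$ and all iterations $i$. By linearity of expectation,
\[ \E\lbr {\sum_{i=0}^{(k-1)\Delta-1}\sum_{u\in V}|B_i(u)|} = \sum_{i=0}^{(k-1)\Delta-1}\sum_{u\in V}\E[|B_i(u)|]. \]
Substituting the estimate $\E[|B_i(u)|] \le \alpha^{t+1}$ for $i = r\Delta + t$ from \cref{lem:bunch-size}, the right-hand side becomes
\[ n \cdot \sum_{r=0}^{k-2}\sum_{t=0}^{\Delta-1} \alpha^{t+1}. \]

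Next I would evaluate the inner geometric sum. Since $\alpha \in [5/4, 3/2]$, we have $\alpha/(\alpha-1) = O(1)$, so
\[ \sum_{t=0}^{\Delta-1} \alpha^{t+1} = \alpha \cdot \frac{\alpha^{\Delta}-1}{\alpha-1} \le O(\alpha^{\Delta}) = O(n^{1/k}), \]
using the definition $\alpha^{\Delta} = n^{1/k}$ from \cref{line:emulator:defndeltaalpha}. Summing this over the $(k-1)$ values of $r$ multiplies the bound by $k-1$, so the outer sum is $O(k n^{1/k})$, and therefore the total expectation is $n \cdot O(k n^{1/k}) = O(k n^{1+1/k})$, as desired.

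There is no real obstacle here; the statement is a direct corollary of \cref{lem:bunch-size} together with linearity of expectation and a one-line geometric series estimate. The only point worth double-checking is that the bound in \cref{lem:bunch-size} is indeed $\alpha^{t+1}$ (depending on $t$ only, not on $r$ or $i$), which is exactly what allows the outer sum over $r$ to contribute only a factor of $(k-1)$ rather than something growing with $i$.
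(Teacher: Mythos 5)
Your proposal is correct and takes essentially the same approach as the paper: apply linearity of expectation, invoke \cref{lem:bunch-size} to bound each $\E[|B_i(u)|]$ by $\alpha^{t+1}$, sum the geometric series over $t$ to get $O(n^{1/k})$ per round, and multiply by $k-1$ rounds and $n$ vertices. The only cosmetic difference is that you write the double sum out all at once while the paper first fixes $r$ and $u$; the substance is identical.
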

\begin{proof}
    For each $r \in \{ 0,\dots, k-2\}$, by \cref{lem:bunch-size} and linearity of expectation, we have     \[\E\lbr{\sum_{t = 0}^{\Delta-1} |B_{r \Delta + t}(u)|} = \sum_{t = 0}^{\log_{\alpha}n^{1/k}\, - 1}\alpha^{t+1} = O(n^{1/k})\]
    for each $u\in V$. Summing over all $r\in \{ 0,\dots, k-2\}$ and $u\in V$, 
    \[ \E\lbr {\sum_{i=0}^{(k-1)\Delta-1}\sum_{u\in V}|B_i(u)|}= \sum_{u\in V}\sum_{r = 0}^{k-2}\E\lbr{\sum_{t = 0}^{\Delta-1} |B_{r \Delta+t}(u)|} \le O(kn^{1+1/k}). \qedhere \]
\end{proof}

Now we can analyze the size of the emulator constructed by \cref{alg:emulator}.
\begin{lemma}
    The emulator $H$ returned by \cref{alg:emulator} has expected size
   \[ \E[|H|]  \le O(kn^{1+1/k}).\]
\end{lemma}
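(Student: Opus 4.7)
The plan is to decompose $|H|$ into two disjoint sources of edges and bound each separately using the tools already established in the preceding analysis. First, every iteration of the inner loop contributes edges added at \cref{line:addbunch}, namely (at most) $2(|B_i(u)|+1)$ edges per $u\in V$ per iteration (one pair for each $s\in \{p_i(u)\}\cup B_i(u)$, and up to two edges per such pair). Second, \cref{line:emulator:lastline} adds all edges of the final sparsified graph $G_{(k-1)\Delta}$. Both sources are already controlled by lemmas proved above.

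For the first source, I would split into pivots and bunches. The bunch contribution is $2\sum_{i=0}^{(k-1)\Delta-1}\sum_{u\in V}|B_i(u)|$, which by \cref{cor:round-bunch-size} has expectation at most $O(kn^{1+1/k})$. The pivot contribution is at most $2n\cdot (k-1)\Delta = O(n\log n)$ deterministically, since $\Delta = O((\log n)/k)$. To fold this into the target bound I would use that for the relevant regime $k\le \log n$, we have $kn^{1/k}\ge \min_k kn^{1/k} = \Theta(\log n)$, so $n\log n \le O(kn^{1+1/k})$ and the pivot contribution is absorbed.

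For the second source, I would apply \cref{lem:emulator-expected-sparsity} at $i=(k-1)\Delta$: since $\alpha^{(k-1)\Delta} = (n^{1/k})^{k-1} = n^{1-1/k}$, we get
\[\E[|E(G_{(k-1)\Delta})|]\le 2n^2/\alpha^{(k-1)\Delta} = 2n^{1+1/k},\]
which is comfortably within the $O(kn^{1+1/k})$ budget. Summing the two contributions by linearity of expectation then yields $\E[|H|]\le O(kn^{1+1/k})$.

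The proof is essentially a bookkeeping exercise; the genuinely substantive work has already been done in \cref{lem:emulator-expected-sparsity} (bounding $\E[m_i]$ via the graph sparsification analysis) and \cref{cor:round-bunch-size} (bounding total bunch size via the Thorup--Zwick sampling argument adapted to the sequence of graphs $G_{r\Delta-1}$). The only mild subtlety is confirming that the naive $O(n\log n)$ pivot term does not blow the bound, which is handled by the elementary inequality $kn^{1/k}=\Omega(\log n)$ noted above.
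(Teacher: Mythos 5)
Your proposal is correct and follows essentially the same decomposition as the paper: bound the bunch/pivot contributions at \cref{line:addbunch} via \cref{cor:round-bunch-size} and the residual graph $G_{(k-1)\Delta}$ via \cref{lem:emulator-expected-sparsity}. You are in fact slightly more careful than the paper, which silently folds the $O(n(k-1)\Delta)=O(n\log n)$ pivot contribution into the $O(kn^{1+1/k})$ bound without noting the inequality $kn^{1/k}\ge e\ln n$ that you make explicit.
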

\begin{proof}
    By \cref{cor:round-bunch-size}, the total number edges added at \cref{line:addbunch} has expectation at most
    \[ \sum_{i=0}^{(k-1)\Delta-1}\sum_{u\in V}2(|B_i(u)|+1) \le O(kn^{1+1/k}).\]

In the end at \cref{line:emulator:lastline}, we add all the edges in $G_{(k-1)\Delta}$ to $H$. By \cref{lem:emulator-expected-sparsity}, we know that 
\[\E[m_{(k-1)\Delta}]\le  2n^2/\alpha^{(k-1)\Delta}=
2n^2/\alpha^{(k-1)\log_{\alpha}n^{1/k}} =2n^{1+1/k}.\]

Thus the expected size of $H$ is $O(kn^{1+1/k})$ as desired. 
\end{proof}

\subsection{Stretch analysis}  

By construction, it is clear that $d_H(u,v) \ge d_G(u,v)$ for all $u,v\in V$.

From now on we fix a pair of $u,v\in V$ and consider
the shortest cycle $C$ of length $g := d_G(u\lr v)$ containing the vertices $u,v$. We will prove $d_H(u\lr v) \le (2k-1)d_G(u\lr v)$. 

If $C$ is included in the final sparsified graph $G_{(k-1)\Delta}$, then by \cref{line:emulator:lastline} we know $C$ is included in the emulator $H$ and thus $d_H(u\lr v) = d_G(u\lr v)$. Hence, in the following we assume $C\not \subseteq G_{(k-1)\Delta}$, and let $0\le i <(k-1)\Delta$ be the first iteration in which  $C$ is destroyed by the sparsification steps, that is,  $C\subseteq E(G_i)$ but $C\not\subseteq E(G_{i+1})$. 

We first prove the following \cref{lem:edge-delete-stretch} (which is essentially from \cite{CL21}), which shows that when $C$  is destroyed in iteration $i$, it can be $2$-approximated by a cycle going through some sampled vertex in iteration $i$.

\begin{lemma}\label{lem:edge-delete-stretch}
Then there exists some $s\in S_i$ such that 
\begin{equation}\label{eqn:vsus}
    d_{G_i}(v\lr s)\leq 2g, \text{ and } d_{G_i}(u\lr s)\leq 2g.
\end{equation}
\end{lemma}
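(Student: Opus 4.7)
The plan is to identify, from the sparsification rule that destroys $C$, a specific eliminator $s \in S_i$ and then to bound both $d_{G_i}(u \lr s)$ and $d_{G_i}(v \lr s)$ by $2g$ by routing each roundtrip through the destroyed edge. Concretely, since $C \subseteq E(G_i)$ but $C \not\subseteq E(G_{i+1})$, some edge $(x,y) \in C$ is deleted at \cref{line:emulator:17}, so by the test at \cref{line:emulator-removal} there is an $s \in S_i$ with
\[2 d_{G_i}(x,s) + d_{G_i}(s,y) \le 2\wt(x,y) + d_{G_i}(y,s).\]
I will show this $s$ satisfies both required bounds.

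Before estimating, I will set up the equality $\wt(x,y) = d_{G_i}(x,y)$. Because $C$ is the shortest cycle in $G$ through $u$ and $v$, its two segments $u \leadsto v$ and $v \leadsto u$ have total length $d_G(u,v) + d_G(v,u) = g$ and so must each be a shortest path in $G$; subpaths of shortest paths are themselves shortest, so the edge $(x,y)$ yields $\wt(x,y) = d_G(x,y)$. Combined with $d_G(x,y) \le d_{G_i}(x,y) \le \wt(x,y)$ (the left inequality because $G_i \subseteq G$, the right because $(x,y)\in E(G_i)$), this gives $\wt(x,y) = d_{G_i}(x,y)$. Substituting into the sparsification inequality and then using $d_{G_i}(y,s) \le d_{G_i}(y,x) + d_{G_i}(x,s)$ gives
\[d_{G_i}(x,s) + d_{G_i}(s,y) \le 2\, d_{G_i}(x,y) + d_{G_i}(y,x).\]

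Next, I will exploit the cycle structure in $G_i$. Since $C \subseteq G_i \subseteq G$ is a shortest cycle in $G$ through $u,v$ and has total length $g$, it is also shortest in $G_i$; traversing $C$ from $y$ while avoiding the edge $(x,y)$ gives a $y$-to-$x$ walk of length exactly $g - d_{G_i}(x,y)$, and this walk passes through both $u$ and $v$ (regardless of which segment of $C$ contains $(x,y)$), so it yields both $d_{G_i}(y,u) + d_{G_i}(u,x) \le g - d_{G_i}(x,y)$ and $d_{G_i}(y,v) + d_{G_i}(v,x) \le g - d_{G_i}(x,y)$. Applying the triangle inequality $d_{G_i}(w \lr s) \le d_{G_i}(w,x) + d_{G_i}(x,s) + d_{G_i}(s,y) + d_{G_i}(y,w)$ for $w \in \{u,v\}$ and combining with the two preceding displayed bounds, I get
\[d_{G_i}(w \lr s) \le \bigl(g - d_{G_i}(x,y)\bigr) + \bigl(2\,d_{G_i}(x,y) + d_{G_i}(y,x)\bigr) \le g + g = 2g,\]
where the last step uses $d_{G_i}(y,x) \le g - d_{G_i}(x,y)$. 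This gives the desired bound for both $w=u$ and $w=v$.

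The only non-routine point is the justification that $\wt(x,y)$ can be replaced by $d_{G_i}(x,y)$ in the sparsification inequality; this upgrade, which rests on the shortest-cycle status of $C$ and on $C \subseteq G_i$, is what allows the triangle-inequality manipulation to close. Everything else is a bookkeeping exercise around the cycle $C$, and in particular the case analysis on which segment of $C$ contains $(x,y)$ is symmetric in $u$ and $v$ and contributes nothing essential.
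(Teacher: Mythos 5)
Your proof is correct and follows essentially the same route as the paper: both extract the same edge $(x,y)$ and eliminator $s\in S_i$ from the sparsification test and then chain triangle inequalities around the cycle $C$ to get the $2g$ bound. The small presentational difference is that the paper bounds the three-vertex quantity $d_{G_i}(u,s)+d_{G_i}(s,v)+d_{G_i}(v,u)\le 2g$ in one shot (after a WLOG on which segment of $C$ contains $(x,y)$), whereas you bound $d_{G_i}(w\lr s)$ for $w\in\{u,v\}$ separately by routing each roundtrip through $x,s,y$ and using your observation $\wt(x,y)=d_{G_i}(x,y)$ to avoid the WLOG; both are valid and yield the same bookkeeping.
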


\begin{proof}
    By definition of $i$, $d_{G_i}(u\lr v) = g = d_G(u\lr v)$.
Let $(x,y)\in C \setminus E(G_{i+1})$ be an edge on the cycle that is removed. Assume without loss of generality that $(x,y)$ lies on the shortest path from $u$ to $v$ (otherwise, we can swap the roles of $u$ and $v$). By \cref{line:emulator-removal}, this means that there exists some $s\in S_{i}$  where
\begin{equation}
2d_{G_i}(x,s) + d_{G_i}(s,y)\le  2\wt(x,y) + d_{G_i}(y,s),\label{eqn:stretch:temp1}
\end{equation}
which implies the following estimate on the length of the shortest cycle going through $u,s,v$ in $G_i$:
\begin{align*}
       & d_{G_i}(u,s) + d_{G_i}(s, v)+d_{G_i}(v,u) \\
     \le \ & d_{G_i}(u,x)+d_{G_i}(x,s) + d_{G_i}(s,y)+d_{G_i}(y,v) + d_{G_i}(v,u)\tag{triangle inequality}\\
     \le \ & d_{G_i}(u,x) + 2\wt(x,y) + d_{G_i}(y,s) - d_{G_i}(x,s) + d_{G_i}(y,v)+d_{G_i}(v,u) \tag{by \cref{eqn:stretch:temp1}}\\
     \le \ & d_{G_i}(u,x) + 2\wt(x,y) + (d_{G_i}(y,v)+d_{G_i}(v,u)+d_{G_i}(u,x)+d_{G_i}(x,s)) \\
     &- d_{G_i}(x,s) + d_{G_i}(y,v)+d_{G_i}(v,u)\tag{expanding $d_{G_i}(y,s)$ using triangle inequality}\\
     = \ & 2d_{G_i}(u,x) + 2\wt(x,y) + 2d_{G_i}(y,v) + 2d_{G_i}(v,u)\\
     = \ & 2d_{G_i}(u,v) + 2d_{G_i}(v,u) \tag{$(x,y)$ lies on shortest path from $u$ to $v$}\\
     = \ & 2g.
\end{align*}
Thus 
\[d_{G_i}(u\lr s) \le d_{G_i}(u,s) + d_{G_i}(s, v)+d_{G_i}(v,u) \le 2g\]
and the same holds for $d_{G_i}(v\lr s)$ as desired.
\end{proof}

By \cref{lem:edge-delete-stretch} and the definition of the pivots $p_{i}(u) := \arg \min_{s\in S_{i}} d_{G_i}(u\lr s), p_{i}(v) := \arg \min_{s\in S_{i}} d_{G_i}(v\lr s)$ (\cref{line:defnpivot}), we have 
\begin{equation}\label{eqn:vpivupiu}
d_{G_i}(u\lr p_{i}(u))\leq 2g, \text{ and } d_{G_i}(v\lr p_{i}(v))\leq 2g.
\end{equation}

We first consider the case when both $s\in B_i(u)$ and $s\in B_i(v)$ hold (where $s$ is defined in \cref{lem:edge-delete-stretch}). In this case, we have 
\begin{align*}
    d_H(u\lr v)  & \le d_H(u\lr s) + d_H(s\lr v)\\
    & \le d_{G_i}(u\lr s) + d_{G_i}(s\lr v)
\tag{by \cref{line:addbunch}}
    \\  & \le 4g \tag{by \cref{eqn:vsus}}
    \\ & \le (2k-1) g \tag{since $k\ge 3$}
\end{align*}
as desired. 

Hence it remains to consider the case when either $s\notin B_i(u)$ or $s\notin B_i(v)$. In the following we only consider $s\notin B_i(v)$, and the other case where $s\notin B_i(u)$ follows from an analogous argument.

By definition of bunches at \cref{line:defnbunch}, $s\notin B_i(v)$ implies 
\begin{equation}
    \label{eqn:temp1}
    d_{G_i}(v\lr s) \ge d_{G_{r\Delta-1}}(v\lr p_{r\Delta-1}(v)),
\end{equation}
where $i = r\Delta+t$ ($r\in \{0,\dots,k-2\}, t\in \{0,\dots,\Delta-1\}$).
Now we use an induction similar to \cite{TZ01}.

\begin{lemma}
    \label{lem:induct1}
    Suppose integer $J\ge 0$  satisfies
    \begin{itemize}
        \item $p_{(r-j)\Delta-1}(v) \notin B_{(r-j)\Delta-1}(u)$ for all even $0\le j< J$, and
        \item $p_{(r-j)\Delta-1}(u) \notin B_{(r-j)\Delta-1}(v)$ for all odd $0\le j< J$.
    \end{itemize}
    Then, 
    \begin{itemize}
        \item If $J$ is even, then \[ d_{(r-J)\Delta-1}(v \lr p_{(r-J)\Delta-1}(v)) \le (J+2)g.\]
        \item If $J$ is odd, then \[ d_{(r-J)\Delta-1}(u \lr p_{(r-J)\Delta-1}(u)) \le (J+2)g.\]
    \end{itemize}
\end{lemma}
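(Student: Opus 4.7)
My plan is to prove this by induction on $J$, in direct analogy with the Thorup--Zwick distance oracle stretch analysis, the only twist being that each statement now lives in a different sparsified graph $G_{(r-j)\Delta-1}$.

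For the base case $J = 0$ the hypotheses are vacuous, so I just need $d_{G_{r\Delta-1}}(v \lr p_{r\Delta-1}(v)) \le 2g$. This is precisely what the standing case assumption $s \notin B_i(v)$ yields, via \cref{eqn:temp1} combined with \cref{eqn:vsus}.

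For the inductive step I split on the parity of $J$; the two cases are symmetric under swapping the roles of $u$ and $v$, so I describe the odd case. Assuming the statement for $J-1$ (which is even), I have the bound $d_{G_{(r-J+1)\Delta-1}}(v \lr p_{(r-J+1)\Delta-1}(v)) \le (J+1)g$. The new hypothesis added at index $J-1$ is $p_{(r-J+1)\Delta-1}(v) \notin B_{(r-J+1)\Delta-1}(u)$. Unwinding \cref{line:defnbunch} (noting that iteration $(r-J+1)\Delta - 1$ has $t = \Delta-1$ and sits in round $r-J$, so the bunch threshold there references the pivot from iteration $(r-J)\Delta - 1$), this non-membership translates to
\[ d_{G_{(r-J)\Delta - 1}}\bigl(u \lr p_{(r-J)\Delta - 1}(u)\bigr) \,\le\, d_{G_{(r-J+1)\Delta - 1}}\bigl(u \lr p_{(r-J+1)\Delta - 1}(v)\bigr). \]
I then apply the roundtrip triangle inequality through $v$ on the right-hand side, and exploit monotonicity $G_{(r-J+1)\Delta-1} \supseteq G_i$ so that the cycle $C$ still witnesses $d_{G_{(r-J+1)\Delta-1}}(u \lr v) \le g$. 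Combining this with the inductive bound gives $g + (J+1)g = (J+2)g$, as desired.

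The main obstacle is pure bookkeeping: keeping straight which graph each distance is measured in, and matching the iteration index $(r-j)\Delta - 1$ at which a bunch is evaluated with the iteration $(r-j-1)\Delta - 1$ of the pivot that defines its threshold; this is exactly the telescoping that makes the alternating $u$/$v$ induction work. The only potential corner case is $r = 0$, but there \cref{line:cornercase} forces $B_i(v) = S_i$, so the branch $s \notin B_i(v)$ under which this lemma is invoked never arises, and I never need to reach a negative iteration index.
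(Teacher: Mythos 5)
Your proposal is correct and follows essentially the same route as the paper: same base case from \cref{eqn:temp1} and \cref{eqn:vsus}, same inductive step unwinding the bunch definition at iteration $(r-J+1)\Delta-1$ into the inequality $d_{G_{(r-J)\Delta-1}}(u \lr p_{(r-J)\Delta-1}(u)) \le d_{G_{(r-J+1)\Delta-1}}(u \lr p_{(r-J+1)\Delta-1}(v))$, then triangle inequality plus the fact that $C \subseteq E(G_{(r-J+1)\Delta-1})$ gives the $(J+2)g$ bound. Your remark about negative indices is framed around $r=0$, whereas the paper defers the observation that the hypotheses force $J \le r-1$ (via $B_{\Delta-1}(\cdot)=S_{\Delta-1}$) to \cref{lem:maxJ}, but the underlying point is the same.
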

\begin{proof}
    We prove by induction on $J$.
   The base case $J=0$ follows from  
   \begin{align*}
    d_{G_{r\Delta-1}}(v\lr p_{r\Delta-1}(v)) 
    & \le d_{G_i}(v \lr s)\tag{by \cref{eqn:temp1}}\\
    & \le 2g.\tag{by \cref{eqn:vsus}}
   \end{align*}
To prove the inductive case $J\ge 1$, we first consider the case with odd $J$.
By the assumption for $j=J-1$, we have
 $p_{(r-J+1)\Delta-1}(v) \notin B_{(r-J+1)\Delta-1}(u)$. By definition of bunches at \cref{line:defnbunch} (at iteration $i=(r-J+1)\Delta-1 = (r-J)\Delta + (\Delta-1)$), this means
\begin{equation}
    \label{eqn:temp2}
    d_{G_{(r-J+1)\Delta-1}}(u \lr p_{(r-J+1)\Delta-1}(v)) \ge d_{G_{(r-J)\Delta-1}}(u \lr p_{(r-J)\Delta-1}(u)).
\end{equation}
Then, 
\begin{align*}
  d_{G_{(r-J)\Delta-1}}(u \lr p_{(r-J)\Delta-1}(u)) 
& \le d_{G_{(r-J+1)\Delta-1}}(u \lr p_{(r-J+1)\Delta-1}(v)) \tag{by \cref{eqn:temp2}}\\
&\le d_{G_{(r-J+1)\Delta-1}}(v \lr p_{(r-J+1)\Delta-1}(v)) + d_{G_{(r-J+1)\Delta-1}}(v \lr u)\tag{triangle inequality}\\
& \le (J-1+2)g + d_{G_{(r-J+1)\Delta-1}}(v \lr u) \tag{by induction hypothesis}\\
& \le (J-1+2)g + g \tag{since $C \subseteq E(G_i) \subseteq E(G_{(r-J+1)\Delta-1})$}\\
& = (J+2)g,
\end{align*}
as desired.

The inductive proof for even $J$ is similar, by switching the role of $u$ and $v$.
\end{proof}

\begin{lemma}
    Let $J\ge 0$ be the maximum integer for which the assumption in \cref{lem:induct1} holds. Then, $d_{H}(u\lr v) \le (2J+5)g$.
    \label{lem:maxJ}
\end{lemma}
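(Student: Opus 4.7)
The plan is to exploit the maximality of $J$: by definition, the condition in the hypothesis of \cref{lem:induct1} fails at $j = J$, so the corresponding pivot must lie in the corresponding bunch. This yields a common intermediate vertex $q$ connected to both $u$ and $v$ in $H$, through which we can bound $d_H(u \lr v)$.

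Suppose first that $J$ is even (the odd case will be symmetric). Maximality of $J$ means the condition $p_{(r-J)\Delta-1}(v) \notin B_{(r-J)\Delta-1}(u)$ fails, so $q := p_{(r-J)\Delta-1}(v) \in B_{(r-J)\Delta-1}(u)$. Writing $i' := (r-J)\Delta - 1$, the vertex $q$ is simultaneously $v$'s pivot $p_{i'}(v)$ and lies in $u$'s bunch $B_{i'}(u)$ at iteration $i'$. By \cref{line:addbunch}, $H$ therefore contains all four edges $(u,q), (q,u), (v,q), (q,v)$, with weights equal to the corresponding distances in $G_{i'}$. The triangle inequality in $H$ yields
\[
d_H(u \lr v) \le d_H(u \lr q) + d_H(v \lr q) \le d_{G_{i'}}(u \lr q) + d_{G_{i'}}(v \lr q).
\]

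To finish, apply \cref{lem:induct1} with even $J$ to get $d_{G_{i'}}(v \lr q) = d_{G_{i'}}(v \lr p_{i'}(v)) \le (J+2)g$. Since $i' \le i$ and the sequence $G_0 \supseteq G_1 \supseteq \cdots$ only loses edges, we have $C \subseteq E(G_i) \subseteq E(G_{i'})$, so $d_{G_{i'}}(u \lr v) \le g$. Combining with the triangle inequality, $d_{G_{i'}}(u \lr q) \le d_{G_{i'}}(u \lr v) + d_{G_{i'}}(v \lr q) \le g + (J+2)g = (J+3)g$. Summing gives $d_H(u \lr v) \le (J+2)g + (J+3)g = (2J+5)g$, as claimed.

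The odd case is identical with $u, v$ swapped: take $q := p_{i'}(u) \in B_{i'}(v)$; then $d_{G_{i'}}(u \lr q) \le (J+2)g$ follows directly from \cref{lem:induct1}, $d_{G_{i'}}(v \lr q) \le (J+3)g$ follows from triangle inequality using $d_{G_{i'}}(u \lr v) \le g$, and the sum is again at most $(2J+5)g$. I expect no real obstacle beyond verifying that $i' \ge 0$; this is guaranteed because bunches in round $0$ use threshold $+\infty$ (from \cref{line:cornercase}) and thus contain every sample, forcing the condition (and hence the maximizing $J$) to stop well before $(r-J)\Delta - 1$ becomes negative.
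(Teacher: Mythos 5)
Your proof is correct and follows essentially the same approach as the paper's: use maximality of $J$ to conclude that the relevant pivot lies in the other vertex's bunch, invoke \cref{lem:induct1} for the $(J+2)g$ bound on one of the roundtrip distances to that pivot, and close with the triangle inequality together with $d_{G_{(r-J)\Delta-1}}(u\lr v)\le g$ (which holds since $C\subseteq G_i\subseteq G_{(r-J)\Delta-1}$). The paper happens to spell out the odd case and bookkeeps the bound as $2d_{G_{i'}}(u\lr q)+d_{G_{i'}}(u\lr v)$, whereas you do the even case and split it as $d_{G_{i'}}(v\lr q)+d_{G_{i'}}(u\lr q)$, but this is the same computation; and your remark that the $J\ge r$ case is excluded because $B_{\Delta-1}(\cdot)=S_{\Delta-1}$ matches the paper's treatment of $i'\ge 0$.
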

\begin{proof}
    We prove the case where $J$ is odd. The even case can be proved similarly by switching the role of $u$ and $v$.

    By the maximality of $J$, we have 
    \begin{equation}
        \label{eqn:pinb}
        p_{(r-J)\Delta-1}(u) \in B_{(r-J)\Delta-1}(v).
    \end{equation}
    By the conclusion of \cref{lem:induct1}, we have
    \begin{equation}
        \label{eqn:concl}
         d_{G_{(r-J)\Delta-1}}(u \lr p_{(r-J)\Delta-1}(u)) \le (J+2)g.
    \end{equation}
        
    Then,
\begin{align*}
    d_H(u\lr v) &\le d_H(u \lr p_{(r-J)\Delta-1}(u)) + d_H(p_{(r-J)\Delta-1}(u) \lr v)\\
    & \le d_{G_{(r-J)\Delta-1}}(u\lr p_{(r-J)\Delta-1}(u)) + d_H(p_{(r-J)\Delta-1}(u) \lr v)\tag{by \cref{line:addbunch}}\\
    & \le d_{G_{(r-J)\Delta-1}}(u\lr p_{(r-J)\Delta-1}(u)) + d_{G_{(r-J)\Delta-1}}(p_{(r-J)\Delta-1}(u) \lr v)\tag{by \cref{line:addbunch} and \cref{eqn:pinb}}\\
    & \le 2d_{G_{(r-J)\Delta-1}}(u\lr p_{(r-J)\Delta-1}(u)) + d_{G_{(r-J)\Delta-1}}(u \lr v)\tag{triangle inequality}\\
    & \le 2(J+2)g +d_{G_{(r-J)\Delta-1}}(u \lr v) \tag{by \cref{eqn:concl}}\\
    & = (2J+5)g.  \qedhere
\end{align*}
\end{proof}

\begin{lemma}
   $d_{H}(u\lr v) \le (2k-1)g.$ 
\end{lemma}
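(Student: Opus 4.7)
The plan is to combine Lemma \ref{lem:maxJ} with an upper bound $J \le k-3$ on the maximum $J$ for which the hypothesis of Lemma \ref{lem:induct1} is satisfied. Since Lemma \ref{lem:maxJ} gives $d_H(u \lr v) \le (2J+5)g$, this yields $d_H(u \lr v) \le (2(k-3)+5)g = (2k-1)g$, as required.

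To prove $J \le k-3$, I would establish the stronger bound $J \le r-1$, which suffices since the outer loop index satisfies $r \in \{0, 1, \ldots, k-2\}$. The argument is by contradiction: suppose the maximum $J$ satisfies $J \ge r$. The hypothesis of Lemma \ref{lem:induct1} is monotone in $J$ (a larger $J$ imposes strictly more bunch-membership conditions, indexed by $j \in \{0, \ldots, J-1\}$), so it also holds at $J' := r$. Applying Lemma \ref{lem:induct1} with $J' = r$ then yields (depending on the parity of $r$; the two cases are symmetric) a bound of the form $d_{G_{(r-r)\Delta-1}}(w \lr p_{(r-r)\Delta-1}(w)) \le (r+2)g$ for some $w \in \{u,v\}$, that is, $d_{G_{-1}}(w \lr p_{-1}(w)) \le (r+2)g$. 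However, by the corner-case convention established at \cref{line:cornercase}, $d_{G_{-1}}(w \lr p_{-1}(w)) = +\infty$, whereas $g = d_G(u \lr v)$ is finite because $G$ is strongly connected without loss of generality. This contradiction forces $J \le r-1 \le k-3$.

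The edge case $r = 0$ must be addressed separately, since there $J \le r-1 = -1$ is vacuously impossible. Luckily, this case never actually invokes Lemma \ref{lem:maxJ}: when $r = 0$, the bunch definition at \cref{line:defnbunch} uses threshold $d_{G_{-1}}(\cdot \lr p_{-1}(\cdot)) = +\infty$, so $B_i(v)$ and $B_i(u)$ automatically contain every sampled vertex with finite roundtrip distance in $G_i$. The particular $s$ delivered by Lemma \ref{lem:edge-delete-stretch} satisfies $d_{G_i}(v \lr s),\, d_{G_i}(u \lr s) \le 2g < \infty$ by \cref{eqn:vsus}, so $s \in B_i(u) \cap B_i(v)$, placing us in the easy case already handled earlier in the section (which gave $d_H(u \lr v) \le 4g \le (2k-1)g$ using $k \ge 3$) rather than the hard case that invokes Lemma \ref{lem:maxJ}.

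The main conceptual step is recognizing $G_{-1}$ as the boundary that prevents the induction in Lemma \ref{lem:induct1} from proceeding any further, and then leveraging its $+\infty$ convention through the contrapositive to bound $J$; once $J \le k-3$ is in hand, the stretch bound is pure arithmetic.
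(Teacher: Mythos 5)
Your proof is correct and essentially follows the same approach as the paper: both rely on the boundary convention $d_{G_{-1}}(\cdot\lr p_{-1}(\cdot))=+\infty$ to cap the induction depth at $J\le r-1\le k-3$, then feed that bound into Lemma~\ref{lem:maxJ} and do arithmetic. The one stylistic difference is that the paper contradicts the \emph{hypothesis} of Lemma~\ref{lem:induct1} at $J\ge r$ (by arguing $B_{\Delta-1}(u)=B_{\Delta-1}(v)=S_{\Delta-1}$, so $p_{\Delta-1}(\cdot)$ cannot fail to lie in the bunch), whereas you contradict the \emph{conclusion} of Lemma~\ref{lem:induct1} at $J'=r$ (deriving the absurdity $+\infty = d_{G_{-1}}(w\lr p_{-1}(w))\le(r+2)g<\infty$); these are two faces of the same observation and either is fine. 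One small plus of your write-up: you explicitly handle $r=0$ by noting the hard case can never occur there (since $s\notin B_i(v)$ at $r=0$ would force $d_{G_i}(v\lr s)=+\infty$, contradicting \cref{eqn:vsus}), a point the paper's ``$J$ cannot satisfy the assumption if $J\ge r$'' phrasing technically elides when $r=0$ and $J=0$.
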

\begin{proof}
   By \cref{line:cornercase} and \cref{line:defnbunch}, we know $B_{\Delta-1}(u) = B_{\Delta-1}(v) = S_{\Delta-1}$. In particular, this means $J$ cannot satisfy the assumption of \cref{lem:induct1} if $J\ge r$.

Hence, the maximum $J$ that could possibly satisfy the assumption of \cref{lem:induct1} is at most $r-1 \le k-3$. Then, by \cref{lem:maxJ}, we have $d_{H}(u\lr v) \le (2J+5)g \le (2(k-3)+5)g = (2k-1)g$. 
\end{proof}

\section{$4$-Approximation of girth in $\Tilde{O}(mn^{1/3})$ time}\label{sec:4-approx}

In this section, we present our algorithm for computing a $4$-approximation of the girth in a weighted directed graph (\cref{thm:4girth-main}).

 In general, we follow the approach of Chechik and Lifshitz \cite{CL21} which uses uniformly random vertex samples and certain elimination rules to prune the search space for each vertex $v\in V$. Our running time improvement comes from extending the framework of \cite{CL21} by one more layer, using several novel structural and algorithmic ideas.
 
Throughout this section, $d(u,v)$ always means $d_G(u,v)$, where $G=(V,E)$ is the input directed graph. 
 
\subsection{Main Algorithm}
By \cref{lem:regularize}, we assume each vertex in $G$ has degree at most $O(m/n)$.

Before describing our algorithm in detail, we first give a high-level overview of the structure of our algorithm. Our algorithm runs in three phases:

\begin{enumerate}
    \item \textbf{Phase I.} Take a random sample $S_1 \subseteq V$ of $O(n^{1/3})$ vertices. 
    
    For each $s_1\in S_1$ use Dijkstra to find the shortest cycle going through $s_1$. 

    \item \textbf{Phase II.} Take a sample $S_2\subseteq V$ of $O(n^{2/3})$ vertices.  
   Based on the distance information from $S_1$ obtained in Phase I, for every $s_2\in S_2$ we use the elimination rule from \cite{CL21} (\cref{lem:observation}) to compute the pruned sets $\Btwoout(s_2),\Btwoin(s_2)\subseteq V$ of size $\Tilde{O}(n^{2/3})$.
    
    For each $s_2\in S_2$ use Dijkstra to find the shortest cycle going through $s_2$ and some $u\in \Btwoout(s_2)\cap \Btwoin(s_2)$.

    \item \textbf{Phase III.} Based on the distance information obtained from Phase I and II,  use our novel elimination rules (\cref{def:bprime} and  \cref{def:bprimetilde}, which are more technical than \cite{CL21}) to compute for every vertex $v\in V$ a pruned set $\tilde B'(v)\subseteq V$ of size $\tilde O(n^{1/3})$.
    
    For each $v\in V$ use Dijkstra to find the shortest cycle going through $v$ in the induced subgraph $G[\tilde B'(v)]$.
\end{enumerate}
Finally output the length of the shortest cycle encountered in the three phases as the girth estimate.

We present our main algorithm in \cref{alg:4approx} as follows. It follows the three-phase structure described above (indicated by the comments), but involves more definitions and subroutines that will be explained in the following sections. The main statements for the correctness and running time of \cref{alg:4approx} will be given in \cref{thm:main4apx} and \cref{thm:main4runtime}.

\begin{algorithm}[H]
\caption{\label{alg:4approx}\textsc{4-Approximation-Girth}$(G)$}
\SetAlgoVlined\DontPrintSemicolon
\KwIn{
A strongly connected directed graph $G=(V,E)$ with maximum degree $O(m/n)$ 
}
\KwOut{An estimate $g'$ such that $g\leq g' \leq 4g$, where $g$ is the girth of $G$}
\BlankLine

Initialize $g'\gets \infty$

\tcp{Phase I}

Sample  $S_1\subseteq V$ of size $|S_1|=O(n^{1/3})$ \label{line:phaseibegin}

    \For {$s_1 \in S_1$} 
        {From $s_1$ run in- and out-Dijkstra  on $G$        

 $g'\gets \min_{u\in V\setminus \{s_1\}} d(s_1\lr u)$ \label{line:update_R1}
        }    

 \tcp{Phase II}

Compute eliminators $\Rout(v),\Rin(v)\subseteq S_1$ of size $|\Rout(v)|,|\Rin(v)| = O(\log n)$ for all $v\in V$  using \cref{alg:compute-eliminator} \label{line:compute_R1}

Sample $S_2\subseteq V$ of size $|S_2|=O(n^{2/3})$ \label{line:samples2}

    \For {$s_2 \in S_2$\label{line:fors2}} 
        {From $s_2$ run modified out-Dijkstra on $G[\Btwoout(s_2)]$ and modified in-Dijkstra on $G[\Btwoin(s_2)]$ (\cref{lem:compute_ball}), where $\Btwoout(\cdot),\Btwoin(\cdot)$ are defined in \cref{def:balls} \tcp{$\Btwoout(\cdot)$ and $\Btwoin(\cdot)$ depend on $\Rout$ and $\Rin$ respectively.}
  \label{line:update_R2}

$g' \gets \min \{g', \min_{u\in \Btwoout(s_2)\cap \Btwoin(s_2) \setminus \{s_2\}} \big (d(s_2 , u)+d(u,s_2)\big )\}$ 
        \label{line:dij2}
        }

 \tcp{Phase III}

 Compute eliminators $\Rtwoin(v)\subseteq S_2$ of size $|\Rtwoin(v)|= O(\log n)$ for all $v\in V$ using \cref{alg:compute-eliminator2}. \label{line:compute_Rtwo}

    \For {$v \in V$} 
        {From $v$ run modified in-Dijkstra  on $G[\tilde B'(v)]$, where $\tilde B'(v)$ is defined in \cref{def:bprimetilde} \tcp{ $\tilde B'(\cdot )$ depends on $\Rtwoin$ (and also $\Rout$).}\label{line:finaldij1}
        
        $g'\gets \min\{ g', \min_{u\in G[\tilde B'(v)]\text{ and } (v,u)\in E} \big (d_{G[\tilde B'(v)]}(u,v)+\wt(v,u) \big ) \}$
        \label{line:finaldij2}
        }    
   
    \Return $g'$
\end{algorithm}

\subsection{Phase I and II}\label{subsec:phase2}

In this subsection we describe Phase I and II of our \cref{alg:4approx}, which mostly follow the $2$-approximation algorithm of \cite{CL21} (with sample size $|S_1|$ changed from $O(\sqrt{n})$ to $O(n^{1/3})$). One piece missing from \cite{CL21} but necessary for us is a certain closedness property of the pruned sets $\Btwoout(v)$, which allows us to find all vertices in $\Btwoout(v)$ by simply running Dijkstra from $v$ (\cref{lem:compute_ball}).\footnote{We need to compute these pruned sets $\Btwoout(v)$ in order to prepare for the later Phase III, which was not required in \cite{CL21}'s two-phase algorithm.}

Phase I (\cref{line:phaseibegin}--\cref{line:update_R1}) uniformly samples a set $S_1$ of $O(n^{1/3})$ vertices, and runs $O(n^{1/3})$ Dijkstra instances on $G$ to find the shortest cycle going through any vertex in $S_1$.
\begin{observation}
    \label{obs:phase1runtime}
Phase I of \cref{alg:4approx} runs in $\Tilde{O}(mn^{1/3})$ total time.
\end{observation}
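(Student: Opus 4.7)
The plan is to simply account for each line of Phase I and show the total is $\tilde O(m n^{1/3})$. Phase I consists of three parts: (i) sampling $S_1$ at \cref{line:phaseibegin}, (ii) running in- and out-Dijkstra from each $s_1 \in S_1$ on $G$, and (iii) updating $g'$ at \cref{line:update_R1}. I would verify these in order.

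Sampling $|S_1| = O(n^{1/3})$ vertices is trivially done in $O(n)$ time (e.g., by independent Bernoulli trials or random selection), which is dominated by the other terms. For the Dijkstra calls, each in-Dijkstra and out-Dijkstra from a source on the full graph $G$ with $n$ vertices and $m$ edges runs in $O(m + n\log n)$ time using the standard binary heap implementation. Since we perform these from each of the $|S_1| = O(n^{1/3})$ sampled vertices, the total cost is $O(n^{1/3}) \cdot O(m + n\log n) = O(mn^{1/3} + n^{4/3}\log n)$. Because the input graph is strongly connected (as assumed in the input specification of \cref{alg:4approx}), we have $m \ge n$, hence $n^{4/3} \le m n^{1/3}$ and the bound simplifies to $\tilde O(m n^{1/3})$.

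Finally, updating $g'$ at \cref{line:update_R1} requires scanning the distances $d(s_1, u)$ and $d(u, s_1)$ for all $u \in V \setminus \{s_1\}$; this is $O(n)$ work per $s_1$ and hence $O(n^{4/3})$ in total, again absorbed into $\tilde O(mn^{1/3})$. Summing the three contributions yields the claimed bound.

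Since nothing subtle is involved, there is no real obstacle here — the main thing to be careful about is just the standard Dijkstra complexity and confirming that $m \ge n$ under the strong-connectivity assumption so that we can absorb the $n^{4/3}\log n$ term.
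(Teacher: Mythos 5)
Your proof is correct and is the natural argument — the paper leaves this observation unproved precisely because the accounting is exactly as you describe: $O(n^{1/3})$ Dijkstra calls at $O(m + n\log n)$ each, with sampling and the update loop absorbed, and $m\ge n$ from strong connectivity collapsing everything to $\tilde O(mn^{1/3})$.
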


In Phase II we try to find other short cycles in $G$ that are not $2$-approximated by the estimate 
obtained in  Phase~I. The first step (\cref{line:compute_R1}) computes eliminators $\Rout(v),\Rin(v)\subseteq S_1$ of small size $|\Rout(v)|,|\Rin(v)|\le O(\log n)$ for all $v\in V$. 
Intuitively these eliminators retain the usefulness of the sample $S_1$ in effectively pruning the search space, while being small enough for the benefit of time efficiency.
We defer the algorithm for computing eliminators (\cref{alg:compute-eliminator}) to the end of this subsection; instead we first present the following important definition that relies on these eliminators $\Rout(v),\Rin(v)\subseteq S_1$.

\begin{definition}[$\Btwoout(v)$ and $\Btwoin(v)$, \cite{CL21}] For $v\in V$, given $\Rout (v),\Rin (v)\subseteq V$, we define vertex subsets
\label{def:balls}
\[\Btwoout(v) = \{u\in V: 2d(v,r_1)+d(r_1,u)> 2d(v,u)+d(u,r_1)\text{ for all }r_1\in \Rout (v)\}, \]
and symmetrically,
\[\Btwoin(v) = \{u\in V: 2d(r_1,v)+d(u,r_1)> 2d(u,v)+d(r_1,u)\text{ for all }r_1\in \Rin (v)\}. \]
\end{definition}

\cref{def:balls} is motivated by the following lemma, which follows from the key observation (\cref{lem:observation}) of \cite{CL21}.
Intuitively it says $\Btwoout(\cdot)$ captures cycles that cannot be 2-approximated by the estimate in phase I.
\footnote{We use superscript $(2)$ in the notation of $\Btwoout(v)$ for this reason, to distinguish it from the set $\Bfourout(v)$ that will be introduced later in \cref{subsec:4-approx-lemmas}.}

\begin{lemma}[2-approximation \cite{CL21}]
\label{lem:2-approx}
   If $u\notin \Btwoout(v)$, then there exists $r_1\in \Rout(v)\subseteq S_1$ such that $d(r_1 \lr u) \le 2d(u\lr v)$.
   
   The same statement holds if we replace ``$\mathrm{out}$'' by ``$\mathrm{in}$''.
\end{lemma}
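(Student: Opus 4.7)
The plan is to unpack the definition of $\Btwoout(v)$ and then apply the key observation (\cref{lem:observation}) verbatim. Specifically, by \cref{def:balls}, a vertex $u$ belongs to $\Btwoout(v)$ precisely when the strict inequality $2d(v,r_1)+d(r_1,u) > 2d(v,u)+d(u,r_1)$ holds for every $r_1 \in \Rout(v)$. Negating this, the hypothesis $u\notin \Btwoout(v)$ furnishes at least one $r_1\in \Rout(v)\subseteq S_1$ with
\[
2d(v,r_1)+d(r_1,u) \le 2d(v,u)+d(u,r_1),
\]
which is exactly the inequality \eqref{eq:obs-condition} in the key observation, with the roles $(u,v,r)$ of \cref{lem:observation} played by $(u,v,r_1)$.

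Then I would invoke \cref{lem:observation} directly: its conclusion is $d(u\lr r_1) \le 2\, d(u\lr v)$, which is exactly what we want. Since $r_1 \in \Rout(v) \subseteq S_1$, this witnesses the existence claim.

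The ``in'' variant is handled by the same argument applied to the reverse graph. Concretely, $u\notin \Btwoin(v)$ yields some $r_1 \in \Rin(v)$ with $2d(r_1,v)+d(u,r_1)\le 2d(u,v)+d(r_1,u)$; this is the same inequality as before, but in the graph $G^{\mathrm{rev}}$ obtained by reversing all edges, where roundtrip distances are preserved. Applying \cref{lem:observation} in $G^{\mathrm{rev}}$ gives $d(u\lr r_1)\le 2\,d(u\lr v)$ again. There is essentially no obstacle here; the only thing one has to verify is that the direction of the strict vs.\ non-strict inequality in \cref{def:balls} lines up correctly with the hypothesis of \cref{lem:observation}, but since \cref{lem:observation} uses a $\le$ hypothesis and $\Btwoout(v)$ is defined via strict $>$, the negation produces exactly the required $\le$.
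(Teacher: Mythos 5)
Your proposal is correct and matches the paper's proof exactly: both unpack Definition~\ref{def:balls}, negate membership to obtain the $\le$ inequality for some $r_1\in\Rout(v)$, invoke \cref{lem:observation} with $(u,v,r)=(u,v,r_1)$, and handle the ``in'' variant by reversing edge directions. Your write-up just spells out the edge-reversal step in more detail than the paper does.
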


\begin{proof}
    By \cref{def:balls}, since $u\notin \Btwoout(v)$, there exists $r_1\in \Rout(v)$ such that 
    \[ 2d(v,r_1)+d(r_1,u)\le 2d(v,u)+ d(u,r_1).\]
    Then applying \cref{lem:observation} to $u,v,r_1$, we have $d(r_1\lr u ) \le 2d(u\lr v)$.
    
    The statement with ``out'' replaced by ``in'' can be proved symmetrically by reversing the edge directions.
\end{proof}

The following corollary of \cref{lem:2-approx} shows that cycles passing through some $s_2\in S_2$ are $2$-approximated by Phase I and II of \cref{alg:4approx}. This is essentially  how \cite{CL21} obtained their $2$-approximation.
\begin{cor}[\cite{CL21}]
    \label{cor:s22aprox}
  Let $s_2\in S_2$ and $C$ be the shortest cycle in $G$ going through $s_2$. Then,  the girth estimate $g'$ obtained by the end of Phase II of \cref{alg:4approx} satisfies $g'\le 2g$, where $g$ denotes the length of $C$.
\end{cor}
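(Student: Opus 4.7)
The plan is to pick any vertex $v \in V(C)\setminus \{s_2\}$ and split into three (non-exclusive) cases depending on whether $v$ lies inside or outside $\Btwoout(s_2)$ and $\Btwoin(s_2)$. The key preliminary observation is that since $C$ is the shortest cycle through $s_2$, for any such $v$ we have $d_G(s_2,v)+d_G(v,s_2)=g$; otherwise, the shortest $s_2\to v$ path concatenated with the shortest $v\to s_2$ path would form a shorter cycle through $s_2$, contradicting optimality of $C$. So $d_G(s_2\lr v)=g$.

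\textbf{Case 1:} $v\in \Btwoout(s_2)\cap \Btwoin(s_2)$. Here I would appeal to the closedness property promised by \cref{lem:compute_ball}: for $v\in \Btwoout(s_2)$ the shortest $s_2\to v$ path in $G$ stays inside $\Btwoout(s_2)$, and symmetrically for $\Btwoin(s_2)$. Therefore the modified out-Dijkstra on $G[\Btwoout(s_2)]$ and modified in-Dijkstra on $G[\Btwoin(s_2)]$ in \cref{line:update_R2} correctly compute $d_G(s_2,v)$ and $d_G(v,s_2)$ respectively, so the update in \cref{line:dij2} yields $g'\le d_G(s_2,v)+d_G(v,s_2)=g\le 2g$.

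\textbf{Case 2:} $v\notin \Btwoout(s_2)$. By \cref{lem:2-approx} there exists some $r_1\in \Rout(s_2)\subseteq S_1$ with $d_G(r_1\lr v)\le 2\cdot d_G(v\lr s_2)=2g$. But in Phase~I (\cref{line:update_R1}) we ran in- and out-Dijkstra from every vertex of $S_1$, including $r_1$, on the full graph $G$, and updated $g'\le \min_{u\ne r_1} d_G(r_1\lr u)$. In particular $g'\le d_G(r_1\lr v)\le 2g$.

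\textbf{Case 3:} $v\notin \Btwoin(s_2)$. This is entirely symmetric to Case~2, using \cref{lem:2-approx} with ``in'' instead of ``out'' to obtain some $r_1\in \Rin(s_2)\subseteq S_1$ satisfying $d_G(r_1\lr v)\le 2g$, which again is captured by the Dijkstras of Phase~I. Since these three cases exhaust the possibilities for any chosen $v\in V(C)\setminus \{s_2\}$, we conclude $g'\le 2g$. The only step that is not immediate from material already in this excerpt is the invocation of \cref{lem:compute_ball} to guarantee that Dijkstra restricted to $G[\Btwoout(s_2)]$ (resp.\ $G[\Btwoin(s_2)]$) returns the correct unrestricted distances for vertices inside the ball; this closedness is precisely the new ingredient over \cite{CL21} highlighted at the beginning of \cref{subsec:phase2}, so the main work of the proof sits in that lemma rather than in the corollary itself.
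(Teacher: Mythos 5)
Your proof is correct and takes essentially the same approach as the paper's. The only cosmetic difference is that you fix a single vertex $v\in V(C)\setminus\{s_2\}$ up front and case on whether that $v$ lies in $\Btwoout(s_2)$ and $\Btwoin(s_2)$, whereas the paper cases on whether the whole cycle $C$ is contained in these sets; your phrasing also makes explicit the reliance on the closedness property from \cref{lem:compute_ball} (needed so that \cref{line:dij2} actually has the correct values $d_G(s_2,v)$ and $d_G(v,s_2)$), which the paper leaves implicit.
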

\begin{proof}
    We can assume $S_1\cap C = \varnothing$, since otherwise the Phase I of \cref{alg:4approx} can find $C$ and hence $g'\le g$.

   If $C\not \subseteq \Btwoout(s_2)$,  let $u\in C\setminus \Btwoout(s_2)$. Then by \cref{lem:2-approx} there exists $r_1\in S_1$ such that $d(r_1\lr u)\le 2d(u\lr s_2) =2g$, so Phase I of \cref{alg:4approx} will update $g'$ with $d(r_1\lr u)\le 2g$ (since $r_1\in S_1$ and $u\neq r_1$).
   
   Similarly, if $C\not \subseteq \Btwoin(s_2)$, we also have $g'\le 2g$.
   
   The remaining case is $C\subseteq \Btwoout(s_2)\cap \Btwoin(s_2)$. Then, \cref{line:dij2} in Phase II of \cref{alg:4approx} updates $g'$ with $g$.
\end{proof}

The following key lemma (which will be proved later) states that 
the sets $\Btwoin(v),\Btwoout(v)$ defined  using $\Rin(v)$, $\Rout(v)$ returned by \textsc{compute-eliminators-1}$(G,S_1)$ (\cref{alg:compute-eliminator}) have small sizes. 
Intuitively, this is due to the symmetry of the elimination rule in \cref{def:balls} and the sample size being $|S_1|=O(n^{1/3})$.

\begin{lemma}[Sizes of $\Btwoin(v),\Btwoout(v)$, \cite{CL21}]\label{lem:Btwo-size}
    With high probability\footnote{We use ``with high probability'' to mean probability $1-1/n^c$ for arbitrary given constant $c\ge 1$.} over the random sample $S_1 \subseteq V$,
we  have $|\Btwoin(v)|,|\Btwoout(v)|\le \Tilde{O}(n^{2/3})$ for all $v\in V$.
\end{lemma}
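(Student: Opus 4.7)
The plan is to prove the size bound by exhibiting the structure of the algorithm \textsc{compute-eliminators-1} that computes $\Rout(v), \Rin(v)$, and by exploiting the built-in symmetry of the defining inequality of $\Btwoout(v)$. I will focus on $\Btwoout(v)$; the analogous bound for $\Btwoin(v)$ follows by reversing all edge directions. Fix $v$, and say that $r$ \emph{eliminates} $u$ (with respect to $v$) if $2d(v,r)+d(r,u)\le 2d(v,u)+d(u,r)$, so that $u\in \Btwoout(v)$ iff no $r_1\in \Rout(v)$ eliminates $u$. The key symmetry is that for any two vertices $u,u'\in V$, at least one of the two inequalities ``$u$ eliminates $u'$'' or ``$u'$ eliminates $u$'' must hold, and every vertex eliminates itself. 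Therefore, for any $T\subseteq V$ the number of ordered pairs $(r,u)\in T^2$ with $r$ eliminating $u$ is at least $|T|(|T|+1)/2$, so a uniformly random $r\in T$ eliminates in expectation at least $(|T|+1)/2$ elements of $T$.

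Based on this, the algorithm maintains a candidate ball $B\subseteq V$, initially $B=V$, and performs $L=\Theta(\log n)$ rounds of shrinkage. To obtain true independence across rounds, I would partition $S_1$ into $L$ disjoint batches $S_1^{(1)},\dots,S_1^{(L)}$, each an independent uniform random sample of $V$ of size $\Theta(n^{1/3})$ (absorbing the extra $\log n$ factor into the $\tilde O(\cdot)$ in the statement). In round $\ell$, pick any $r\in B\cap S_1^{(\ell)}$ (if nonempty), add $r$ to $\Rout(v)$, and remove from $B$ every $u$ eliminated by $r$. By construction, the final $B$ equals $\Btwoout(v)$ and $|\Rout(v)|\le L=O(\log n)$.

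To control the size of $B$, set a threshold $\tau:=c\cdot n^{2/3}\log n$. While $|B|\ge \tau$, a standard Chernoff bound gives $|B\cap S_1^{(\ell)}|=\Omega(\log n)$ with probability $1-n^{-c'}$. Conditioned on $B$, the set $B\cap S_1^{(\ell)}$ is a uniform random subset of $B$, so by the counting argument above each of its elements shrinks $B$ by half in expectation; applying this to the $\Omega(\log n)$ candidates inside the batch (and picking the one that shrinks $B$ the most) yields a constant-factor shrinkage, say to $3|B|/4$, with probability at least $1-n^{-c'}$. After $L=O(\log n)$ such shrinkages, $|B|\le \tau=\tilde O(n^{2/3})$. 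A union bound over all $v\in V$ and all $L$ rounds then delivers the simultaneous high-probability bound.

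The main obstacle is the conditional dependence between rounds: the ball $B$ entering round $\ell$ depends on the random choices of previous rounds, and those choices should not compromise the uniformity of the sample used in round $\ell$. Partitioning $S_1$ into disjoint batches handles this cleanly, since then $S_1^{(\ell)}$ is independent of all information from rounds $1,\dots,\ell-1$, which is exactly what lets the expectation-and-concentration step go through. The entire argument is a direct adaptation of the analogous lemma in \cite{CL21}, just instantiated with sample size $|S_1|=\tilde O(n^{1/3})$ (so that $n/|S_1|=\tilde O(n^{2/3})$) rather than the $\tilde O(n^{1/2})$ used there.
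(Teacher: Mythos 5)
Your core symmetry observation (every ordered pair $(r,u)$ or $(u,r)$ satisfies the elimination inequality, every vertex eliminates itself, hence a uniformly random eliminator kills at least half the candidate set in expectation) is exactly the engine the paper uses. But the probabilistic argument wrapped around it, and the algorithm being analyzed, both differ from the paper, and there is a small gap in your concentration step.

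\textbf{Different algorithm.} The lemma is about the sets $\Rout(v),\Rin(v)$ returned by \textsc{compute-eliminators-1} (\cref{alg:compute-eliminator}), which in each round picks a \emph{uniformly random} vertex of $T^{(i)}(v)=S^{(i)}\cap B_{i-1}(v)$. You instead describe a greedy rule that picks ``the one that shrinks $B$ the most.'' Your modified algorithm would also satisfy the size bound, but it is not the algorithm the lemma refers to, and in fact your greedy rule is not efficiently implementable with only the distances the algorithm has computed (it would require knowing $d(v,u)$ for all $u\in B$). So your argument proves a cousin of the statement, not the statement itself.

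\textbf{Different probabilistic structure.} You prove a per-round high-probability bound (``while $|B|\ge\tau$, round $\ell$ shrinks $B$ by a constant factor with probability $1-n^{-c'}$'') and union-bound over the $O(\log n)$ rounds and $n$ vertices. The paper instead proves the per-round bound only in expectation, $\E[|B_i(v)|\mid |B_{i-1}(v)|]\le \tfrac34|B_{i-1}(v)|$ (splitting on whether $S^{(i)}$ hits $B$), chains this across all $k$ rounds to get $\E[|B_k(v)|]\le (3/4)^k n$, and applies Markov's inequality \emph{once} at the very end. Both routes are valid; your version buys per-round control at the cost of needing batches of size $\Theta(n^{1/3})$ (so $|S_1|=\tilde O(n^{1/3})$ total, which you acknowledge), whereas the paper keeps batches of size $\Theta(n^{1/3}/\log n)$ and $|S_1|=O(n^{1/3})$.

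\textbf{A gap in the concentration step.} The sentence ``each of its elements shrinks $B$ by half in expectation; applying this to the $\Omega(\log n)$ candidates (and picking the one that shrinks $B$ the most) yields a constant-factor shrinkage with probability $1-n^{-c'}$'' is not a complete argument: an expectation bound on each candidate does not directly give a high-probability bound on the best candidate. The missing step is to first convert the expectation into a constant-probability event per candidate: since a uniformly random $r\in B$ eliminates at least $(|B|+1)/2$ elements in expectation and never more than $|B|$, Markov applied to $|B|-e_r$ shows that at least a $1/3$-fraction of $B$ consists of ``good'' eliminators with $e_r\ge |B|/4$. Then one argues that $S_1^{(\ell)}$ hits this good subset with probability $1-n^{-\Omega(1)}$ once $|B|\ge c\,n^{2/3}\log n$ and $|S_1^{(\ell)}|=\Theta(n^{1/3})$. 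With this step supplied, your union-bound argument closes cleanly; without it, the claim as written does not follow.
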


Then, the next step of Phase II is to uniformly sample a set $S_2$ of $O(n^{2/3})$ vertices (\cref{line:samples2}). We then run out-Dijkstra from every $s_2\in S_2$ on the induced subgraph $G[\Btwoout(s_2)]$, and update the girth estimate $g'$ with the found cycles going through $s_2$ (\cref{line:fors2}--\cref{line:dij2}). 

In order to implement the out-Dijkstra on $G[\Btwoout(s_2)]$ at \cref{line:update_R2}, we need the following 
\cref{lem:compute_ball} which states that the set $\Btwoout(s_2)$ (as well as distances $d(s_2,u)$ for all $u\in \Btwoout(s_2)$) can be efficiently computed given the eliminators $\Rout(v)$ due to its special structure. Recall that $d(\cdot,\cdot)$ always denotes distances in the input graph $G$.

\begin{lemma}[Compute $\Btwoout(v)$]\label{lem:compute_ball}
For any vertex $v\in V$, given $\Rout(v)$ of size $O(\log n)$, 
there exists an algorithm running in $\Tilde{O}(\frac{m}{n}\cdot |\Btwoout(v)|)$ time that computes the set $\Btwoout(v)$, and the distances $d(v,u)$ for all $u\in \Btwoout(v)$. 

The same statement holds if we replace ``$\mathrm{out}$'' by `` $\mathrm{in}$'' and replace $d(v,u)$ by $d(u,v)$.

\end{lemma}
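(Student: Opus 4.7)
The plan is to show that $\Btwoout(v)$ has a natural \emph{prefix-closedness} property under shortest paths, so that running Dijkstra from $v$ with a membership filter (only relaxing out of vertices that pass the $\Btwoout(v)$ test) still correctly computes distances to all of $\Btwoout(v)$. This reduces the lemma to a pruned-Dijkstra time analysis.

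The structural step is: if $u\in \Btwoout(v)$ and $w$ lies on some shortest $v\to u$ path in $G$, then $w\in \Btwoout(v)$. I would prove this by contradiction. Suppose some $r_1\in \Rout(v)$ witnesses $w\notin \Btwoout(v)$, that is
\[ 2d(v,r_1)+d(r_1,w)\le 2d(v,w)+d(w,r_1), \]
while the defining inequality $2d(v,r_1)+d(r_1,u)>2d(v,u)+d(u,r_1)$ continues to hold for $u$. Using $d(v,u)=d(v,w)+d(w,u)$ (since $w$ is on a shortest $v\to u$ path) and the triangle inequality $d(r_1,u)\le d(r_1,w)+d(w,u)$, adding the two inequalities and cancelling common terms yields $d(w,u)+d(u,r_1)<d(w,r_1)$, which contradicts the triangle inequality on the path $w\to u\to r_1$. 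This is the main conceptual step; everything else is bookkeeping.

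Given the closedness claim, the algorithm is Dijkstra from $v$ on $G$ with the following twist: when a vertex $u$ is extracted from the priority queue with its final distance $d(v,u)$, we check the $O(|\Rout(v)|)=O(\log n)$ inequalities of \cref{def:balls}. The required values $d(v,r_1)$, $d(r_1,u)$, $d(u,r_1)$ for each $r_1\in \Rout(v)\subseteq S_1$ are already available from the Phase~I in- and out-Dijkstras from every $s_1\in S_1$. If $u$ passes the check we add it to $\Btwoout(v)$ and relax its outgoing edges; otherwise we discard $u$ and do not relax. By closedness, any shortest $v\to u$ path with $u\in \Btwoout(v)$ stays inside $\Btwoout(v)$, so pruning the relaxations in this way does not affect the computed distances to ball vertices.

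For the running time, only vertices that certify as being in $\Btwoout(v)$ ever have their outgoing edges relaxed, and each such vertex has out-degree at most $O(m/n)$ by the regularization assumption. Hence the number of edge relaxations, the number of heap insertions, and the number of extractions are each at most $O(|\Btwoout(v)|\cdot m/n)$, and each extraction spends $O(\log n)$ additional time on the membership check. Using a standard binary heap, the total running time is $\tilde O(|\Btwoout(v)|\cdot m/n)$, as claimed. The symmetric statement for $\Btwoin(v)$ follows by applying the same procedure on the edge-reversed graph (equivalently, running in-Dijkstra from $v$), with $\Rin(v)$ playing the role of $\Rout(v)$.
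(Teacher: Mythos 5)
Your proposal is correct and takes essentially the same route as the paper: both prove the prefix-closedness of $\Btwoout(v)$ along shortest paths from $v$ (you via contradiction, the paper directly by chaining the same triangle inequalities), and then run Dijkstra from $v$ with a membership filter that only relaxes out of vertices passing the $O(\log n)$ tests, giving $\tilde O(\frac{m}{n}|\Btwoout(v)|)$ time under the degree bound. The paper's only cosmetic difference is that it explicitly states the membership check in terms of the tentative label $D[u]$ rather than $d(v,u)$; your observation that closedness guarantees $D[u]=d(v,u)$ for ball vertices covers the same point.
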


\begin{proof}
 We run a modified out-Dijkstra from $v$ on graph $G$, and let $D[u]$ denote the length of the shortest path  from $v$ to $u$ found by this out-Dijkstra.  The modification is that whenever we pop a vertex $u$ from the heap, we relax the out-neighbors of $u$ only if $u$ satisfies 
 \begin{equation}
     \label{eqn:tempineq}
 2d(v,r_1)+d(r_1,u)> 2D[u]+d(u,r_1), \text{ for all }r_1\in \Rout(v).
 \end{equation}
 Comparing \cref{eqn:tempineq} with the definition of $\Btwoout$ (\cref{def:balls}), the difference is that we use $D[u]$ in place of $d(v,u)$.
Note that the other three terms in \cref{eqn:tempineq} are already computed in Phase I because $r_1\in S_1$.

To show the correctness of the modified out-Dijkstra, the key claim is the following closedness property of $\Btwoout(v)$:
\begin{claim}
    \label{claim:temp}
If $u\in \Btwoout(v)$, then for every vertex $x$ on the shortest path from $v$ to $u$ in $G$, it holds that $x\in \Btwoout(v)$.
\end{claim}
\begin{proof}
    For all $r_1\in \Rout(v)$, we have
    \begin{align*}
        2d(v,r_1)+d(r_1,x) & \ge 2d(v,r_1)+d(r_1,u)-d(x,u)
        \tag{by triangle inequality}\\
        &> 2d(v,u)+d(u,r_1) - d(x,u) \tag{by $u\in\Btwoout(v)$}\\
        &= 2  d(v,x)+2 d(x,u) + d(u,r_1) - d(x,u) \tag{by assumption on $x$}\\
        &\ge 2d(v,x)+ d(x,r_1). \tag{by triangle inequality}
    \end{align*}
    Hence, we have $x\in\Btwoout(v)$ by definition.
\end{proof}
By \cref{claim:temp}, it is clear that
our modified out-Dijkstra visits exactly all the vertices $u\in \Btwoout(v)$, and correctly computes distances $D[u]=d(v,u)$ for all $u\in \Btwoout(v)$.

 Since $|\Rout(v)| = O(\log n)$, checking the condition \cref{eqn:tempineq} for all $r_1\in \Rout(v)$  only takes $O(\log n)$ time per vertex $u\in V$. 
 By our assumption that the degree of every vertex is at most $O(\frac{m}{n})$, it follows that the modified Dijkstra runs in time $\Tilde{O}(\frac{m}{n}\cdot |\Btwoout(v)|)$.
\end{proof}

Hence, we observe the following corollary: 
\begin{cor}
    \label{obs:forloops2}
\cref{line:fors2}--\cref{line:dij2} of \cref{alg:4approx} take total time $\tilde O(mn^{1/3})$.
\end{cor}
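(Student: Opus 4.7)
The plan is essentially a bookkeeping calculation: combine the high-probability size bound on $\Btwoout(\cdot), \Btwoin(\cdot)$ from \cref{lem:Btwo-size} with the per-instance running time of the modified Dijkstra from \cref{lem:compute_ball}, then sum over $s_2 \in S_2$. First I would condition on the high-probability event of \cref{lem:Btwo-size}, which gives $|\Btwoout(s_2)|, |\Btwoin(s_2)| \le \tilde O(n^{2/3})$ for every $s_2 \in S_2$ simultaneously. Since the eliminators $\Rout(s_2), \Rin(s_2)$ have already been computed at \cref{line:compute_R1}, \cref{lem:compute_ball} is directly applicable at \cref{line:update_R2}.

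Next I would plug in the bound: for each $s_2$, running the modified out-Dijkstra on $G[\Btwoout(s_2)]$ costs $\tilde O(\tfrac{m}{n} \cdot |\Btwoout(s_2)|) = \tilde O(m / n^{1/3})$ by \cref{lem:compute_ball}, and the symmetric bound holds for the in-Dijkstra on $G[\Btwoin(s_2)]$. Moreover, \cref{lem:compute_ball} guarantees that these Dijkstras also produce the distances $d(s_2,u)$ and $d(u,s_2)$ for every $u$ in the respective sets, so the update at \cref{line:dij2} reduces to iterating over $\Btwoout(s_2) \cap \Btwoin(s_2)$, costing only $\tilde O(n^{2/3})$ and hence absorbed into the Dijkstra cost. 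Summing over all $s_2 \in S_2$ with $|S_2| = O(n^{2/3})$ yields total time $O(n^{2/3}) \cdot \tilde O(m / n^{1/3}) = \tilde O(m n^{1/3})$, as claimed.

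There is no real obstacle here: each ingredient is supplied by a lemma established earlier in the excerpt, and the only thing to verify is that these ingredients compose correctly. The mild point worth noting is that since \cref{line:dij2} requires both $d(s_2,u)$ and $d(u,s_2)$, we genuinely need both the in- and out-Dijkstra instances, but this is just a factor of two. Under the high-probability event of \cref{lem:Btwo-size} (which may be taken with any desired polynomial success probability by a standard choice of constants), the bound $\tilde O(mn^{1/3})$ holds deterministically on the remainder of this phase.
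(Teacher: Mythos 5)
Your proof is correct and follows the same route as the paper: invoke \cref{lem:compute_ball} for the per-$s_2$ Dijkstra cost, bound $|\Btwoout(s_2)|$ and $|\Btwoin(s_2)|$ by $\tilde O(n^{2/3})$ via \cref{lem:Btwo-size}, and sum over the $O(n^{2/3})$ vertices of $S_2$ while observing that the update at \cref{line:dij2} is dominated. The only cosmetic difference is that you bound the update step by $\tilde O(n^{2/3})$ per $s_2$, while the paper writes $O(m/n)\cdot|\Btwoout(s_2)|$; both yield $\tilde O(mn^{1/3})$ in total.
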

\begin{proof}
By \cref{lem:compute_ball}, the modified out-Dijkstra from all $s_2\in S_2$ takes total time 
\[\tilde O(\frac{m}{n}\sum_{s_2\in S_2}|\Btwoout(s_2)|)\le \tilde O(\frac{m}{n}\cdot |S_2|\cdot n^{2/3})\le \tilde O(mn^{1/3}),\] where we used $|\Btwoout(s_2)|\le \tilde O(n^{2/3})$ from \cref{lem:Btwo-size}.
The update step at \cref{line:dij2} takes $O(m/n)\cdot |\Btwoout(s_2)|$ time for each $s_2\in S_2$, which also sums up to $\tilde O(mn^{1/3})$.
\end{proof}

\paragraph*{Computing eliminators.} Finally, we describe how to compute the eliminators $\Rout(v),\Rin(v)\subseteq S_1$ (\cref{line:compute_R1} of \cref{alg:4approx}). 
This subroutine is basically the same as in \cite{CL21}, but we present it here using our notation for completeness.
See the pseudocode of \textsc{compute-eliminators-1}$(G,S_1)$ in \cref{alg:compute-eliminator}, which takes the uniform vertex sample $S_1\subseteq V$, and returns $\Rout(v)\subseteq S_1$ for all $v$.
The algorithm for computing $\Rin(v)$ is analogous: we simply run \cref{alg:compute-eliminator} on the graph obtained by reversing the edge orientations of $G$, and we omit the detailed descriptions here.

\begin{algorithm}[H]\SetAlgoVlined\DontPrintSemicolon

    \caption{\textsc{compute-eliminators-1}$(G,S_1)$}\label{alg:compute-eliminator}
    \KwIn{
    The input graph $G=(V,E)$, and $S_1 = \{s_1,s_2,\dots,s_{|S_1|}\}\subseteq V$ of size $|S_1|=O(n^{1/3})$ sampled uniformly and independently (with replacement)
    }
    \KwOut{Sets $\Rout(v)\subseteq S_1$ of size $O(\log n)$ for every vertex $v\in V$}
    
    \BlankLine
    
    $T^{(0)}(v), R^{(0)}(v)\gets \varnothing$ for every $v\in V$
    
    \For{$i \in \{ 1,\dots, k\}$ where $k = 10\log n$}{
    $S^{(i)}\gets$ the next $10n^{1/3}/\log n$ samples from $S_1$ \label{line:e1sample}\\

    Run in- and out-Dijkstra from every $s\in S^{(i)}$ on $G$ \label{line:elim1:dij}
    
    \For{$v\in V$}{
    
    $T^{(i)}(v)\gets \{s\in S^{(i)}\mid \forall t\in R^{(i-1)}(v), 2d(v,s) + d(s,t) < 2d(v,t) + d(t,s)\}$  \label{line:elim1:tv}
    
    \If{$T^{(i)}(v)\ne \varnothing$ \label{line:addbegin}}{
    
    $t\gets $ a random vertex $t\in T^{(i)}(v)$
    
    $R^{(i)}(v)\gets R^{(i-1)}(v)\cup \{t\}$
    
    }
    \Else {
    
    $R^{(i)}(v)\gets R^{(i-1)}(v)$
\label{line:addend}
    
    }
    }
    }
     \Return $\Rout(v)\gets R^{(k)}(v)$ for each $v\in V$ 
\end{algorithm}

\cref{alg:compute-eliminator} runs in $k=10\log n$ iterations. In each iteration, it takes $10n^{1/3}/\log n$ fresh vertex samples (from $S_1$), and runs Dijkstra from them on $G$. Then, based on the obtained distance information, it possibly adds one sampled vertex $t$ to each $\Rout(v)$.
By inspecting  \cref{alg:compute-eliminator}, one immediately observes the following properties. 

\begin{observation}
\label{lem:compute-eliminators}
 \cref{alg:compute-eliminator} runs in time $\Tilde{O}(mn^{1/3})$, and outputs sets $\Rout(v)\subseteq S_1$ for all $v\in V$ of size $|\Rout(v)| = O(\log n)$.
\end{observation}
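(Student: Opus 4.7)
The claim splits into two independent parts, both of which follow directly from inspecting the code of \cref{alg:compute-eliminator}, so my plan is simply to verify each one by bookkeeping.

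\textbf{Size bound.} First I would show $|\Rout(v)| = O(\log n)$. The set $R^{(i)}(v)$ is updated once per iteration of the outer loop, and in lines \ref{line:addbegin}--\ref{line:addend} at most one element is added to $R^{(i-1)}(v)$ (namely a single $t\in T^{(i)}(v)$ when that set is nonempty). Hence a trivial induction on $i$ gives $|R^{(i)}(v)| \le i$, and the algorithm returns $\Rout(v) = R^{(k)}(v)$ with $k=10\log n$. This yields $|\Rout(v)|\le 10\log n = O(\log n)$ for every $v\in V$.

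\textbf{Running time.} The total work is the sum of (i) the Dijkstra calls at \cref{line:elim1:dij}, and (ii) the computation of the sets $T^{(i)}(v)$ at \cref{line:elim1:tv} (the update in lines \ref{line:addbegin}--\ref{line:addend} is dominated by it). For (i), across all $k = 10\log n$ iterations the algorithm runs a total of $k \cdot |S^{(i)}| = 10\log n \cdot (10n^{1/3}/\log n) = O(n^{1/3})$ instances of in-/out-Dijkstra on $G$, each taking $O(m+n\log n)$ time, contributing $\tilde O(mn^{1/3})$ in total. For (ii), for a fixed $v$ and iteration $i$, checking the inequality $2d(v,s)+d(s,t) < 2d(v,t)+d(t,s)$ over all pairs $(s,t)\in S^{(i)}\times R^{(i-1)}(v)$ takes $O(|S^{(i)}|\cdot |R^{(i-1)}(v)|) = O((n^{1/3}/\log n)\cdot \log n) = O(n^{1/3})$ time, using the bound from the first part and the fact that every distance involved has already been computed in the Dijkstra step of this or an earlier iteration. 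Summing over all $v\in V$ and all $k$ iterations gives $O(n\cdot n^{1/3}\cdot \log n) = \tilde O(n^{4/3})$, which is absorbed into $\tilde O(mn^{1/3})$ since $m\ge n$ (as $G$ is strongly connected).

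\textbf{Potential obstacle.} There is essentially no difficulty here beyond careful accounting: everything is driven by the two parameters $k = 10\log n$ and $|S^{(i)}| = 10n^{1/3}/\log n$, which were chosen precisely so that $k\cdot|S^{(i)}| = O(n^{1/3})$ controls the Dijkstra cost while $k$ itself controls the size of $\Rout(v)$. The only subtlety worth noting in the write-up is that the $T^{(i)}(v)$ computation reuses distance values $d(v,s), d(s,t), d(v,t), d(t,s)$ from Dijkstras in iteration $i$ (for $s\in S^{(i)}$) and in earlier iterations (for $t\in R^{(i-1)}(v) \subseteq S^{(1)}\cup\cdots\cup S^{(i-1)}$), so no additional shortest-path work is needed inside the inner loop.
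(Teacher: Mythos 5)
Your proof is correct and follows essentially the same accounting as the paper: bound $|\Rout(v)|$ by the number of iterations $k=10\log n$, charge the Dijkstra cost as $k\cdot|S^{(i)}|=O(n^{1/3})$ instances at $O(m+n\log n)$ each, and bound the $T^{(i)}(v)$ computation by $O(|S^{(i)}|\cdot|R^{(i-1)}(v)|)$ per vertex per iteration using the previously computed distances. The only detail the paper adds that you omit is the sanity check that the total $100n^{1/3}$ fresh samples drawn across all iterations fit within $|S_1|$, but this is a bookkeeping remark rather than a substantive difference.
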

\begin{proof}
First note that the total number of vertex samples required at \cref{line:e1sample}  is $|S^{(1)}\uplus \dots \uplus S^{(k)}|= k\cdot 10n^{1/3}/\log n = 100n^{1/3}\le |S_1|$.
In each iteration $1\le i\le k$, the algorithm only adds at most one sampled vertex $t\in S_1$ to the set $R^{(i)}(v)$ for each $v\in V$ (\cref{line:addbegin}--\cref{line:addend}), so each output set $\Rout(v) = R^{(k)}(v) \subseteq S_1$ and has size $|\Rout(v)|\le k \le O(\log n)$. 
    
In each iteration, the Dijkstra instances at \cref{line:elim1:dij}  take time $|S^{(i)}|\cdot O(m+n\log n) \le \Tilde{O}(mn^{1/3})$. Then, to compute $T^{(i)}(v)\subseteq S^{(i)}$ at \cref{line:elim1:tv}
for each $v\in V$, we check for every $s\in S^{(i)}$ whether $s\in T^{(i)}(v)$, by simply going over all $t\in R^{(i-1)}(v)$ and checking the condition
$2d(v,s) + d(s,t) < 2d(v,t) + d(t,s)$. Note that all four terms in this inequality have already been computed by the in- and out-Dijkstras since $s\in S^{(i)}$ and $t\in S^{(1)}\cup \dots \cup S^{(i-1)}$. So $T^{(i)}(v)$ can be computed in time $O(|S^{(i)}|\cdot |R^{(i-1)}(v)|) = O((n^{1/3}/\log n )\cdot \log n) = O(n^{1/3})$ for each $v\in V$.
Thus each iteration runs in time $O(mn^{1/3})$ time and over all $k = O(\log n)$ iterations, \cref{alg:compute-eliminator} runs in total time $\Tilde{O}(mn^{1/3})$.
\end{proof}

Now we prove the key \cref{lem:Btwo-size}, which states that \cref{alg:compute-eliminator} guarantees $\Btwoout(v)$ and $\Btwoin(v)$ to have small size with high probability.
\begin{proof}[Proof of \cref{lem:Btwo-size}]
The proof more or less follows from Section 6 in \cite{CL21}. For purpose of the proof, we define the sets 
\[B_i(v) = \{u\in V \mid 2d(v,u)+d(u,r) < 2d(v,r) + d(r,u) \, \forall r\in R^{(i)}(v)\}.\]
Then note that by definition $\Btwoout(v) = \{u\in V \mid 2d(v,u)+d(u,r) < 2d(v,r) + d(r,u) \,\forall r\in \Rout(v) \} = B_k(v)$. We want to show that 
\[\Pr\lbr{|B_k(v)| > n^{2/3}\log n} \le \frac{1}{n^2}.\]
We first show that if $|B_i(v)| > n^{2/3}\log n$, then
\[\E\lbr{|B_i(v)| \mmid |B_{i-1}(v)|} \le \frac{3}{4}|B_{i-1}(v)|.\]
By symmetry \footnote{For more details, refer to the proof of Lemma 3.3 in \cite{CL21}} of the condition $2d(v,u)+d(u,s) < 2d(v,s) + d(s,u)$ with respect to $u\in B_{i-1}(v)$ and $s \in S^{(i)}\cap B_i(v)$, for any pair of vertices $u,u'\in B_{i-1}(v)$, either $u$ can eliminate $u'$ or $u'$ can eliminate $u$. Thus given a random $s\in S^{(i)}\cap B_i(v)$, on expectation $s$ can eliminate half of the vertices in $B_{i-1}(v)$. So conditioned on the event that $S^{(i)}\cap B_i(v)\ne \varnothing$, we have the expected size of $B_i(v)$ is at most half the size of $B_{i-1}(v)$. Specifically we have 
\[\E\lbr{|B_i(v)| \mmid S^{(i)}\cap B_i(v)\ne \varnothing}\le \frac{1}{2}|B_{i-1}(v)|.\]

Now since $|S^{(i)}| = 10n^{1/3}/\log n$ is a uniform random sample, we can compute $\Pr[S^{(i)}\cap B_i(v)= \varnothing]$ as 
\begin{align*}
    \Pr\lbr{S^{(i)}\cap B_i(v)= \varnothing} & = \lpr{1 - \frac{|B_i(v)}{n}}^{10n^{1/3}/\log n} 
    \approx \exp \lpr{ - \frac{|B_i(v)|\cdot 10n^{1/3}}{n\log n}} 
    \\ & \le \lpr{\frac{1}{4}}^{\frac{|B_i(v)|}{n^{2/3}\log n}} \le \frac{1}{4}.
\end{align*}
Thus we have
\begin{align*}
    \E\Big[|B_{i}(v)| \mmid |B_{i-1}(v)|\Big] &= \E\lbr{|B_{i}(v)|\mmid |B_{i-1}(v)|, S^{(i)}\cap B_i(v)\ne \varnothing}\cdot \Pr\lbr{S^{(i)}\cap B_i(v)\ne \varnothing} \\
    &\quad + \E\lbr{|B_{i}(v)|\mmid |B_{i-1}(v)|, S^{(i)}\cap B_i(v)= \varnothing}\cdot \Pr\lbr{S^{(i)}\cap B_i(v)= \varnothing} \\
    &\le \frac{1}{2}|B_{i-1}(v)| + \frac{1}{4}|B_{i-1}(v)| = \frac{3}{4}|B_{i-1}(v)|
\end{align*}
as desired.

Now we can easily finish the proof by applying Markov's inequality.
\[\Pr\lbr{|B_k(v)| > n^{2/3}\log n} \le \frac{\E[|B_k(v)]}{n^{2/3}\log n} \le \frac{\lpr{\frac{3}{4}}^k n}{(n^{2/3}\log n)}\le \lpr{\frac{3}{4}}^k n^{1/3} \le \frac{1}{n^2}.\]
\end{proof}

\begin{proposition}
    \label{prop:phase2runtime}
Phase II of \cref{alg:4approx} runs in $\tilde O(mn^{1/3})$ total time.
\end{proposition}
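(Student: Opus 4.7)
The plan is simply to aggregate the running-time estimates that have already been established for each step of Phase II. Phase II of \cref{alg:4approx} consists of three parts: (i) computing the eliminators $\Rout(v), \Rin(v)$ for all $v \in V$ at \cref{line:compute_R1}, (ii) drawing the random sample $S_2$ at \cref{line:samples2}, and (iii) the main \textbf{for} loop over $s_2 \in S_2$ at \cref{line:fors2}--\cref{line:dij2}.

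For step (i), \cref{alg:compute-eliminator} computes $\Rout(v)$ for all $v \in V$ in time $\tilde O(mn^{1/3})$ by \cref{lem:compute-eliminators}, and $\Rin(v)$ is computed symmetrically on the reverse graph in the same time. For step (ii), drawing a uniform random sample of size $|S_2| = O(n^{2/3})$ takes $O(n^{2/3})$ time, which is negligible. For step (iii), \cref{obs:forloops2} already gives the bound of $\tilde O(mn^{1/3})$ for the total time spent over all iterations of the \textbf{for} loop, using \cref{lem:compute_ball} to run the modified Dijkstra on $G[\Btwoout(s_2)]$ in time $\tilde O(\tfrac{m}{n} \cdot |\Btwoout(s_2)|)$ and the size bound $|\Btwoout(s_2)| \le \tilde O(n^{2/3})$ from \cref{lem:Btwo-size}.

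There is no real obstacle here: the proposition is a direct accounting of the three components. I would simply write a one-paragraph proof citing \cref{lem:compute-eliminators} for (i) and \cref{obs:forloops2} for (iii), noting that (ii) is subsumed, and conclude that the sum of these contributions is $\tilde O(mn^{1/3})$.
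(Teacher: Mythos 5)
Your proposal is correct and takes essentially the same approach as the paper, which simply cites \cref{lem:compute-eliminators} and \cref{obs:forloops2} and concludes immediately. Your write-up is slightly more explicit about the negligible cost of sampling $S_2$ and the symmetric computation of $\Rin(v)$, but the substance is identical.
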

\begin{proof}
    Follows from \cref{lem:compute-eliminators} and \cref{obs:forloops2}.
\end{proof}

\subsection{New lemmas for $4$-approximation}
\label{subsec:4-approx-lemmas}
In this section we describe our new structural lemmas that are useful for $4$-approximation.

We start with the following \cref{lem:twolayer}, which naturally extends the 2-approximation lemma (\cref{lem:2-approx}) for $\Btwoin(v)$ from one layer to two layers by exploiting the second sample set $S_2$. 
\begin{lemma}
    \label{lem:twolayer}
   Let $u,v\in V$ $(u\neq v)$  and $r_2\in S_2$.  Suppose
   \[ 2d(r_2,v) + d(u,r_2) \le 2d(u,v) + d(r_2,u) .
    \]
    Then, the girth estimate $g'$ obtained by the end of Phase II of \cref{alg:4approx} satisfies $g'\le 4d(u\lr v)$.
\end{lemma}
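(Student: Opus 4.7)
The plan is to reduce the statement to the key observation \cref{lem:observation} combined with \cref{cor:s22aprox}. The hypothesis has exactly the same shape as the premise of \cref{lem:observation}, but with the roles of ``in'' and ``out'' distances swapped. I would therefore apply \cref{lem:observation} to the reverse graph $G^R$ of $G$ (in which $d^R(a,b) = d(b,a)$ and roundtrip distances are preserved). Instantiating that lemma in $G^R$ with the triple $(v,r,u) \mapsto (v,r_2,u)$, the premise
\[ 2 d^R(v, r_2) + d^R(r_2, u) \le 2 d^R(v, u) + d^R(u, r_2) \]
in $G^R$ translates back to $2 d(r_2, v) + d(u, r_2) \le 2 d(u, v) + d(r_2, u)$ in $G$, which matches the hypothesis exactly. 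The conclusion of \cref{lem:observation} then yields the one-sided cycle bound
\[ d(u \lr r_2) \le 2 d(u \lr v). \]

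With this bound in hand, I would next observe that there is a cycle through $r_2$ in $G$ of length at most $2 d(u \lr v)$. When $u \ne r_2$, the cycle $r_2 \leadsto u \leadsto r_2$ has length $d(u \lr r_2) \le 2 d(u \lr v)$; in the degenerate case $u = r_2$, the cycle $r_2 \leadsto v \leadsto r_2$ already has length $d(u \lr v) \le 2 d(u \lr v)$, so the same bound holds. Since $r_2 \in S_2$, \cref{cor:s22aprox} applied with $s_2 := r_2$ guarantees that the girth estimate $g'$ at the end of Phase II satisfies $g' \le 2 |C|$, where $C$ is the shortest cycle through $r_2$ in $G$. Chaining these inequalities gives $g' \le 2 |C| \le 4 d(u \lr v)$, as claimed.

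The only delicate step is really a notational one: carefully matching the hypothesis to the key observation after reversing edge orientations, which decouples the $u$--$r_2$ roundtrip estimate from the $u$--$v$ one. Once the one-sided bound $d(u \lr r_2) \le 2 d(u \lr v)$ is established, the conclusion is immediate from \cref{cor:s22aprox}, which already encapsulates the Phase~I/II elimination machinery needed to $2$-approximate any cycle through a vertex of $S_2$. I do not anticipate further technical obstacles.
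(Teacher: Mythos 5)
Your proof is correct, and it takes a genuinely cleaner route than the paper's. Both arguments begin identically: applying \cref{lem:observation} with reversed edge orientations to the triple $(v,r_2,u)$ to obtain $d(u\lr r_2)\le 2d(u\lr v)$. At that point, however, the paper re-derives the contents of \cref{cor:s22aprox} from scratch inside the proof of \cref{lem:twolayer} --- it case-analyzes on whether $u\in\Btwoout(r_2)$ and $u\in\Btwoin(r_2)$, invoking \cref{lem:2-approx} in the first two subcases to find an $r_1\in S_1$ with $d(r_1\lr u)\le 4d(u\lr v)$ (handled by Phase I), and \cref{line:dij2} in the third subcase (handled by Phase II), with an additional WLOG remark about $u\notin S_1$ and a separate appeal to \cref{cor:s22aprox} only for the degenerate subcase $u=r_2$. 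You instead observe that once $d(u\lr r_2)\le 2d(u\lr v)$ is established, the shortest cycle through $r_2\in S_2$ has length at most $2d(u\lr v)$ (with the $u=r_2$ case handled separately via the cycle through $v$, which is fine since $v\neq u=r_2$), and then apply \cref{cor:s22aprox} as a black box for $s_2:=r_2$. This is more modular: \cref{cor:s22aprox} already encapsulates exactly the Phase I/II case analysis that the paper redoes inline, so your argument is shorter and avoids duplicating that reasoning.
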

\begin{proof}
    Apply \cref{lem:observation} (\emph{with edge direction reversed}) to $u,v,r_2$, and obtain  
    \[ d(r_2 \lr u)\le 2d(u\lr v).\]
    If $u\notin \Btwoout(r_2)$, then by \cref{lem:2-approx} there exists $r_1\in \Rout(r_2)\subseteq S_1$ such that \[d(r_1\lr u) \le 2 d(u\lr r_2) \le 4d(u\lr v).\] This implies $g'\le 4d(u\lr v)$ due to the update at \cref{line:update_R1} in Phase I of \cref{alg:4approx} for $r_1\in   S_1$.\footnote{This argument requires $r_1\neq u$. This can be ensured by assuming $u\notin S_1$ without loss of generality: if $u\in S_1$, then Phase I of \cref{alg:4approx} will update $g'$ using $d(u\lr v)$.}

    Similarly, if $u\notin \Btwoin(r_2)$, then we also have $g'\le 4d(u\lr v)$.
    
   It remains to consider the case where $u\in \Btwoin(r_2)\cap \Btwoout(r_2)$. In this case, \cref{line:dij2}  of \cref{alg:4approx} updates $g'$ with $d(r_2 \lr u)\le 2d(u\lr v)$ (here we need to assume $u\neq r_2$; the $u=r_2$ case is already covered by \cref{cor:s22aprox}).
   
   Hence, we always have $g' \le 4d(u \lr v)$.
\end{proof}
In light of \cref{lem:twolayer}, a natural attempt for a 4-approximation algorithm is to imimate Phase II and focus on for each $v\in V$ the pruned vertex set $\{u\in V: 2d(r_2,v)+d(u,r_2) > 2d(u,v) + d(r_2,u)\text{ for all } r_2\in R(v)\}$ for some suitably defined $R(v)\subseteq S_2$. As mentioned in the technical overview, this attempt would require distance information for all $r_2\in S_2$, which is infeasible to compute efficiently enough due to the large size $|S_2|= O(n^{2/3})$. 
Thus, we need to use more structural lemmas for our algorithm, described as follows.

First, we generalize the key observation (\cref{lem:observation}) of \cite{CL21} to the following \cref{lem:k-approx}. Note that \cref{lem:observation} corresponds to the $k=2$ case of \cref{lem:k-approx}.  See \cref{fig:k-approx-lem} (the same figure as \cref{lem:observation}) for an illustration.
 \begin{lemma}[Generalized key observation]
 \label{lem:k-approx}
For any $k\ge 1$ and vertices $u,v,r$, if 
 \[k\cdot d(v,r) + d(r,u) \le k\cdot d(v,u) + (k-1)\cdot d(u,r) ,\] then \[ d(r\lr u)\le k\cdot d(u\lr v).\]
 \end{lemma}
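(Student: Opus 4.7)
The plan is to mimic the proof of the $k=2$ case (\cref{lem:observation}) but keep careful track of the coefficient $k$. The goal is to show
\[ d(u,r) + d(r,u) \le k\bigl(d(u,v) + d(v,u)\bigr), \]
starting from the hypothesis
\begin{equation}\label{eq:planhyp}
k\cdot d(v,r) + d(r,u) \le k\cdot d(v,u) + (k-1)\cdot d(u,r).
\end{equation}

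The main idea is to substitute the triangle inequality $d(u,r)\le d(u,v)+d(v,r)$ into the right-hand side of \eqref{eq:planhyp} to eliminate the $(k-1)\cdot d(u,r)$ term. This yields
\[ k\cdot d(v,r) + d(r,u) \le k\cdot d(v,u) + (k-1)\bigl(d(u,v)+d(v,r)\bigr), \]
and after moving the $(k-1)\cdot d(v,r)$ term to the left, we obtain the intermediate bound
\[ d(r,u) + d(v,r) \le k\cdot d(v,u) + (k-1)\cdot d(u,v). \]

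Then I would apply the triangle inequality $d(u,r)\le d(u,v)+d(v,r)$ once more: adding $d(u,v)$ to both sides of the intermediate bound and using $d(u,r)\le d(u,v)+d(v,r)$ on the left gives
\[ d(r,u) + d(u,r) \le d(r,u) + d(v,r) + d(u,v) \le k\cdot d(v,u) + k\cdot d(u,v) = k\cdot d(u\lr v), \]
which is the desired conclusion. Since $d(u\lr r) = d(u,r)+d(r,u)$, this completes the proof.

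There is no real obstacle here: the only subtle point is choosing to apply the triangle inequality precisely to $d(u,r)$ (the only asymmetric term with coefficient $k-1$ on the right-hand side of the hypothesis), so that after cancellation the remaining coefficient of $d(v,r)$ on the left is exactly $1$, matching the structure needed to reapply the triangle inequality a second time. For $k=2$ this recovers the proof implicit in \cref{lem:observation}.
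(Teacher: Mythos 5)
Your proof is correct. It is essentially the same argument as the paper's: the paper proves the inequality by applying the triangle inequality $d(u,r)\le d(u,v)+d(v,r)$ once with coefficient $k$ and then plugging in the hypothesis, whereas you split that same triangle inequality into a $(k-1)$-weighted application (to cancel the $(k-1)\cdot d(u,r)$ term) followed by one more $1$-weighted application; the underlying ingredients and the way they combine are identical.
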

 \begin{proof}
 Note that by triangle inequality, we have $d(u,v) \geq d(u,r) - d(v,r),$ so
 \begin{align*}
     k\cdot d(v,u) + k\cdot d(u,v) & \geq k\cdot d(v,u) + k\cdot d(u,r) - k\cdot d(v,r)  \\
     & \geq \big (k\cdot d(v,r) + d(r,u) - (k-1)\cdot d(u,r)\big ) + k\cdot d(u,r) - k\cdot d(v,r)\\
     & = d(r,u)+d(u,r). \qedhere
 \end{align*}
 \end{proof}
\cref{lem:k-approx} inspires the following definition of $\Bfourout(v)$ and a $4$-approximation lemma (\cref{lem:4-approx}), which are analogous to $\Btwoout(v)$ (\cref{def:balls}) and the $2$-approximation lemma (\cref{lem:2-approx}).
 
\begin{definition}[$\Bfourout(v)$]
    \label{def:4balls}
For $v\in V$, given $\Rout (v)\subseteq V$, we define vertex subsets
\[\Bfourout(v) = \{u\in V: 4d(v,r_1)+d(r_1,u)> 4d(v,u)+3d(u,r_1)\text{ for all }r_1\in \Rout (v)\}. \]
\end{definition}

\begin{cor}[4-approximation]
\label{lem:4-approx}
   If $u\notin \Bfourout(v)$, then there exists $r_1\in \Rout(v)$ such that $d(r_1 \lr u) \le 4d(u\lr v)$.
\end{cor}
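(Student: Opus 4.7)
The proof is a direct application of the generalized key observation (\cref{lem:k-approx}) with parameter $k=4$. The plan is to unwind the definition of $\Bfourout(v)$ to extract the relevant inequality, then feed it straight into \cref{lem:k-approx}.

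More concretely, suppose $u \notin \Bfourout(v)$. By \cref{def:4balls}, the defining condition (which requires the strict inequality to hold for every $r_1 \in \Rout(v)$) must fail for at least one witness $r_1 \in \Rout(v)$. That is, there exists $r_1 \in \Rout(v)$ such that
\[
4 \cdot d(v, r_1) + d(r_1, u) \leq 4 \cdot d(v, u) + 3 \cdot d(u, r_1).
\]

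Now I would observe that this is precisely the hypothesis of \cref{lem:k-approx} with $k=4$ applied to the vertices $u, v, r_1$ (the coefficients $k=4$ on $d(v,r)$ and $d(v,u)$, and $k-1 = 3$ on $d(u,r)$, line up exactly). Applying \cref{lem:k-approx} therefore yields $d(r_1 \lr u) \leq 4 \cdot d(u \lr v)$, which is the desired conclusion.

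There is no real obstacle here: the lemma is designed as a direct corollary, essentially just packaging \cref{lem:k-approx} together with \cref{def:4balls} in the same way that \cref{lem:2-approx} packages the $k=2$ case of the key observation together with \cref{def:balls}. The only minor point worth double-checking is that the strict versus non-strict inequality in \cref{def:4balls} matches up correctly with the non-strict hypothesis of \cref{lem:k-approx}, and it does: the negation of a strict inequality is a non-strict one in the reverse direction.
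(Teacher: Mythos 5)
Your proof is correct and is essentially identical to the paper's: negate the defining strict inequality of $\Bfourout(v)$ to obtain a witness $r_1\in \Rout(v)$ with $4d(v,r_1)+d(r_1,u)\le 4d(v,u)+3d(u,r_1)$, then invoke \cref{lem:k-approx} with $k=4$. No differences worth noting.
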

\begin{proof}
    By \cref{def:4balls}, since $u\notin \Bfourout(v)$, there exists $r_1\in \Rout(v)$ such that 
    \[ 4d(v,r_1)+d(r_1,u)\le 4d(v,u)+ 3d(u,r_1).\]
    Then applying \cref{lem:k-approx} with $k=4$ to $u,v,r_1$, we have $d(r_1\lr u ) \le 4d(u\lr v)$.
\end{proof}

We also have the following relationship between $\Bfourout(v)$ and $\Btwoout(v)$.
\begin{lemma}
    \label{lem:containb4}
    For all $v\in V$, $\Bfourout(v) \subseteq \Btwoout(v)$. 
    
    As a consequence,  the algorithm of \cref{lem:compute_ball} for computing $\Btwoout(v)$ can also compute $\Bfourout(v)$ in the same running time.
\end{lemma}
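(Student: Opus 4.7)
The plan is to first establish the set containment by a direct algebraic manipulation of the two defining inequalities, and then observe that the closedness argument underlying \cref{lem:compute_ball} goes through verbatim (with the obvious change of constants) for $\Bfourout(v)$.

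For the containment, I would fix $v\in V$ and $u\in \Bfourout(v)$ and show $u\in \Btwoout(v)$. Pick any $r_1\in \Rout(v)$. By \cref{def:4balls},
\[ 4d(v,r_1) + d(r_1,u) > 4d(v,u) + 3d(u,r_1). \]
Since $d(u,r_1)+d(r_1,u)\ge 0$ (weights are nonnegative), we have $3d(u,r_1) - d(r_1,u) \ge 2(d(u,r_1)-d(r_1,u))$, and combining this with the displayed inequality (after halving) yields $2d(v,r_1)+d(r_1,u) > 2d(v,u)+d(u,r_1)$. Since this holds for every $r_1\in \Rout(v)$, \cref{def:balls} gives $u\in \Btwoout(v)$.

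For the algorithmic claim, I would reuse the modified Dijkstra from \cref{lem:compute_ball}, except that when we pop a vertex $u$ from the heap we relax its out-neighbors only if
\[ 4d(v,r_1) + d(r_1,u) > 4D[u] + 3d(u,r_1) \quad \text{for all } r_1\in \Rout(v), \]
where $D[u]$ is the tentative distance from $v$ to $u$. Correctness hinges on an analog of \cref{claim:temp}: if $u\in \Bfourout(v)$ and $x$ lies on the shortest $v\leadsto u$ path, then $x\in \Bfourout(v)$. This is what I expect to be the main (though still routine) step, and it follows by the same chain of inequalities as in \cref{claim:temp}, using $d(v,u)=d(v,x)+d(x,u)$ together with the triangle inequalities $d(r_1,x)\le d(r_1,u)+d(u,x)$ and $d(x,r_1)\le d(x,u)+d(u,r_1)$, and the fact that the coefficients $4$ and $3$ of $d(x,u)$ produced on the two sides leave a nonnegative surplus $3d(x,u)\ge 0$ on the appropriate side.

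Given this closedness, the modified Dijkstra visits exactly $\Bfourout(v)$, and since each vertex check costs $O(|\Rout(v)|)=O(\log n)$ and every visited vertex has out-degree $O(m/n)$, the total running time is $\tilde O(\tfrac{m}{n}\cdot |\Bfourout(v)|)$. Using the containment $\Bfourout(v)\subseteq \Btwoout(v)$ already proved, this is at most $\tilde O(\tfrac{m}{n}\cdot |\Btwoout(v)|)$, matching the running time stated in \cref{lem:compute_ball}.
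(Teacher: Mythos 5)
Your containment proof is correct and uses essentially the same algebraic rearrangement as the paper; the only cosmetic difference is the final justification — you invoke nonnegativity of distances $d(u,r_1)+d(r_1,u)\ge 0$, whereas the paper invokes the triangle inequality $d(v,r_1)\le d(v,u)+d(u,r_1)$. Both are valid and equally elementary. For the algorithmic consequence, you take a heavier route than necessary: the paper's claim follows immediately from the containment without any new closedness argument, because the algorithm of \cref{lem:compute_ball} already outputs $\Btwoout(v)$ together with $d(v,u)$ for every $u\in \Btwoout(v)$, and since $\Bfourout(v)\subseteq\Btwoout(v)$, one can then determine membership in $\Bfourout(v)$ by directly checking the $O(\log n)$ inequalities of \cref{def:4balls} for each $u\in \Btwoout(v)$ (all needed distances $d(v,r_1)$, $d(r_1,u)$, $d(u,r_1)$ are known from Phase I). Your alternative — a separate modified Dijkstra with a $\Bfourout$ filter and an analog of \cref{claim:temp} — does also work and even yields the potentially smaller bound $\tilde O(\tfrac{m}{n}|\Bfourout(v)|)$, but one of the triangle inequalities you cite goes in the wrong direction: you wrote $d(r_1,x)\le d(r_1,u)+d(u,x)$, which is an upper bound on $d(r_1,x)$, whereas the argument needs the lower bound $d(r_1,x)\ge d(r_1,u)-d(x,u)$ coming from $d(r_1,u)\le d(r_1,x)+d(x,u)$ (as in the first step of \cref{claim:temp}). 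With that correction your closedness sketch goes through, but the post-filtering observation is the cleaner and intended argument.
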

\begin{proof}
  If $u\in \Bfourout(v)$, then by \cref{def:4balls} for   all $r_1\in \Rout(v)$,
  \begin{align*}
  2d(v,r_1)+d(r_1,u) &> 4d(v,u)+3d(u,r_1) - 2d(v,r_1) \\
  &= 2d(v,u)+d(u,r_1) + 2\big (d(v,u)+d(u,r_1)-d(v,r_1)\big ) \\
  & \ge 2d(v,u)+d(u,r_1).
  \end{align*}
  So $u\in \Btwoout(v)$ by \cref{def:balls}.
\end{proof}

Now we state and prove our main novel technical lemma, which is a key ingredient of our 4-approximation algorithm.

\begin{lemma}[4-Approximation Filtering Lemma]\label{lem:filter-lemma}
Consider vertices $r_2,v,u\in V$ such that $v\in \Bfourout(r_2)$ and $u \not\in \Btwoout(r_2)$. Then there exists $r_1 \in \Rout(r_2)$ such that $d(v\lr r_1) \le 4d(v\lr u) $.
\end{lemma}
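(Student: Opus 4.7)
The plan is to show that the very same vertex $r_1^{*}\in \Rout(r_2)$ witnessing $u\notin \Btwoout(r_2)$ already satisfies the desired conclusion, namely $d(v\lr r_1^{*}) < 4 d(v\lr u)$. Since the conclusion is existential, it suffices to exhibit one good candidate, and $r_1^{*}$ is the natural one because the hypothesis supplies concrete quantitative information about it.

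The first step is to write down both hypotheses as inequalities involving this common $r_1^{*}$. By \cref{def:balls}, there is some $r_1^{*}\in \Rout(r_2)$ with
\[
2d(r_2,r_1^{*}) + d(r_1^{*},u) \;\le\; 2d(r_2,u) + d(u,r_1^{*}),
\]
and by \cref{def:4balls} applied to this same $r_1^{*}$, the hypothesis $v\in \Bfourout(r_2)$ yields
\[
4d(r_2,r_1^{*}) + d(r_1^{*},v) \;>\; 4d(r_2,v) + 3d(v,r_1^{*}).
\]
Call these inequalities (A) and (B) respectively. Note that (B) uses only the value at $r_1^{*}$, not the full ``for all $r_1\in\Rout(r_2)$'' in the definition of $\Bfourout$.

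Next I would eliminate all $r_2$-terms. Doubling (A) gives $4d(r_2,r_1^{*}) \le 4d(r_2,u) + 2d(u,r_1^{*}) - 2d(r_1^{*},u)$; substituting this into (B) and rearranging gives $4[d(r_2,u) - d(r_2,v)] > 3d(v,r_1^{*}) - d(r_1^{*},v) + 2d(r_1^{*},u) - 2d(u,r_1^{*})$. Bounding the left-hand side by $4d(v,u)$ via the triangle inequality $d(r_2,u) \le d(r_2,v) + d(v,u)$ then produces the key intermediate inequality
\[
4d(v,u) + d(r_1^{*},v) + 2d(u,r_1^{*}) \;>\; 3d(v,r_1^{*}) + 2d(r_1^{*},u).
\]

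To finish, I would apply two further triangle inequalities routed through $u$: the upper bound $d(u,r_1^{*}) \le d(u,v) + d(v,r_1^{*})$ on the left, and the lower bound $d(r_1^{*},u) \ge d(r_1^{*},v) - d(v,u)$ (rearranged from $d(r_1^{*},v) \le d(r_1^{*},u) + d(u,v)$) on the right. After substituting both bounds into the previous inequality and cancelling, the $d(v,r_1^{*})$ and $d(r_1^{*},v)$ contributions telescope down to $d(v,r_1^{*}) + d(r_1^{*},v)$, and the inequality collapses to $4d(v,u) + 4d(u,v) > d(v,r_1^{*}) + d(r_1^{*},v) = d(v\lr r_1^{*})$, which is exactly the desired bound. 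The main subtlety lies in choosing the correct linear combination: each step introduces asymmetric extra $d(\cdot,\cdot)$ terms that only cancel cleanly because of the precise coefficients $2$ and $4$ built into the definitions of $\Btwoout$ and $\Bfourout$.
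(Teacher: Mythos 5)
Your proof is correct and mirrors the paper's argument step for step: you derive the same two inequalities from $u\notin\Btwoout(r_2)$ and $v\in\Bfourout(r_2)$ at a common witness $r_1^*\in\Rout(r_2)$, take the same linear combination (doubling the first and adding) to cancel $d(r_2,r_1^*)$, eliminate $d(r_2,u)-d(r_2,v)$ with the same triangle inequality, and finish with the same two triangle inequalities through $v$. One minor typo: the lower bound should read $d(r_1^*,u)\ge d(r_1^*,v)-d(u,v)$ rather than $-d(v,u)$, consistent with the parenthetical you state and with your final cancellation, which correctly gives $4d(v\lr u)>d(v\lr r_1^*)$.
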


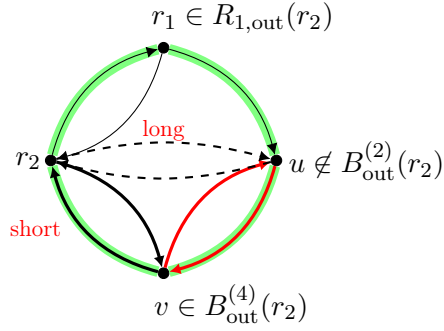
\begin{figure}[H]
\centering
\tikzset{vtx/.style = {circle, draw, fill=black, inner sep=0pt, minimum width=4pt},>={Latex[width=1.5mm,length=1.5mm]}}
    \begin{tikzpicture}
    \node[vtx] (v) at (0,0) {};
    \node[vtx] (u) at (1.5,1.5){};
    \node[vtx] (r2) at (-1.5,1.5){};
    \node[vtx] (r1) at (0,3){};

    \node at (0.9,-0.4) {$v \in \Bfourout(r_2)$};
    \node at (2.7, 1.5) {$u \not\in \Btwoout(r_2)$};
    \node at (-1.8, 1.5) {$r_2$};
    \node at (1.05, 3.4) {$r_1 \in \Rout(r_2)$};

    \draw[green!50, line width=4pt] (v) to[bend left] (r2);
    \draw[green!50, line width=4pt] (r2) to[bend left] (r1);
    \draw[green!50, line width=4pt] (r1) to[bend left] (u);
    \draw[green!50, line width=4pt] (u) to[bend left] (v);

    \draw[->](r1) to[bend left] (r2);
    \draw[->](r2) to[bend left] (r1);

    \draw[->, red, very thick](v) to[bend left] (u);
    \draw[->, red, very thick](u) to[bend left] (v);

    \draw[->, very thick](v) to[bend left] (r2);
    \draw[->, very thick](r2) to[bend left] (v);

    \draw[->,thick, dashed](r2) to[out=15, in=165] (u);
    \draw[->,thick,dashed](u) to[out=-165, in=-15] (r2);
    
    \draw[->](r1) to[bend left] (u);

    \node[scale=0.8,red] at (0,1.9) {long};
    \node[scale=0.8,red] at (-1.7, 0.6) {short};
    \end{tikzpicture}
\caption{
Illustration of the relationship between the vertices involved in \cref{lem:filter-lemma}. The two bold black cycle is relatively short and the dashed cycle is relatively long, the goal is to approximate the red cycle using the cycle highlighted green. As labeled, $v\in \Bfourout(r_2)$ meaning that $v$ and $r_2$ are in a short cycle, $u\not\in \Btwoout(r_2)$ meaning that $u$ and $r_2$ are in a relatively long cycle. Then we can find some $r_1$ in the set of eliminators for $r_2$ such that the cycle passing through $v$ and $r_1$ (highlighted green) approximate the red cycle passing through $v$ and $u$.
}
\label{fig:filter-lem-intuition-main}
\end{figure}

\begin{proof}
Since   $u \not\in \Btwoout(r_2)$, by \cref{def:balls} there exists $r_1 \in \Rout(r_2)$ such that 
\begin{equation}
    2d(r_2,u) + d(u,r_1) \geq 2d(r_2, r_1) + d(r_1, u). \label{eqn:notin2bfs}
\end{equation}

Since $v\in \Bfourout(r_2)$ and $r_1\in \Rout(r_2)$, by \cref{def:4balls} we have
    \begin{equation}
        4d(r_2, r_1) + d(r_1, v) > 4d(r_2,v) + 3d(v,r_1).  \label{eqn:in4bfs}
    \end{equation}
Adding  \cref{eqn:notin2bfs} multiplied by $2$ with \cref{eqn:in4bfs}, and cancelling $4d(r_2,r_1)$ on both sides, we get 
\[ 4d(r_2,u)+2d(u,r_1)+d(r_1,v) > 2d(r_1,u)+4d(r_2,v)+3d(v,r_1).\]
Combining with $4d(r_2,v) + 4d(v,u) \ge 4d(r_2,u)$ (triangle inequality), this implies 
\[ 4d(v,u)+2d(u,r_1)+d(r_1,v) > 2d(r_1,u)+3d(v,r_1).\]
Adding $4d(u,v)$ to both sides gives
\begin{align*}
    4d(u\lr v)+2d(u,r_1)+d(r_1,v) &> \big (2d(r_1,u)+2d(u,v)\big )+\big (2d(u,v)+2d(v,r_1)\big ) +d(v,r_1)\\
    & \ge 2d(r_1,v) + 2d(u,r_1)+d(v,r_1),
\end{align*} 
which immediately simplifies to
\[
    4d(u\lr v)  > d(r_1,v) +d(v,r_1) = d(v\lr r_1). \qedhere
  \]  
\end{proof}

\subsection{Phase III}\label{subsec:phase3}

Now we are ready to describe Phase III, the most technical part of our \cref{alg:4approx}. It has a similar structure as Phase II: we first compute eliminators $\Rtwoin(v)\subseteq S_2$ of size $|\Rtwoin(v)|=O(\log n)$ for all $v\in V$, and then use these eliminators to define pruned vertex sets $\tilde B'(v)$ (which is a subset of $B'(v)\cup\{v\}$ which we will define shortly) to search for short cycles.
In light of the 4-approximation filtering lemma (\cref{lem:filter-lemma}), we will ensure the eliminators satisfy the following property (it will be later shown in \cref{lem:compute-eliminators-2}):
\begin{equation}
    \label{eqn:b4outproperty}
\text{For every $v\in V$ and $r_2 \in \Rtwoin(v)$, we have $v\in \Bfourout(r_2)$.}
\end{equation}
Again, we defer the algorithm for computing the eliminators $\Rtwoin(v)$ to the end of this subsection.

We first make the following technical definition of  pruned vertex sets $B'(v)$, which is directly motivated by the structural lemmas from \cref{subsec:4-approx-lemmas}.
\begin{definition}[$B'(v)$]
\label{def:bprime}
For $v\in V$, let $B'(v)$ denote the set of vertices $s\in V$ that satisfy all the following conditions:
\begin{enumerate}
    \item $v\in \Bfourout(s)$, and
    \label{item:cond1}
    \item  $s\in \Btwoout(r_2)$ for all  $r_2\in \Rtwoin(v)$, and 
    \label{item:cond2}
    \item $2d(s,v) + d(r_2,s)   < 2d(r_2,v) + \underline{d}(s,r_2)$ for all $r_2\in \Rtwoin(v)$. (where $\underline{d}$ is defined in \cref{lem:defestimator})
    \label{item:cond3}
\end{enumerate}
\end{definition}
In this definition, condition~\ref{item:cond1} is motivated by the 4-approximation lemma  (\cref{lem:4-approx}), condition~\ref{item:cond2} is motivated by our 4-approximation filtering lemma (\cref{lem:filter-lemma}) and \cref{eqn:b4outproperty},  and condition~\ref{item:cond3} is motivated by \cref{lem:twolayer}. 
For technical reason, condition~\ref{item:cond3} involves a certain distance underestimate that is easier to compute, defined as follows (readers are encouraged to think of the underestimate as the original distance, and skip this definition at first read): 

\begin{lemma}[Under-estimate of $d(u,r_2)$]
\label{lem:defestimator}
For all $u\in V$ and $r_2\in V$, define $\underline{d}(u,r_2)$ as follows:
\begin{itemize}
    \item \textbf{Case $u\in \Btwoin(r_2)$:} 
    
    Let $\underline{d}(u,r_2):= d(u,r_2)$.
    
    \item \textbf{Case $u\notin \Btwoin(r_2)$:} 

Let\begin{equation}
\underline{d}(u,r_2):=  \frac{1}{2}\min_{r_1\in \Rin(r_2)}(2d(r_1,r_2) + d(u,r_1)-d(r_1,u)) .
\label{eqn:estimator}
\end{equation}
\end{itemize}
Then, $\underline{d}(u,r_2) \le d(u,r_2)$ holds.

\end{lemma}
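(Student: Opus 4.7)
The plan is to split into the two cases defined in the statement. The first case is trivial: if $u\in \Btwoin(r_2)$, then $\underline{d}(u,r_2)$ is defined to equal $d(u,r_2)$, so the inequality holds with equality.

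The entire content lies in the second case, $u\notin \Btwoin(r_2)$. Here I would unfold \cref{def:balls}: the failure of $u\in \Btwoin(r_2)$ means that \emph{there exists} some $r_1\in \Rin(r_2)$ for which the defining strict inequality fails, i.e.
\[2d(r_1,v\!\!=\!\!r_2)+d(u,r_1)\le 2d(u,r_2)+d(r_1,u),\]
applied with the roles $v\gets r_2$, using the ``in'' version of the definition. Rewriting this as
\[2d(r_1,r_2)+d(u,r_1)-d(r_1,u)\le 2d(u,r_2),\]
I would divide both sides by $2$ and observe that the left-hand side upper-bounds $\underline{d}(u,r_2)$ since $\underline{d}(u,r_2)$ is defined as the minimum of the same expression over \emph{all} $r_1\in \Rin(r_2)$. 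That gives $\underline{d}(u,r_2)\le d(u,r_2)$, as required.

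There is essentially no obstacle here: the entire lemma is a direct application of the definition of $\Btwoin$ together with the definition of $\underline{d}$. The only subtlety worth flagging is that the definition of $\Btwoin$ in \cref{def:balls} uses strict inequality, while the negation gives a non-strict inequality in the correct direction, which is exactly what is needed. Since $\underline{d}$ is defined as a minimum, and we have exhibited at least one $r_1\in \Rin(r_2)$ witnessing an upper bound of $d(u,r_2)$, the claim follows immediately.
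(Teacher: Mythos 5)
Your proof is correct and follows essentially the same route as the paper: negate the membership condition for $\Btwoin(r_2)$ to extract a witnessing $r_1\in \Rin(r_2)$, rearrange, and use that $\underline{d}(u,r_2)$ is the minimum over $\Rin(r_2)$. The subtlety you flag about strict versus non-strict inequality is handled identically in the paper.
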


\begin{proof}
    In order to prove $\underline{d}(u,r_2) \le d(u,r_2)$, it suffices to focus on the second case, $u\notin \Btwoin(r_2)$. 
By definition of $\Btwoin(r_2)$ (\cref{def:balls}), there exists $r_1\in \Rin(r_2)$ such that 
\[2d(r_1,r_2)+d(u,r_1)\le 2d(u,r_2)+ d(r_1,u).\]
This immediately implies $\underline{d}(u,r_2)$ as defined in \cref{eqn:estimator} satisfies $ 2\underline{d}(u,r_2) \le 2d(u,r_2)$.
\end{proof}

The following key lemma (analogous to \cref{lem:Btwo-size} from Phase II) bounds the size of $B'(v)$. 
\begin{lemma}[size of $B'(v)$]\label{lem:compute-elim-2-ball-size}
    With high probability over the random samples $S_1,S_2 \subseteq V$,
we  have $|B'(v)|\le \Tilde{O}(n^{1/3})$ for all $v\in V$.
\end{lemma}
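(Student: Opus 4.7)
The plan is to mirror the proof of \cref{lem:Btwo-size} by considering an iterative algorithm analogous to \cref{alg:compute-eliminator} that builds up $\Rtwoin(v)$ over $k = \Theta(\log n)$ rounds, each round drawing a fresh subsample $S^{(i)} \subseteq S_2$ of size roughly $n^{2/3}/\log n$ and (when possible) adding one element of $S^{(i)} \cap B_i'(v)$ to the current eliminator set $R^{(i)}(v)$. Here $B_i'(v)$ is the natural intermediate set defined by using $R^{(i)}(v)$ in place of $\Rtwoin(v)$ throughout \cref{def:bprime}; clearly $B_{i+1}'(v) \subseteq B_i'(v)$, and the goal is to show $|B_k'(v)| \le \tilde O(n^{1/3})$.

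The heart of the argument is a symmetry observation for condition~\ref{item:cond3} of \cref{def:bprime}. For any $s,s' \in B_i'(v)$, adding $s'$ to the eliminator set kills $s$ (via condition~\ref{item:cond3} with $r_2 = s'$) whenever
\[ 2d(s,v) + d(s',s) \ge 2d(s',v) + \underline d(s,s'), \]
and symmetrically $s$ kills $s'$ whenever $2d(s',v) + d(s,s') \ge 2d(s,v) + \underline d(s',s)$. Summing the two inequalities and cancelling the $2d(s,v) + 2d(s',v)$ contributions leaves
\[ d(s,s') + d(s',s) \ge \underline d(s,s') + \underline d(s',s), \]
which is immediate from \cref{lem:defestimator}. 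Hence at least one of the two directions of elimination must hold, so (including self-elimination, which holds since $\underline d(s,s)=0$) we recover the familiar ``pair elimination'' property. Conditions~\ref{item:cond1} and~\ref{item:cond2} can only add further ways for $s'$ to kill $s$, so they do not weaken the argument.

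From here, an averaging argument identical in spirit to that in \cref{lem:Btwo-size} shows that a uniformly random $s' \in B_i'(v)$ eliminates in expectation at least $|B_i'(v)|/2$ vertices of $B_i'(v)$. Combining with the hit-probability estimate $\Pr[S^{(i)} \cap B_i'(v) = \varnothing] \le 1/4$ (valid whenever $|B_i'(v)| \ge c\, n^{1/3} \log n$ for a sufficiently large constant $c$, using $|S^{(i)}| = \Theta(n^{2/3}/\log n)$), yields the one-step recursion
\[ \E\bigl[|B_{i+1}'(v)| \bigm| |B_i'(v)|\bigr] \le \tfrac{5}{8}\,|B_i'(v)| \]
whenever the current size exceeds this threshold. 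Iterating over $k = \Theta(\log n)$ rounds drives the expected size below $O(n^{1/3}\log n)$, and a Markov-style estimate analogous to the concluding step of \cref{lem:Btwo-size} gives $\Pr[|B'(v)| > n^{1/3}\log n] \le n^{-3}$; a union bound over $v \in V$ finishes the proof.

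The main obstacle, and the reason the definition of $B'(v)$ is so carefully tuned, is this symmetry step. If condition~\ref{item:cond3} instead used the exact distance $d(s,r_2)$, then establishing the pairwise sum inequality above would be automatic but computing it would require running Dijkstra from every $s_2 \in S_2$ on the whole graph, which costs $\tilde O(mn^{2/3})$ and blows our time budget. Replacing $d(s,r_2)$ by the underestimate $\underline d(s,r_2)$ breaks exact symmetry but retains the weaker ``sum'' symmetry $d + d \ge \underline d + \underline d$ that is all the above averaging argument actually needs, while simultaneously being computable from the distance information already produced in Phases~I and~II. Designing $\underline d$ (via \cref{lem:defestimator}) to be simultaneously efficient, symmetry-preserving, and tight enough for the subsequent cycle-approximation arguments is the delicate technical manoeuvre driving this lemma.
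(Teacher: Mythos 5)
Your proposal is correct and follows essentially the same route as the paper: both proofs reduce to the symmetry-of-elimination argument of \cref{lem:Btwo-size}, with the key step being that for any $s,t$ in the candidate set, at least one of the two directions of the condition~\ref{item:cond3} elimination must hold, which (together with $\underline d \le d$ from \cref{lem:defestimator}) yields the ``pair elimination'' property and hence the halving recursion. Your added remark explaining \emph{why} $\underline d$ is designed to preserve this summed symmetry while remaining computable is a nice piece of motivation that the paper leaves implicit, but the mathematical content is the same; the one minor nit is a slight off-by-one in the indexing (the element added to $R^{(i)}(v)$ is drawn from $S^{(i)} \cap B_{i-1}'(v)$, not $S^{(i)} \cap B_i'(v)$), which does not affect the argument.
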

Intuitively, this is due to the symmetry of the elimination rule (Condition~\ref{item:cond3} in \cref{def:bprime} of $B'(v)$), and because the sample size is $|S_2|=O(n^{2/3})$.
We will prove \cref{lem:compute-elim-2-ball-size} later after describing the algorithm computing eliminators $\Rtwoin(v)$.

Our actual algorithm performs a modified in-Dijkstra from every $v\in V$ on the induced subgraph $G[\tilde B'(v)]$ (\cref{line:finaldij1}), where $\tilde B'(v)$ is a slight variant of $B'(v)$, which we shall define shortly. 
The reason for not using $B'(v)$ is because our modified in-Dijkstra algorithm does not know the true distance $d(s,v)$ needed for checking the condition~\ref{item:cond1} and \ref{item:cond3} in the definition of $B'(v)$.\footnote{Note that we introduced the under-estimate $\underline{d}(s,r_2)$ in condition~\ref{item:cond3} of \cref{def:bprime} for the same reason.} Instead, we use the current distance found by the in-Dijkstra to replace $d(s,v)$. The formal definition is as follows (again, readers are encouraged to skip this definition at first read, and think of $\tilde B'(v)$ as the same as $B'(v)$ for intuition):

\begin{definition}[Modified in-Dijkstra and $\tilde B'(v)$]
\label{def:bprimetilde}
   For $v\in V$, consider the following modified in-Dijkstra algorithm starting from $v$ on graph $G$, where we let $D[u]$ denote the length of the shortest path from $u$ to $v$ found by this in-Dijkstra.
   
   The modification is that whenever we pop a vertex $s\neq v$ from the heap, we relax the in-neighbors of $s$ only if $s$ satisfies all the following three conditions:
\begin{enumerate}
    \item $ 4d(s,r_1) + d(r_1,v) > 4D[s] + 3d(v,r_1) \text{ for all } r_1\in \Rout(s),$ and \label{item:cond11}
    
    \item  $s\in \Btwoout(r_2)$ for all  $r_2\in \Rtwoin(v)$, and 
\label{item:cond22}

    \item $2D[s] + d(r_2,s)   < 2d(r_2,v) + \underline{d}(s,r_2)$ for all $r_2\in \Rtwoin(v)$. 
\label{item:cond33}
\end{enumerate}

   Let $\tilde B'(v)$ denote the  set of  vertices $s$ that are popped out from the heap and satisfy all the three conditions above, and additionally we also let $v\in \tilde B'(v)$. (Note that the source vertex $v$ always relaxes all its in-neighbors in the beginning of in-Dijkstra)
\end{definition}
\begin{observation}
    \label{obs:contain}
    $\tilde B'(v)\subseteq B'(v)\cup \{v\}$ for all $v\in V$.
\end{observation}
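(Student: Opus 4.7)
The plan is to observe that for any $s \in \tilde B'(v) \setminus \{v\}$, the quantity $D[s]$ computed by the modified in-Dijkstra is an upper bound on the true distance $d(s,v)$. This is simply because $D[s]$, whenever it is finite, equals the total weight of some actual path from $s$ to $v$ that was discovered during relaxation (the modified Dijkstra only restricts which vertices relax their in-neighbors; any value it writes into $D[\cdot]$ corresponds to a concrete path). Thus $D[s] \ge d(s,v)$.

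Next I would plug this inequality into each of the three conditions defining $\tilde B'(v)$ and compare them to the conditions defining $B'(v)$. For condition~\ref{item:cond11}, replacing $D[s]$ on the right-hand side by the smaller quantity $d(s,v)$ preserves the strict inequality, so $4d(s,r_1) + d(r_1,v) > 4d(s,v) + 3d(v,r_1)$ holds for all $r_1 \in \Rout(s)$, which is exactly $v \in \Bfourout(s)$, i.e.\ condition~\ref{item:cond1} of \cref{def:bprime}. Condition~\ref{item:cond22} is literally identical to condition~\ref{item:cond2}. For condition~\ref{item:cond33}, again replacing $D[s]$ on the left-hand side by the smaller $d(s,v)$ preserves the strict inequality, yielding $2d(s,v) + d(r_2,s) < 2d(r_2,v) + \underline{d}(s,r_2)$ for all $r_2 \in \Rtwoin(v)$, which is condition~\ref{item:cond3}. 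Hence $s \in B'(v)$. Combined with the fact that $v$ is explicitly included in $\tilde B'(v)$ and also in $B'(v) \cup \{v\}$, this establishes the containment.

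The only step requiring any care is verifying $D[s] \ge d(s,v)$ for vertices that were actually popped from the heap, since the modification might in principle leave some vertices with $D[s] = \infty$. But by definition $\tilde B'(v)$ only collects vertices that were popped, so they have finite $D[s]$ corresponding to a concrete path; the inequality is then immediate from the definition of shortest path. No deeper structural argument is needed, which is consistent with this being stated as an observation rather than a lemma.
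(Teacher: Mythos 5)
Your proof is correct and takes essentially the same approach as the paper's: both observe that the three conditions defining $\tilde B'(v)$ differ from those of $B'(v)$ only by replacing $d(s,v)$ with $D[s]\ge d(s,v)$, which strengthens conditions~\ref{item:cond1} and~\ref{item:cond3} (and leaves condition~\ref{item:cond2} unchanged). Your extra remark that $D[s]$ is finite for popped vertices and equals the length of a concrete discovered path is a slightly more explicit justification of the inequality $D[s]\ge d(s,v)$ than the paper gives, but the argument is the same.
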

\begin{proof}
    Note that the three conditions in \cref{def:bprimetilde} are the same as the three conditions in \cref{def:bprime} except that the terms $d(s,v)$ in condition~\ref{item:cond1} and \ref{item:cond3} are replaced by $D[s]$. Since the distance $D[s]$ found by the in-Dijkstra from $v$ must be greater than or equal to the true distance $d(s,v)$, we see that both condition~\ref{item:cond1} and \ref{item:cond3} are strengthened. Hence, $\tilde B'(v)\subseteq B'(v)$.
\end{proof}
Phase III of our algorithm (\cref{line:finaldij1}) is implemented by the modified in-Dijkstra described in \cref{def:bprimetilde}. It remains to show  that we can implement it efficiently. In particular, we need to show that checking the three conditions in \cref{def:bprimetilde} is efficient. We first show that the underestimate $\underline{d}(u,r_2)$ from \cref{lem:defestimator} can be computed efficiently.
\begin{lemma}[Compute $\underline{d}(u,r_2)$]
    \label{lem:computeestimate}
    For $r_2\in V$, assume we know $\Btwoin(r_2)$ and $d(x,r_2)$ for all $x\in \Btwoin(r_2)$. Then $\underline{d}(u,r_2)$ can then be computed for any $u\in V$ in $O(\log n)$ time. 
\end{lemma}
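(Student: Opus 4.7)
The plan is to simply evaluate the two cases in the definition of $\underline{d}(u,r_2)$ (from \cref{lem:defestimator}) and check that each case can be handled using only information already available and a constant number of $O(\log n)$-sized operations.

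First I would test whether $u\in \Btwoin(r_2)$. Since the set $\Btwoin(r_2)$ is given as input, this membership query takes $O(\log n)$ time (or $O(1)$ using hashing). If $u\in \Btwoin(r_2)$, then by assumption the value $d(u,r_2)$ is already stored, so we return it in $O(1)$ time, giving $\underline{d}(u,r_2)=d(u,r_2)$.

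Otherwise $u\notin \Btwoin(r_2)$, and we need to evaluate
\[
\underline{d}(u,r_2)= \tfrac{1}{2}\min_{r_1\in \Rin(r_2)}\bigl(2d(r_1,r_2) + d(u,r_1)-d(r_1,u)\bigr).
\]
The set $\Rin(r_2)\subseteq S_1$ was computed in \cref{line:compute_R1} of \cref{alg:4approx} and by \cref{lem:compute-eliminators} has size $O(\log n)$. For each $r_1\in \Rin(r_2)$, the three distances $d(r_1,r_2)$, $d(u,r_1)$, and $d(r_1,u)$ are all available: indeed, Phase~I of \cref{alg:4approx} runs both in- and out-Dijkstra from every vertex in $S_1$ on $G$, so the distances between any vertex of $V$ and any vertex of $S_1$ (in either direction) are explicitly stored. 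Hence the minimum can be taken over $O(\log n)$ terms in $O(\log n)$ total time, and we return half of the resulting value.

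There is no real obstacle in this argument; the content of the lemma is the observation that all the distance information appearing in \cref{eqn:estimator} involves either a vertex of $S_1\supseteq \Rin(r_2)$ or a value we are explicitly given as input, so no additional shortest-path computation is required at query time. The total work is thus $O(\log n)$ per query as claimed.
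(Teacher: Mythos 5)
Your proof is correct and follows essentially the same route as the paper's: test membership in $\Btwoin(r_2)$, return the stored distance in the first case, and in the second case iterate over the $O(\log n)$ vertices of $\Rin(r_2)\subseteq S_1$ whose in/out-distances were precomputed in Phase~I. (Minor note: the paper's own proof contains a typo, writing $\Rtwoin(r_2)$ where $\Rin(r_2)$ is meant; you have it right.)
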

\begin{proof}
According to the definition in \cref{lem:defestimator}, we first check whether $u\in \Btwoin(r_2)$.
 In the first case where $u\in \Btwoin(r_2)$, the answer is $d(u,r_2)$, which we know by assumption. 
In the second case where $u\notin \Btwoin(r_2)$, we need to compute \cref{eqn:estimator} by going over all $O(\log n)$ many $r_1\in \Rtwoin(r_2)$. The expression of \cref{eqn:estimator} only involves distances $d(r_1,\cdot)$ and $d(\cdot,r_1)$ for $r_1\in \Rin(r_2)\subseteq S_1$, which are  already computed in Phase I of \cref{alg:4approx}. So we can compute the answer in $O(\log n)$ time.
\end{proof}

Now we show $\tilde B'(v)$ can be computed efficiently.

\begin{lemma}
\label{lem:indijtime}
The modified in-Dijkstra of \cref{def:bprimetilde} computes $\tilde B'(v)$ in $\tilde O(\frac{m}{n}\cdot |\tilde B'(v)|)$ time.
\end{lemma}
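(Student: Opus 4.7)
The plan is to closely mirror the proof of \cref{lem:compute_ball}, verifying two things: (i) the set of vertices the modified in-Dijkstra ever touches is $O(\frac{m}{n}\cdot |\tilde B'(v)|)$, and (ii) for each such vertex, the three filtering conditions in \cref{def:bprimetilde} can be checked in $\tilde O(1)$ time using only distance information already computed in Phases I and II.

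For step (i), observe that the modified Dijkstra relaxes the in-neighbors of a popped vertex $s$ only when $s\in \tilde B'(v)$ (the source $v$ itself being relaxed once at the start). Hence every vertex that ever enters the heap is either $v$ or an in-neighbor of some vertex in $\tilde B'(v)$. By the regularization assumption, every vertex has at most $O(m/n)$ in-neighbors, so the total number of heap insertions (and hence pops) is $O(\frac{m}{n}\cdot |\tilde B'(v)|)$, and the total number of edge relaxations is of the same order. Standard heap bookkeeping contributes an extra $O(\log n)$ factor per operation, which is absorbed by the $\tilde O$.

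For step (ii), I would analyze the three conditions separately. \textbf{Condition~\ref{item:cond11}} only involves $r_1\in \Rout(s)\subseteq S_1$, of which there are $O(\log n)$; the distances $d(s,r_1)$, $d(r_1,v)$, $d(v,r_1)$ involving $r_1\in S_1$ are all computed in Phase I, and $D[s]$ is the current Dijkstra label, so this check takes $O(\log n)$ time. \textbf{Condition~\ref{item:cond22}} requires $O(\log n)$ membership queries into the precomputed sets $\Btwoout(r_2)$ for $r_2\in \Rtwoin(v)\subseteq S_2$, which we store as hash sets after Phase II so each query is $O(1)$. \textbf{Condition~\ref{item:cond33}} needs $d(r_2,s)$, which is available from Phase II whenever $s\in \Btwoout(r_2)$ (so we check Condition~\ref{item:cond22} first and skip otherwise); it needs $d(r_2,v)$, which by the guarantee $v\in \Bfourout(r_2)$ from \cref{eqn:b4outproperty} combined with \cref{lem:containb4} gives $v\in \Btwoout(r_2)$, so this distance is also available from Phase II; and it needs $\underline d(s,r_2)$, computable in $O(\log n)$ per $r_2$ by \cref{lem:computeestimate}, since Phase II computed $\Btwoin(r_2)$ and $d(\cdot,r_2)$ on $\Btwoin(r_2)$ for all $r_2\in S_2$. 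Hence Condition~\ref{item:cond33} is checked in $O(\log^2 n)$ time per popped vertex.

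Combining (i) and (ii), the per-vertex work is $\tilde O(1)$ and the number of heap operations and edge relaxations is $O(\frac{m}{n}\cdot |\tilde B'(v)|)$, so the total running time is $\tilde O(\frac{m}{n}\cdot |\tilde B'(v)|)$ as claimed. The main subtlety, which I expect to be the only delicate point, is ensuring that all distance lookups requested in the three conditions are actually available; this relies crucially on the property $v\in \Bfourout(r_2)\subseteq \Btwoout(r_2)$ for every $r_2\in \Rtwoin(v)$, which is guaranteed by \cref{eqn:b4outproperty} (to be verified separately when analyzing \textsc{compute-eliminators-2}) together with \cref{lem:containb4}.
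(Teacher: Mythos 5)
Your proposal is correct and takes essentially the same approach as the paper: first reduce to counting heap operations (bounded via the $O(m/n)$ degree assumption and the fact that only vertices in $\tilde B'(v)$ relax in-neighbors), then verify each of the three filtering conditions can be checked in $\tilde O(1)$ time using distance data from Phases I and II, with the key availability guarantees coming from condition~\ref{item:cond22} (giving $d(r_2,s)$) and \cref{eqn:b4outproperty} together with \cref{lem:containb4} (giving $d(r_2,v)$). Your step (i) is slightly more explicit than the paper's proof, which leaves the heap-operation counting implicit by analogy with \cref{lem:compute_ball}, and you correctly note condition~\ref{item:cond11} costs $O(\log n)$ (the paper sloppily writes $O(1)$), but these are cosmetic rather than substantive differences.
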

\begin{proof}
   Suppose the modified in-Dijkstra pops vertex $s$ from the heap.
   \begin{itemize}
    \item     The condition \ref{item:cond11} of \cref{def:bprimetilde} can be checked in $O(1)$ time because we already know $d(r_1,\cdot),d(\cdot ,r_1)$ for all $r_1\in  S_1$ from Phase I of \cref{alg:4approx}.
        \item  The condition \ref{item:cond22} can be checked in $O(|\Rtwoin(v)|)\le O(\log n)$ time since we already computed $\Btwoout(r_2)$ for all $r_2\in S_2$ in Phase II of \cref{alg:4approx}.
            \item 
 For condition \ref{item:cond33}, we need to check $2D[s] + d(r_2,s)   < 2d(r_2,v) + \underline{d}(s,r_2)$ for all $r_2\in \Rtwoin(v)$. \begin{itemize}
    \item Since the check for condition~\ref{item:cond22} has passed, we have $s\in \Btwoout(r_2)$. So we know the value of $d(r_2,s)$ from Phase II of \cref{alg:4approx} (note that $r_2\in S_2$).
    \item Since $r_2\in \Rtwoin(v)$, we have $v\in \Bfourout(r_2)\subseteq \Btwoout(r_2)$ by \cref{eqn:b4outproperty}. So  we know the value of $d(r_2,v)$ from Phase II of \cref{alg:4approx}.
        \item We can compute $\underline{d}(s,r_2)$ in $O(\log n)$ time due to \cref{lem:computeestimate} and Phase II of \cref{alg:4approx}.
 \end{itemize}
   \end{itemize}
   Hence, we can check whether $s\in \tilde B'(v)$ in $O(\log^2 n)$ time.
\end{proof}

Now we are ready to prove that our \cref{alg:4approx} achieves $4$-approximation.

\begin{theorem}[Correctness of \cref{alg:4approx}]
    \label{thm:main4apx}
\cref{alg:4approx} returns $g'$ satisfying $g\le g'\le 4g$, where $g$ is the girth of the input directed graph $G$.
\end{theorem}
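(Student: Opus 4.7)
The lower bound $g' \ge g$ is immediate, since every update to $g'$ in \cref{alg:4approx} uses the length of an actual cycle in $G$. For the upper bound I fix a shortest cycle $C$ of length $g$ with vertices $v = v_0, v_1, \ldots, v_{k-1}$ in cyclic order, and case split on where $V(C)$ meets the random samples. If $V(C) \cap S_1 \ne \varnothing$, Phase~I directly finds the shortest cycle through some $s_1 \in V(C) \cap S_1$ at \cref{line:update_R1}, giving $g' \le g$. If $V(C) \cap S_1 = \varnothing$ but $V(C) \cap S_2 \ne \varnothing$, then \cref{cor:s22aprox} already gives $g' \le 2g$ at the end of Phase~II. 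So the only interesting case is when $V(C)$ avoids $S_1 \cup S_2$ entirely, which I handle via Phase~III.

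In that case I focus on the modified in-Dijkstra launched from $v := v_0$ on $G[\tilde B'(v)]$ in \cref{line:finaldij1}. If all of $v_1, \ldots, v_{k-1}$ happen to lie in $\tilde B'(v)$, then $C$ is fully contained in $G[\tilde B'(v)]$ and the update at \cref{line:finaldij2} certifies $g' \le g$. Otherwise let $i^*$ be the largest index in $\{1,\ldots,k-1\}$ with $v_{i^*} \notin \tilde B'(v)$; then the suffix $v_{i^*+1}, \ldots, v_{k-1}$ lies entirely in $\tilde B'(v)$, and a routine Dijkstra induction walking backward along $C$ shows $D[v_{i^*}] \le L_{i^*}$, where $L_{i^*}$ denotes the length of the arc of $C$ from $v_{i^*}$ back to $v$ (so in particular $d(v, v_{i^*}) \le g - L_{i^*}$). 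At the moment $v_{i^*}$ is popped, at least one of the three conditions in \cref{def:bprimetilde} must fail, and I argue that each failure produces a cycle of length at most $4g$ that is captured by an earlier phase.

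The three sub-cases use the structural lemmas of \cref{subsec:4-approx-lemmas}. If \emph{condition~1} fails with witness $r_1 \in \Rout(v_{i^*}) \subseteq S_1$, then $4d(v_{i^*},r_1) + d(r_1,v) \le 4L_{i^*} + 3d(v,r_1)$, and a one-line rearrangement combined with $d(v,r_1) - d(v_{i^*},r_1) \le d(v,v_{i^*}) \le g - L_{i^*}$ yields $d(r_1 \lr v) \le 4g$, captured in Phase~I. If \emph{condition~2} fails with witness $r_2 \in \Rtwoin(v)$, then $v \in \Bfourout(r_2)$ (by \cref{eqn:b4outproperty}) while $v_{i^*} \notin \Btwoout(r_2)$, so the filtering lemma \cref{lem:filter-lemma} produces $r_1 \in \Rout(r_2) \subseteq S_1$ with $d(v \lr r_1) \le 4 d(v \lr v_{i^*}) \le 4g$, again captured in Phase~I. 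If \emph{condition~3} fails with witness $r_2 \in \Rtwoin(v)$, I split on whether $v_{i^*} \in \Btwoin(r_2)$: in the positive case $\underline{d} = d$, and unfolding the failed inequality with $D[v_{i^*}] \le L_{i^*}$ and $d(r_2,v_{i^*}) - d(r_2,v) \le g - L_{i^*}$ gives $d(r_2 \lr v_{i^*}) \le 2g$, which is routed either through Phase~II (if additionally $v_{i^*} \in \Btwoout(r_2)$) or via \cref{lem:2-approx}; in the negative case I expand $\underline{d}(v_{i^*}, r_2)$ through its minimizing $r_1 \in \Rin(r_2) \subseteq S_1$ from \cref{lem:defestimator}, and the same triangle-inequality bookkeeping directly yields $d(r_1 \lr v_{i^*}) \le 4g$.

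The hardest step is the final sub-case: because $\underline{d}$ may strictly under-estimate $d(v_{i^*}, r_2)$, a failed condition~3 does not by itself bound the roundtrip distance through $r_2$, so I must instead extract the short cycle through the auxiliary $S_1$-vertex that defines $\underline{d}$. Verifying that the definition in \cref{lem:defestimator} is engineered precisely so that this substitution still preserves the $4g$ bound is the most delicate part of the argument, and is the very reason condition~3 is stated with $\underline{d}$ rather than with the true distance $d$.
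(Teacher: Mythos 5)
Your proposal is correct and follows essentially the same route as the paper's proof: identify the first vertex $u$ (your $v_{i^*}$) along the cycle that drops out of $\tilde B'(v)$, argue that its tentative distance $D[v_{i^*}]$ is (at most) the arc length $L_{i^*}$ via the Dijkstra induction along the suffix of $C$ contained in $\tilde B'(v)$, and then case-split on which of the three conditions of \cref{def:bprimetilde} failed, invoking \cref{lem:4-approx}, \cref{lem:filter-lemma}, and \cref{lem:twolayer}/\cref{lem:defestimator} respectively. The only cosmetic deviations are: you peel off the $V(C)\cap S_2\ne\varnothing$ case explicitly via \cref{cor:s22aprox} (the paper does not, and absorbs the $u=r_2$ corner case inside \cref{lem:twolayer}); in Condition~1 you carry out the algebra directly rather than citing \cref{lem:4-approx}; and in the final sub-case (Condition~3 with $v_{i^*}\notin\Btwoin(r_2)$) you bound $d(r_1\lr v_{i^*})\le 4g$ whereas the paper bounds $d(v\lr r_1)\le 4g$ --- both work since $r_1\in S_1$ and neither endpoint is in $S_1$, so Phase~I captures the resulting cycle. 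The quantitative steps you sketch do check out, so this is the paper's argument restated.
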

\begin{proof}
    Let $C$ be the shortest cycle of $G$ with length $g$.
Consider an arbitrary vertex $v$ on $C$.
    If $v\in S_1$, then $C$ is found in Phase I of \cref{alg:4approx} and hence $g'=g$.
 If all vertices on $C$ are contained in $\tilde B'(v)$, then it is eventually found at \cref{line:finaldij2} in \cref{alg:4approx}, and $g'=g$.
Hence, in the following we assume $v\notin S_1$, and there is some vertex $u\in C$ that is not included in $\tilde B'(v)$. We choose $u$ to be the first ancestor of $v$ on the cycle that is not in $\tilde B'(v)$ (in particular, $u$ is a minimizer of $d(u,v)$ among $u\in C \setminus \tilde B'(v)$). Note that $u\neq v$ because $v\in \tilde B'(v)$ by definition.

Let $x\in C$ denote the out-neighbor of $u$ on the cycle $C$. By our definition of $u$, we know the entire shortest path from $x$ to $v$ on $C$ are contained in $\tilde B'(v)$. Then, $x$ must have relaxed its in-neighbor $u$ during the modified in-Dijkstra, which makes $D[u]$ equal to the true distance $d(u,v)$. The fact that $u\notin \tilde B'(v)$ then means some of the three conditions in \cref{def:bprimetilde} is violated for $u$, which then implies $u\notin B'(v)$, as these conditions are equivalent to the three conditions in the definition of $B'(v)$ (\cref{def:bprime}) due to $D[u]=d(u,v)$.

As $u\notin B'(v)$, we  now divide into three cases depending on which condition in \cref{def:bprime} fails for $u$.
\begin{itemize}
    \item 
 Condition~\ref{item:cond1} fails, i.e.,  $v\notin \Bfourout(u)$.
 
  Then by \cref{lem:4-approx}, there exists $r_1\in \Rout(u)$ such that $4d(u\lr v)\ge d(r_1\lr v) \ge g'$ (due to the update at \cref{line:update_R1} for $r_1\in \Rout(u) \subseteq  S_1$ during Phase I of \cref{alg:4approx}; note that $v\neq r_1$ since $v\notin S_1$). 

\item 
 Condition~\ref{item:cond2} fails, i.e., 
$u\notin \Btwoout(r_2)$ for some  $r_2\in \Rtwoin(v)$.

Since $r_2\in \Rtwoin(v)$, by \cref{eqn:b4outproperty} we have $v\in \Bfourout(r_2)$.
Then, by the 4-approximation filtering lemma (\cref{lem:filter-lemma}), there exists $r_1 \in \Rout(r_2)$ such that $4d(v\lr u) \ge d(v\lr r_1) \ge g'$ (due to the update at \cref{line:update_R1} in Phase I of \cref{alg:4approx}).

\item 
Condition~\ref{item:cond3} fails, and
Conditions~\ref{item:cond1},\ref{item:cond2} hold.
This is saying that
there exists $r_2\in \Rtwoin(v)$ such that 
\begin{equation}
    \label{eqn:apxtemp}
    2d(u,v) + d(r_2,u) \ge  2d(r_2,v) + \underline{d}(u,r_2).
\end{equation}
And, we have  $v\in \Bfourout(u)$ (by Condition~\ref{item:cond1}) and $u\in \Btwoout(r_2)$ (by Condition~\ref{item:cond2}).

We further divide into two cases:
\begin{itemize}
    \item 
Case $u\in \Btwoin(r_2)$: 

In this case we have $\underline{d}(u,r_2) = d(u,r_2)$ by \cref{lem:defestimator}. So we apply \cref{lem:twolayer} to \cref{eqn:apxtemp} and obtain  $g'\le 4d(u\lr v) $.

\item 

Case $u\not\in \Btwoin(r_2)$:

 Plugging the definition $\underline{d}(u,r_2) =\min_{r_1\in \Rin(r_2)} \frac{1}{2}(2d(r_1,r_2) + d(u,r_1)-d(r_1,u))$ (from \cref{lem:defestimator}) into \cref{eqn:apxtemp}, we obtain that there exists $r_1\in \Rin(r_2)$ such that 
 \[
 2d(u,v) + d(r_2,u) \ge 2d(r_2,v) + \frac{1}{2}(2d(r_1,r_2) + d(u,r_1)-d(r_1,u)).\]
Multiplying both sides by $2$, and then adding $4d(v,u)-2d(r_2,u)$ to both sides, we get
\begin{align*}
    4d(u,v) + 4d(v,u)&\ge 4d(r_2,v) + 2d(r_1,r_2) + d(u,r_1) - d(r_1,u) + 4d(v,u)- 2d(r_2,u) \\
    &\ge 2d(r_2,v) + 2d(r_1,r_2) + d(u,r_1) - d(r_1,u) + 2d(v,u) \tag{by triangle inequality $d(r_2,u)\le d(r_2,v) + d(v,u)$}\\
    &\ge d(r_2,v) + d(r_1,r_2) + d(u,r_1) + d(v,u) \tag{by triangle inequality $d(r_1,u)\le d(r_1,r_2) + d(r_2,v)+d(v,u)$}\\
    &\ge d(v\lr r_1).
\end{align*}
Hence, $4d(v\lr u)\ge d(v \lr r_1)\ge g'$ (due to the update at \cref{line:update_R1} in Phase I of \cref{alg:4approx}).
\end{itemize}
\end{itemize}
Hence we have established $g' \le 4d(v\lr u)$ in all three cases.
\end{proof}

\paragraph*{Computing eliminators.} Finally, we describe how to compute the eliminators $\Rtwoin(v)\subseteq S_2$ (\cref{line:compute_Rtwo} of \cref{alg:4approx}).
The algorithm has a similar overall structure as the eliminator computation in Phase II (and
 \cite{CL21}) described earlier (\cref{alg:compute-eliminator}). The main idea is to exploit the symmetry in the definition of $B'(v)$ (\cref{def:bprime}), but here it involves more conditions and we need to be slightly more careful to make sure the running time is $\tilde O(mn^{1/3})$.
See the pseudocode of \textsc{compute-eliminators-2}$(G,S_2)$ in \cref{alg:compute-eliminator2}, which takes the uniform vertex sample $S_2\subseteq V$, and returns $\Rtwoin(v)\subseteq S_2$ for all $v$.

\begin{algorithm}[H]\SetAlgoVlined\DontPrintSemicolon

    \caption{\textsc{compute-eliminators-2}$(G,S_2)$}\label{alg:compute-eliminator2}
    
       \KwIn{
    The input graph $G=(V,E)$, and $S_2 = \{s_1,s_2,\dots,s_{|S_2|}\}\subseteq V$ of size $|S_2|=O(n^{2/3})$ sampled uniformly and independently (with replacement)
    }
    \KwOut{The sets $\Rtwoin(v)\subseteq S_2$ of size $O(\log n)$ for every vertex $v\in V$}
    
    \BlankLine

    $T^{(0)}(v), R^{(0)}(v)\gets \varnothing$ for every $v\in V$
    
    \For{$i \in \{ 1,\dots, k\}$ where $k = 10\log n$}{
    $S^{(i)}\gets$ the next  $10n^{2/3}/\log n$ samples from $S_2$. \label{line:e2sample}\\

    \For{$s\in S^{(i)}$} {
     Compute $\Btwoout(s)$, $\Btwoin(s)$ and $\Bfourout(s)$, and distances $d(s,v)$ for all $v\in \Btwoout(s)$, $d(v,s)$ for all $v\in \Btwoin(s)$, using \cref{lem:compute_ball} \label{line:compute_Btwo}
    }
    
    \For{$v\in V$}{
    
   $T^{(i)}(v)\gets \{s\in S^{(i)}\mid v\in \Bfourout(s) \text{ and } \forall t\in R^{(i-1)}(v),  s\in \Btwoout(t) \text{ and } $ $ 2d(s,v) + d(t,s) < 2d(t,v) + \underline{d}(s,t)\}$, where $\underline{d}(\cdot,\cdot)$ is defined in \cref{lem:defestimator}. \label{line:computeT}
    
    \If{$T^{(i)}(v)\ne \varnothing$}{
    
    $t\gets $ a random vertex $t\in T^{(i)}(v)$
    
    $R^{(i)}(v)\gets R^{(i-1)}(v)\cup \{t\}$
    
    }
    \Else {
    
    $R^{(i)}(v)\gets R^{(i-1)}(v)$
    
    }
    }
    }
     \Return $\Rtwoin(v)\gets R^{(k)}(v)$ for each $v\in V$ 
\end{algorithm}

By inspecting \cref{alg:compute-eliminator2}, we observe the following properties (analogous to \cref{lem:compute-eliminators} for \cref{alg:compute-eliminator} from Phase II).

\begin{observation}
   \label{lem:compute-eliminators-2}
   \cref{alg:compute-eliminator2} runs in time $\Tilde{O}(mn^{1/3})$,
   and outputs sets $\Rtwoin(v) \subseteq S_2$ of size $|\Rtwoin(v)| = O(\log n)$ for all $v\in V$. Moreover, for every $v\in V$ and $s \in \Rtwoin(v)$, we have $v\in \Bfourout(s)$.
\end{observation}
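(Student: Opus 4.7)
The plan is to verify the three claims by inspecting \cref{alg:compute-eliminator2}. The two structural claims follow directly. The outer loop runs $k = 10\log n$ times and each iteration adds at most one element to $R^{(i)}(v)$, so $|\Rtwoin(v)| \le k = O(\log n)$. Moreover, any vertex $t$ added to $R^{(i)}(v)$ is drawn from $T^{(i)}(v)$, whose defining condition at \cref{line:computeT} already requires $v \in \Bfourout(t)$; a trivial induction on $i$ then shows every $s \in \Rtwoin(v) = R^{(k)}(v)$ satisfies $v \in \Bfourout(s)$, giving the third assertion.

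The running time bound is where the actual work lies. I would analyze one iteration $i$ and multiply by $k = O(\log n)$. The first substep (\cref{line:compute_Btwo}) computes $\Btwoout(s), \Btwoin(s), \Bfourout(s)$ and the distances from and to $s$ for every $s \in S^{(i)}$. Using \cref{lem:compute_ball} and \cref{lem:containb4}, together with $|\Btwoout(s)| = \tilde O(n^{2/3})$ from \cref{lem:Btwo-size}, this costs $\tilde O(\frac{m}{n}\cdot n^{2/3})$ per sample, so $\tilde O(m n^{1/3})$ over the whole of $|S^{(i)}| = \tilde O(n^{2/3})$.

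For the second substep, rather than iterating over $V \times S^{(i)}$ I would enumerate only the pairs $(s,v)$ with $v \in \Bfourout(s)$. This automatically satisfies the first condition of $T^{(i)}(v)$ and ensures $d(s,v)$ is available from the first substep. For each such pair and each $t \in R^{(i-1)}(v)$, I would check $s \in \Btwoout(t)$ (using a hash-set representation of $\Btwoout(t)$ stored when $t$ was originally sampled) and then $2d(s,v)+d(t,s) < 2d(t,v) + \underline{d}(s,t)$. Here $d(t,v)$ is available because by induction $v \in \Bfourout(t) \subseteq \Btwoout(t)$; $d(t,s)$ is available once the membership $s \in \Btwoout(t)$ has been confirmed; and $\underline{d}(s,t)$ can be evaluated in $O(\log n)$ time by \cref{lem:computeestimate}. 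With $|R^{(i-1)}(v)| = O(\log n)$, each pair costs $O(\log^2 n)$, and the total number of pairs is bounded by $\sum_{s \in S^{(i)}}|\Bfourout(s)| \le |S^{(i)}|\cdot \tilde O(n^{2/3}) = \tilde O(n^{4/3})$. Since strong connectivity gives $m \ge n$, this substep costs $\tilde O(n^{4/3}) \le \tilde O(m n^{1/3})$. Summing both substeps over $k = O(\log n)$ iterations yields the claimed $\tilde O(m n^{1/3})$ bound.

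The main subtle point will be the distance-bookkeeping invariant: whenever a vertex $t$ appears in some $R^{(i-1)}(v)$, we need $\Btwoout(t)$, $\Btwoin(t)$, and the associated from/to distances still to be accessible, and we need the property $v \in \Bfourout(t)$ to guarantee that $d(t,v)$ was stored when $t$ was sampled. Both facts follow from the two structural claims already proved, so no additional argument is needed; the writeup just has to make the order of the checks (membership before distance lookup, $s \in \Btwoout(t)$ before using $d(t,s)$) explicit.
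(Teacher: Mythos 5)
Your proposal matches the paper's proof in all essentials: the $O(\log n)$ size bound and the moreover part follow by inspection of the loop structure and the definition of $T^{(i)}(v)$, the first substep is charged via \cref{lem:compute_ball} and \cref{lem:Btwo-size} at $\tilde O(\frac{m}{n}\cdot n^{2/3})$ per sample, and \cref{line:computeT} is implemented by bucketing $s\in S^{(i)}$ into lists indexed by the $v\in\Bfourout(s)$ so that all needed distances and the $\underline{d}$ values are available in $O(\log n)$ time each, giving $\tilde O(n^{4/3}) \le \tilde O(mn^{1/3})$ per iteration. The only differences are that you make explicit a few bookkeeping points the paper leaves implicit (hash-set membership tests for $\Btwoout(t)$, the $m\ge n$ bound from strong connectivity, and the ordering of checks so that each distance lookup is justified), none of which change the argument.
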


\begin{proof}
    First note that the total number of vertex samples required at \cref{line:e2sample}  is $|S^{(1)}\uplus \dots \uplus S^{(k)}|= k\cdot 10n^{2/3}/\log n = 100n^{2/3}\le |S_2|$.
In each iteration $1\le i\le k$, the algorithm only adds at most one sampled vertex $t\in S_2$ to the set $R^{(i)}(v)$ for each $v\in V$, so each output set $\Rtwoin(v) = R^{(k)}(v) \subseteq S_2$ and has size $|\Rtwoin(v)|\le k \le O(\log n)$. 
    
To prove the moreover part, note that by definition of $T^{(i)}(v)$ at \cref{line:computeT}, $v\in \Bfourout(s)$ holds for all $s\in T^{(i)}(v)$ and thus for all $s\in \Rtwoin(v)$.

It remains to bound the running time. In each iteration, \cref{line:compute_Btwo} takes time $\Tilde{O}(\frac{m}{n}\cdot n^{2/3})$ for each $s\in S^{(i)}$ by \cref{lem:compute_ball} (recall that $|\Btwoout(s)|,|\Btwoin(s)|,|\Bfourout(s)|\le \tilde O(n^{2/3})$ by \cref{lem:Btwo-size} and \cref{lem:containb4}), which sums to $\tilde O(n^{2/3}) \cdot \Tilde{O}(\frac{m}{n}\cdot n^{2/3}) = \tilde O(mn^{1/3})$ in total.

To implement    \cref{line:computeT} efficiently, for any given $v\in V$ we want to quickly go over all $s\in S^{(i)}$ such that $v\in \Bfourout(s)$. 
This can be achieved by a preprocessing stage that iterates over $s\in S^{(i)}$ and inserts $s$ to the $v$-th bucket for every $v\in \Bfourout(s)$, in $\Tilde{O}(n^{2/3}\cdot n^{2/3}) = \Tilde{O}(n^{4/3})$ total time. Then, we show that all four terms in the inequality at \cref{line:computeT} are known from the computation at \cref{line:compute_Btwo} or can be computed efficiently from there: $d(s,v)$ is known because $v\in \Bfourout(s), s\in S^{(i)}$, $d(t,s)$ is known because $s\in \Btwoout(t)$ and $t\in S_2$, $d(t,v)$ is known because $v\in \Bfourout(t)$ and $t\in S_2$, and $\underline{d}(s,t)$ can be computed by \cref{lem:computeestimate} in $O(\log n)$ time because we know $\Btwoin(t)$ and $d(x,t)$ for all $x\in \Btwoin(t)$.

Thus overall $k=O(\log n)$ iterations, \cref{alg:compute-eliminator2} takes $\Tilde{O}(mn^{1/3})$ time.
\end{proof}

Now we prove the key \cref{lem:compute-elim-2-ball-size}, which states that \cref{alg:compute-eliminator2} guarantees $B'(v)$ to have small size with high probability.

\begin{proof}[Proof of \cref{lem:compute-elim-2-ball-size}]

The proof is based on symmetry of elimination, which is similar to
 the earlier proof of \cref{lem:Btwo-size}.

Fix the $v\in V$ from \cref{def:bprime}. 
Due to \cref{item:cond1} of \cref{def:bprime}, here we only need to consider vertices from $C_v:=\{s\in V: v\in \Bfourout(s)\}$.
We make the following definition motivated by \cref{item:cond2} and \cref{item:cond3} of \cref{def:bprime}:
for two vertices $s,t\in C_v$, we say $t$ \emph{eliminates} $s$, if $s\notin \Btwoout(t)$ or $2d(s,v) + d(t,s) \ge 2d(t,v) + \underline{d}(s,t)$.
Then, observe that $B'(v)$ consists of exactly the vertices $s\in C_v$ that are not eliminated by any vertex in $\Rtwoin(v)$.

Now we show that for any $s,t\in C_v$, either $s$ eliminates $t$ or $t$ eliminates $s$. 
Suppose to the contrary that  $s$ does not eliminate $t$, and $t$ does not eliminate $s$. Then we have  inequalities
\[ 2d(s,v)+ d(t,s) < 2d(t,v) + \underline{d}(s,t)\le 2d(t,v) + d(s,t)\]
and  \[ 2d(t,v)+ d(s,t) < 2d(s,v) + \underline{d}(t,s) \le 2d(s,v) + d(t,s),\]
which are contradicting each other.

Having proved this symmetry property, the rest of the arguments is the same as in \cref{lem:Btwo-size}, and we omit it here.
\end{proof}

Finally, we can state the time complexity of the entire \cref{alg:4approx}.
\begin{theorem}[Running time of \cref{alg:4approx}]
    \label{thm:main4runtime}
\cref{alg:4approx} runs in $\Tilde{O}(m n^{1/3})$ time with high probability.
\end{theorem}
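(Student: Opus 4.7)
The plan is to bound the running time of each of the three phases of \cref{alg:4approx} separately, showing that each contributes at most $\tilde O(mn^{1/3})$ time (with high probability for Phase III), and then sum these bounds.

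For Phase I, the claim follows immediately from \cref{obs:phase1runtime}, which already establishes a bound of $\tilde O(mn^{1/3})$: we only run $|S_1| = O(n^{1/3})$ instances of in- and out-Dijkstra on $G$, each taking $O(m + n\log n)$ time. For Phase II, the bound of $\tilde O(mn^{1/3})$ is exactly \cref{prop:phase2runtime}, which itself combines \cref{lem:compute-eliminators} (for the eliminator computation via \cref{alg:compute-eliminator}) with \cref{obs:forloops2} (for the modified Dijkstras from $S_2$ using the efficient $\Btwoout$ computation of \cref{lem:compute_ball}, with size bound given by \cref{lem:Btwo-size}).

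The main step is bounding Phase III. The eliminator computation on \cref{line:compute_Rtwo} via \cref{alg:compute-eliminator2} runs in $\tilde O(mn^{1/3})$ time by \cref{lem:compute-eliminators-2}. For the main loop (\cref{line:finaldij1}--\cref{line:finaldij2}), I will bound the per-vertex cost by appealing to \cref{lem:indijtime}, which says the modified in-Dijkstra from $v$ runs in $\tilde O(\tfrac{m}{n}\cdot |\tilde B'(v)|)$ time. By \cref{obs:contain}, $|\tilde B'(v)| \le |B'(v)| + 1$, and by \cref{lem:compute-elim-2-ball-size}, with high probability $|B'(v)| \le \tilde O(n^{1/3})$ simultaneously for all $v\in V$. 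Summing over all $v\in V$ gives a total of $n \cdot \tilde O\bigl(\tfrac{m}{n}\cdot n^{1/3}\bigr) = \tilde O(mn^{1/3})$. The update step at \cref{line:finaldij2} iterates only over edges $(v,u)\in E$ with $u\in \tilde B'(v)$, which is absorbed in the same bound because the maximum degree in $G$ is $O(m/n)$ (by the regularization from \cref{lem:regularize}).

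No step here presents a genuine obstacle — the hard work has been concentrated into the structural lemmas (especially \cref{lem:compute-elim-2-ball-size} and \cref{lem:indijtime}) and the efficient implementations proved earlier. The only subtlety to flag is that the $\tilde O(n^{1/3})$ bound on $|B'(v)|$ is a high-probability statement over the random samples $S_1, S_2$, so the overall $\tilde O(mn^{1/3})$ running time bound for \cref{alg:4approx} is likewise a high-probability statement, which matches the theorem's claim.
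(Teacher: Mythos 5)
Your proof is correct and follows essentially the same decomposition as the paper's own proof: bound each of the three phases by $\tilde O(mn^{1/3})$ using \cref{obs:phase1runtime}, \cref{prop:phase2runtime}, \cref{lem:compute-eliminators-2}, \cref{lem:indijtime}, \cref{obs:contain}, and \cref{lem:compute-elim-2-ball-size}, then sum. The only (minor) extra detail you add is explicitly accounting for the update step at \cref{line:finaldij2}, which the paper leaves implicit; that observation is correct and harmless.
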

\begin{proof}
The running time of Phase I  is $\tilde O(mn^{1/3})$ by \cref{obs:phase1runtime}.
The running time of Phase II  is $\tilde O(mn^{1/3})$ by \cref{prop:phase2runtime}.

For Phase III, \cref{line:compute_Rtwo} (computing eliminators $\Rtwoin(v)$ for all $v\in V$) takes $\tilde O(mn^{1/3})$ time by \cref{lem:compute-eliminators-2}. Then, the \textbf{for} loop takes
$\tilde O(\frac{m}{n}\cdot |\tilde B'(v)|)$ time for each $v\in V$. Since $\tilde B'(v) \subseteq B'(v)\cup \{v\}$ (by \cref{obs:contain}) and $|B'(v)|\le \tilde O(n^{1/3})$ (by \cref{lem:compute-elim-2-ball-size}), the total time for this loop is $n\cdot \tilde O(\frac{m}{n}\cdot n^{1/3}) = \tilde O(mn^{1/3})$.

Thus, the overall running time of \cref{alg:4approx} is $\tilde O(mn^{1/3})$.
\end{proof}

\section{Conclusion}\label{sec:conclusion}

We conclude with a few open questions:
\begin{enumerate}
\item Can we compute $3$-roundtrip spanner in $\tilde O(n^2)$ time (or even faster)?
    \item Can we compute $(2k-1)$-approximate roundtrip emulators faster on sparse graphs?
    \item For the $O(mn^{1/k})$-time roundtrip spanner (or directed girth) algorithm of \cite{chechikliu20}, can we improve its $O(k\log k)$ approximation ratio to $O(k)$? Can our technique be combined with the divide-and-conquer techniques of \cite{PachockiRSTW18,chechikliu20,DalirrooyfardW20}?
    
    \item Can we show fine-grained lower bounds for the task of computing roundtrip spanners? In particular, can we rule out $\tilde O(m)$-time algorithms for computing $(2k-1)$-roundtrip spanners of sparsity $O(n^{1+1/k})$?
\end{enumerate}

\printbibliography

\end{document}